\newtheorem{corollary}{Corollary}[section]
\newtheorem{proposition}[corollary]{Proposition}
\newtheorem{lemma}[corollary]{Lemma}
\newtheorem{theorem}[corollary]{Theorem}
\newtheorem{remark}[corollary]{Remark}
\newcommand{\numset}[1]{\mathbf{#1}}
	\newcommand{\cc}{\numset{C}}
	\newcommand{\rr}{\numset{R}}
	\newcommand{\zz}{\numset{Z}}
	\newcommand{\nn}{\numset{N}}
	\newcommand{\one}{\bm{1}}
		\newcommand{\Exp}[1]{\mathrm{e}^{#1}}
	\newcommand{\ii}{\mathrm{i}}
	\providecommand*{\diff}%
		{\@ifnextchar^{\DIfF}{\DIfF^{}}}
	\def\DIfF^#1{%
		\mathop{\mathrm{\mathstrut d}}%
			\nolimits^{#1}\gobblespace}
	\def\gobblespace{%
		\futurelet\diffarg\opspace}
	\def\opspace{%
		\let\DiffSpace\!%
		\ifx\diffarg(%
			\let\DiffSpace\relax
		\else
			\ifx\diffarg[%
				\let\DiffSpace\relax
			\else
				\ifx\diffarg\{%
					\let\DiffSpace\relax
				\fi\fi\fi\DiffSpace}
	\renewcommand{\d}{\diff}
	\DeclareMathOperator{\tr}{tr}
	\renewcommand{\H}{\mathcal{H}}
			\newcommand{\cB}{\mathcal{B}}
	\newcommand{\CAR}{\operatorname{CAR}}
	\newcommand{\Ga}[1][]{\Gamma^{-}_{#1}}
	\newcommand{\dG}[1][]{\diff\Gamma_{#1}}
	\newcommand{\slim}{\operatorname{s-lim}\limits}
	\newcommand{\wlim}{\operatorname{w-lim}\limits}
\renewcommand{\one}{\mathbf{1}}
\newcommand{\tot}{_{\textnormal{tot}}}
\newcommand{\En}{_\textnormal{E}}
\newcommand{\Sa}{_\textnormal{S}}
\newcommand{\Bl}{_\textnormal{B}}
\newcommand{\BlSa}{_\textnormal{BS}}
\newcommand{\SaBl}{_\textnormal{SB}}
\newcommand{\EnSa}{_\textnormal{ES}}
\newcommand{\SaEn}{_\textnormal{SE}}
\newcommand{\Hb}{\mathcal{H}\Bl}
\newcommand{\Hs}{\mathcal{H}\Sa}
\newcommand{\Htot}{\mathcal{H}\tot}
\newcommand{\fU}{\mathfrak{U}}
\newcommand{\fY}{\mathfrak{Y}}
\newcommand{\T}{T}
\begin{document}

\title{Fermionic walkers driven out of equilibrium}
\date{}
\author{
	Simon Andr\'eys%
	\textsuperscript{1}
	\and
	Alain Joye%
	\textsuperscript{1}
  \and
  Renaud Raqu\'epas%
  \textsuperscript{1,2}
}

\maketitle
\begin{center}
\small
\begin{tabular}{c c c c}
   1.  Univ.\ Grenoble Alpes
	 	&&& 2. McGill University \\
		CNRS, Institut Fourier
	 	&&& Dept.\ of Mathematics and Statistics \\
	 	F-38\,000 Grenoble
	 	&&& 1005--805 rue Sherbrooke~Ouest \\
	 France
	 	&&& Montr\'eal (Qu\'ebec) ~H3A 0B9, Canada \\
		&&&
\end{tabular}
\end{center}

\begin{abstract}
  We consider a discrete-time non-Hamiltonian dynamics of a quantum system consisting of a finite sample locally coupled to several bi-infinite reservoirs of fermions with a translation symmetry. In this setup, we compute the asymptotic state, mean fluxes of fermions into the different reservoirs, as well as the mean entropy production rate of the dynamics. {Formulas are explicitly expanded to leading order in the strength of the coupling to the reservoirs.}
\end{abstract}

\section{Introduction}

\subsection{Motivation}

The mathematical description of the long time dynamics of many-body quantum systems coupled to several infinite reservoirs, and of the transport properties of non-equilibrium steady states they give rise to, is a long standing problem in quantum statistical mechanics, see {\it e.g.} \cite{AJPc}, \cite{JOPP11}. To achieve a better understanding of those important conceptual issues, many efforts have been devoted to the construction and analysis of models in various contexts or regimes. Following Jak\v{s}i\'{c} and Pillet \cite{JP01,JP02}, the main objectives for these models considered in the framework of open quantum systems are to establish the validity of the laws of thermodynamics, to derive the positivity of the entropy production rate and to analyse its fluctuations. It is desirable too to grasp model dependent salient features of the corresponding non-equilibrium steady states and currents they induce between the reservoirs.
See the following papers for a non exhaustive list of works dedicated to those questions in different contexts and regimes: \cite{Sp78,LS78,DdRM08,JPW14}, \cite{AJPP06,AJPP07,JOPP11} \cite{Ru00,Ru01,AP03,JLP13}, \cite{BJM06,BJM14,HJPR17,HJPR18,BJPP18,An20,BB20}, \cite{MMS07,MMS07b},...
In these works, the quantum dynamics of these systems derives from their Hamiltonians.

The last two decades have seen the emergence of a class of non-Hamiltonian models that proves efficient in modelling the quantum dynamics of complex systems, namely quantum walks.
A quantum walk (QW for short) arises as a unitary operator defined on a Hilbert space with basis elements associated to the vertices of an infinite graph, matrix elements coupling vertices of the graph a finite distance away from each other only. The QW discrete time dynamics implemented by iteration of the unitary operator has finite speed of propagation, and yields a dynamical system easily amenable to numerical investigation. By contrast to the models mentioned above, there is no Hamiltonian with natural physical meaning attached to a QW.
It was demonstrated over the years that QW provide useful approximations in various physical contexts and regimes,  see {\it e.g.} \cite{CC88, K-al09, Sp+13, Z-al10, M-al19, WaMa, TMT20}.
Furthermore, QW play an important role in quantum computing~\cite{AAKV01, KE03, Sa08, Por}, and they are also considered a quantum counterparts of classical random walks \cite{Gu08, Kon, APSS12}; see also the reviews \cite{Ve12,ABJ15}.

Given the versatility of QW and the wide range of physical situations they model and claims regarding different notions of quantum transport~\cite{KAG12,M+20}, it is natural to investigate their collective dynamical behaviour within the framework of open quantum systems when considered as indistinguishable quantum particles (quantum walkers) interacting with reservoirs. The first steps in this direction were performed in the work \cite{HJ17} and its generalisation \cite{Ra20}. They analyse the discrete time dynamics of an ensemble of fermionic QW on a finite sample, exchanging particles with an infinite reservoir of quasifree QW, and establish a form of return to equilibrium of the system. From a different perspective, these efforts can be viewed as an extension to discrete-time dynamics of a program which has mainly been carried out in Hamiltonian continuous-time settings.

Building up on \cite{HJ17, Ra20}, our aim is twofold. First we generalize the framework to the genuinely out of equilibrium situation in which the fermionic QW on the finite sample interact with several different quasifree QW reservoirs. Second, we analyse the onset of a non-equilibrium steady state in the sample and reservoirs, the development of related particle currents between the reservoirs, and establish strict positivity of the entropy production rate, in keeping with the program above. This closely parallels the work~\cite{AJPP07} on a Hamiltonian continuous-time model called the ``electronic black box''.

More precisely, each reservoir consists in noninteracting fermionic QW on a bi-infinite lattice, forced to hop to their left at discrete times. Hence the reservoirs free dynamics is the second quantization of a shift operator~$S$, while the free dynamics on the finite sample is the second quantization of an arbitrary one-particle unitary matrix~$W$. The interaction between the sample and each reservoir is given at the one-particle level by a unitary operator exchanging particles at specific sites of the sample and the reservoir, whose intensity is monitored by some coupling constant $\alpha$. The overall discrete dynamics is defined by  one step of interaction, one step of free evolution, one step of interaction, one step of free evolution and so on. Considering an initial state $\rho(0)$ given by a product of quasifree states in each reservoir defined by a translation invariant symbol $T$ (two-point function), and an arbitrary (even) state $\rho\Sa(0)$ in the sample, we determine the evolved state $\rho(t)$ for all time $t\in \nn$.

Under mild assumptions, we prove that $\rho(t)$ converges as $t\rightarrow\infty$ to a quasifree state, irrespective of the initial state in the sample, which allows us to determine the reduced asymptotic states in the sample and in the reservoirs. We extend the results of \cite{HJ17, Ra20} to our multi-reservoir setup by showing that the reduced asymptotic states in the sample is also a quasifree non-equilibrium state whose symbol~$\Delta^\infty$ is fully parametrized by $T$, $W$ and the coupling terms. Then, we turn to the flux into the different reservoirs and determine the steady state quantum mechanical expectation value of the flux observables, or QW currents.
We establish the validity of the first law of thermodynamics under very general conditions, and describe the conditions on the initial state $\rho(0)$ that induce nontrivial currents between the reservoirs. Assuming $\rho\Sa(0)$ is quasifree as well and considering the entropy production rate $\sigma(t)$ defined in terms the relative entropy between the symbols for the quasifree states at time~0 and~$t$, we prove that the asymptotic entropy production rate $\sigma^+=\lim_{t\rightarrow \infty}\sigma(t)$ exists and we characterize its strict positivity as a function of the initial state~$T$ of the reservoirs, the dynamics~$W$ in the sample and the couplings. Finally, we express the asymptotic entropy production rate~$\sigma^+$ in terms of the asymptotic currents between the reservoirs through the sample.

\subsection{Illustration}

For concreteness, let us illustrate our main results  in the case of an environment composed of two reservoirs. We consider that the Hilbert space of the environment is the fermionic second quantization of the space $\ell^2(\zz)\otimes \cc^2$ with a basis $\{\delta_l : l \in \zz \}$ of $\ell^2(\zz)$ and  a basis~$\{\psi_{\textnormal{L}},\psi_{\textnormal{R}}\}$ for~$\cc^2$. Heuristically
$\ell^2(\zz)\otimes \{\psi_{\textnormal{L}}\}$ supports the one-particle space a reservoir situated to the left of the sample and $\ell^2(\zz)\otimes \{\psi_{\textnormal{R}}\}$  the one-particle space a reservoir situated to the right of the sample. The Hilbert space of the sample is the fermionic second quantization of $\Hs$, a finite-dimensional space, so that the full one-particle space representing the sample and the environment is $\Htot= \ell^2(\zz) \otimes \cc^2 \oplus \Hs$. The free evolution of the sample is defined by a fixed one-particle unitary operator $W$ on $\Hs$, while that of the reservoirs is described by the one-particle shift operator on $\ell^2(\zz)$:
\[
	S \delta_l=\delta_{l-1}.
\]
To make the sample interact with the environment, we fix two orthonormal vectors $\phi_\textnormal{L}$ and $\phi_\textnormal{R}$ of $\Hs$, representing the position of the sample which are in contact respectively with the left and the right reservoir, and we suppose that walkers in the sample which are in the state $\phi_\textnormal{L}$ [resp. $\phi_\textnormal{R}$] can jump to the left reservoir [resp. the right reservoir], at the position indexed by zero in the environment.
For a given coupling strength $\alpha$, we describe the interaction by the one-particle unitary operator
\[\Exp{\ii\alpha ((\delta_0\otimes \psi_\textnormal{L}) \phi_\textnormal{L}^*+(\delta_0\otimes \psi_\textnormal{R}) \phi_\textnormal{R}^*+\mbox{\small h.c.})},
\]
where
\[
	(\delta_0\otimes \psi_\textnormal{L})\phi_\textnormal{L}^* : \eta \otimes\psi\oplus \varphi \mapsto  \langle \phi_\textnormal{L} , \varphi \rangle\,  \delta_0\otimes\psi_\textnormal{L}\oplus 0
\]
for all $ \eta \in \ell ^2(\zz), \psi\in \cc^2, \varphi\in \Hs$, and similarly for the index $R$. Here
``h.c.'' stands for hermitian conjugate, {\it i.e.} adjoint.
Eventually, each step of the overall evolution is represented by the fermionic second quantization of the unitary operator
\[
	\fU=\big( (S\otimes \one)\oplus W \big) \Exp{\ii\alpha ((\delta_0\otimes \psi_\textnormal{L}) \phi_\textnormal{L}^*+(\delta_0\otimes \psi_\textnormal{R}) \phi_\textnormal{R}^*+\mbox{\small h.c.})}~.
\]
Suppose that, at the level of Fock spaces, the left [resp.\ right] reservoir is initially a quasifree state with  translation invariant  symbol that has sufficiently regular Fourier transform~$f_\textnormal{L}$ [resp.~$f_\textnormal{R}$] defined on $[0, 2\pi]$ and the sample is initially in an arbitrary even state. Then, under some generic assumptions on $W$, the total system relaxes to a quasifree state whose zeroth order approximation in~$\alpha$ depends only on $W$, ~$f_\textnormal{L}$ and~$f_\textnormal{R}$ and not on the initial state on the sample. Moreover, a steady current of particles settles across the sample.
Assuming that $W$ has only simple eigenvalues ${\lambda_1}, ..., {\lambda_n}$ with {normalized} eigenvectors $\chi_1, ..., \chi_n$ we can express the current into the right reservoir in the limit $\alpha\rightarrow 0$ as
\[
  J_\textnormal{R} = \alpha^2 \sum_{i=1}^n \frac{|\braket{\chi_i, \phi_\textnormal{R}}|^2 |\braket{\chi_i, \phi_\textnormal{L}}|^2}{|\braket{\chi_i, \phi_\textnormal{R}}|^2+|\braket{\chi_i, \phi_\textnormal{L}}|^2} (f_\textnormal{L}({ -\ii}\log \lambda_i)-f_\textnormal{R}({ -\ii}\log \lambda_i))+O({ \alpha^4}),
\]
while the current $J_\textnormal{L}$ into the right reservoir is such that  $J_\textnormal{L}+J_\textnormal{R}=0$.
If $f_\textnormal{L}\left({\theta}\right)>f_\textnormal{R}\left({\theta}\right)$ for all $\theta\in \rr$ then the current is necessarily directed from the left to the right. However, if the function $f_\textnormal{R}$ and $f_\textnormal{L}$ cannot be compared on the unit circle, then we may choose the sign of the current $J_\textnormal{R}$ by tuning the eigenvalues of $W$.
This last property occurs when considering for example the one-particle free dynamics $W$ of a coined spin-$\tfrac 12$ quantum walk on the sample provided by a cycle with an even number~$n$ of vertices sketched in Figure~\ref{fig:ill}.

\begin{figure}
	\centering
\begingroup%
  \makeatletter%
  \providecommand\color[2][]{%
    \errmessage{(Inkscape) Color is used for the text in Inkscape, but the package 'color.sty' is not loaded}%
    \renewcommand\color[2][]{}%
  }%
  \providecommand\transparent[1]{%
    \errmessage{(Inkscape) Transparency is used (non-zero) for the text in Inkscape, but the package 'transparent.sty' is not loaded}%
    \renewcommand\transparent[1]{}%
  }%
  \providecommand\rotatebox[2]{#2}%
  \newcommand*\fsize{\dimexpr\f@size pt\relax}%
  \newcommand*\lineheight[1]{\fontsize{\fsize}{#1\fsize}\selectfont}%
  \ifx\svgwidth\undefined%
    \setlength{\unitlength}{235.35941034bp}%
    \ifx\svgscale\undefined%
      \relax%
    \else%
      \setlength{\unitlength}{\unitlength * \real{\svgscale}}%
    \fi%
  \else%
    \setlength{\unitlength}{\svgwidth}%
  \fi%
  \global\let\svgwidth\undefined%
  \global\let\svgscale\undefined%
  \makeatother%
  \begin{picture}(1,0.78981616)%
    \lineheight{1}%
    \setlength\tabcolsep{0pt}%
    \put(0,0){\includegraphics[width=\unitlength,page=1]{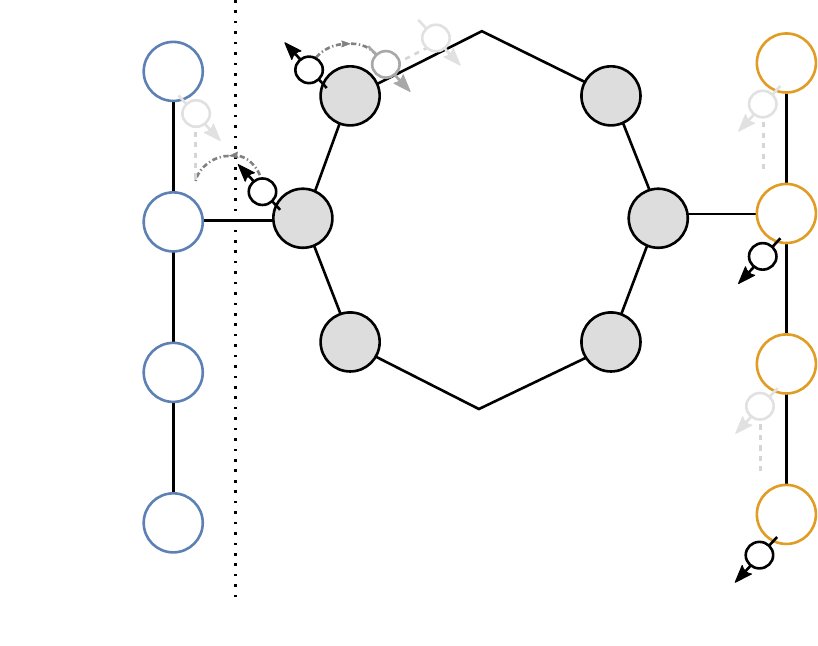}}%
    \put(0.00585741,0.50283931){\color[rgb]{0,0,0}\makebox(0,0)[lt]{\lineheight{0}\smash{\begin{tabular}[t]{l}$l=0$\end{tabular}}}}%
    \put(-0.00121711,0.68697965){\color[rgb]{0,0,0}\makebox(0,0)[lt]{\lineheight{0}\smash{\begin{tabular}[t]{l}$l=-1$\end{tabular}}}}%
    \put(0.01293198,0.31869915){\color[rgb]{0,0,0}\makebox(0,0)[lt]{\lineheight{0}\smash{\begin{tabular}[t]{l}$l=1$\end{tabular}}}}%
    \put(0.01293189,0.13455882){\color[rgb]{0,0,0}\makebox(0,0)[lt]{\lineheight{0}\smash{\begin{tabular}[t]{l}$l=2$\end{tabular}}}}%
    \put(0.5880572,0.50922697){\color[rgb]{0,0,0}\makebox(0,0)[t]{\lineheight{-0.001}\smash{\begin{tabular}[t]{c}$\mathcal{S}$\end{tabular}}}}%
    \put(0,0){\includegraphics[width=\unitlength,page=2]{fig-1.pdf}}%
    \put(0.28784159,0.00921961){\color[rgb]{0,0,0}\makebox(0,0)[t]{\lineheight{0}\smash{\begin{tabular}[t]{c}$J_\mathrm{L}$\end{tabular}}}}%
  \end{picture}%
\endgroup%

	\caption{The setup we are using to illustrate our results: walkers in a sample~$\mathcal{S}$ consisting of a cycle with~$8$ vertices can hop to and from two environments, one on the left and one on the right. Walkers at sites with a positive index~$l$ in the environment cannot have interacted with the sample yet.}
	\label{fig:ill}
\end{figure}

With a basis~$\{x_\nu \otimes e_{\tau} : \nu = 0, 1, \dotsc, n-1; \tau = -1,+1\}$ of~$\Hs = \ell^2(\{0,1,\dotsc,n-1\}) \otimes \cc^2$, an oft-studied model for the one-particle dynamics is given by the unitary
\[
  W := W_1 W_2
\]
where
\[
  W_1 := \sum_{\nu=0}^{n-1} \sum_{\tau = \pm 1} x_{\nu+\tau} \otimes e_{\tau} \braket{x_\nu \otimes e_\tau, \cdot\, }
\]
is a spin-dependent shift and
\[
  W_2 := \sum_{\nu=0}^{n-1} x_\nu x_\nu^* \otimes C_\nu
\]
encodes the rotation of a possibly position-dependent coin. In the special case where
\[
	C_\nu =
		\begin{pmatrix}
				\Exp{\ii \beta} \cos \varphi & \sin\varphi \\ -\sin\varphi & \Exp{-\ii\beta} \cos\varphi
		\end{pmatrix}
\]
for some real parameters $\beta,\varphi \in (0,\tfrac 12 \pi)$ independent of~$\nu$, the spectrum of~$W$ is easily shown to be contained in $\{\Exp{\ii u} : \varphi \leq \pm u \leq \pi-\varphi\}$ and is simple if $\beta \notin (2\pi/n) \zz$.

Before each step of the free walk, spin-up walkers located at sites~$0$ or~$\tfrac 12 n$ of the ring can be exchanged with those of the left or the right reservoirs. That means the interaction term above has
\[
	\phi_\textnormal{L}=x_0\otimes e_{+1}, \ \phi_\textnormal{R}=x_{n/2}\otimes e_{+1}.
\]

Moreover, the eigenvectors of $W$ being  explicitly computable, the current into the right reservoir eventually takes the form
\[
  J_\textnormal{R} = \alpha^2 \sum_{{\lambda_i} \in \operatorname{sp} W} \frac{\sin^2(2\varphi)|\sin\varphi - 1 + {\lambda_i}^2|^2}{4} (f_\textnormal{L}({ -\ii}\log \lambda_i)-f_\textnormal{R}({ -\ii}\log \lambda_i)) + O({ \alpha^4}).
\]

Therefore, if the parameter $\varphi$ is small and $f_\textnormal{L} > f_\textnormal{R}$ holds on open neighbourhoods of $\pi/2$ and $-\pi/2$ while $f_\textnormal{L} < f_\textnormal{R}$ on open neighbourhoods of $0$ and $\pi$, one gets that $J_\textnormal{R}>0$ for small couplings.
Considering $\ii W$ instead of $W$ for the same reservoirs yields $J_\textnormal{R}<0$ for small couplings; see Figure~\ref{fig:sp}. {The change of sign can equivalently be obtained by adding an appropriate common phase to the free dynamics of each reservoir\,---\,which is analogous to the change of sign that can occur by shifting the chemical potential in Hamiltonian systems for which the Landauer--B\"uttiker formula is valid.}

\begin{figure}
	\centering
	\hspace{.5in}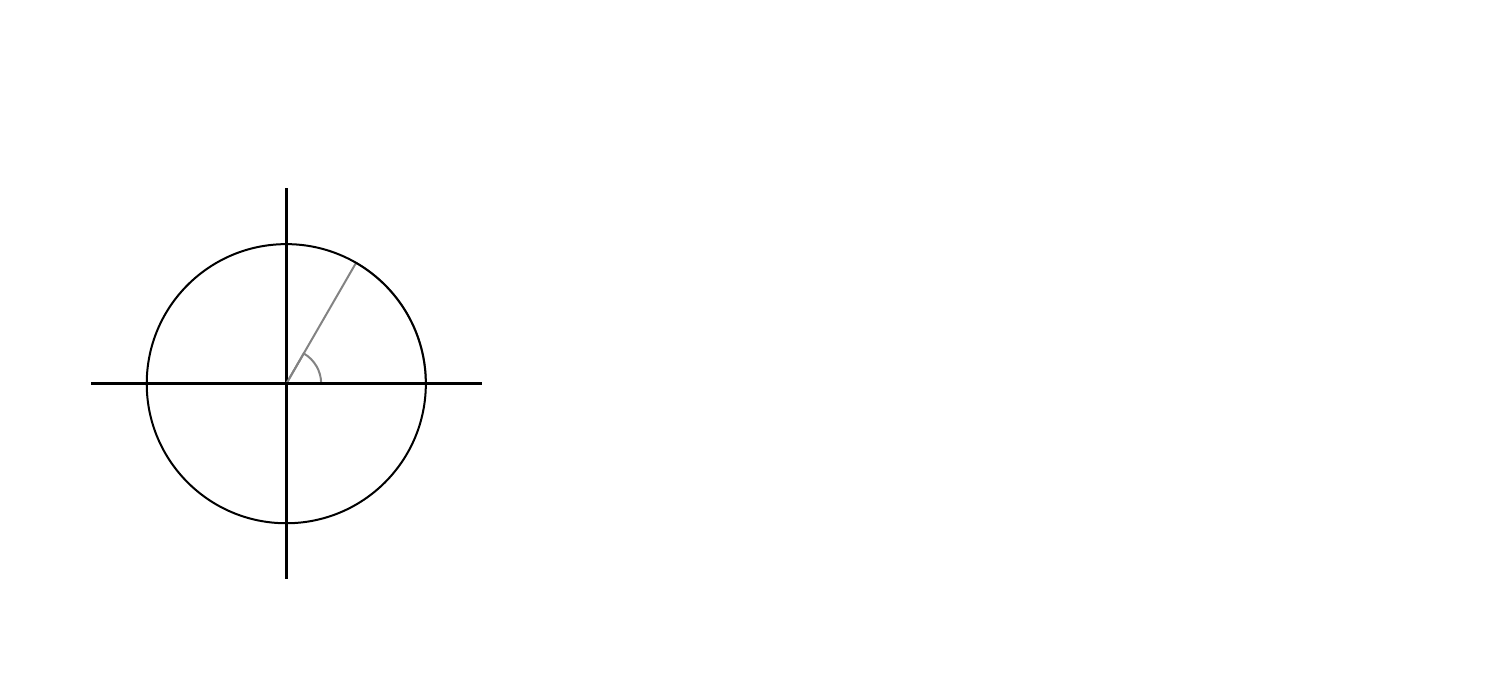
	\caption{Still in the setup of Figure~\ref{fig:ill}, with $\varphi = \tfrac 13 \pi$ and $\beta = 0.1$, the spectrum of~$W$ (on the left, in red online) lies in closed cones of opening $\pi-2\varphi$ about the imaginary axis. The corresponding arguments are values of~$\theta$ for which $f_\textnormal{L}(\theta) > f_\textnormal{R}(\theta)$ (on the right). Multiplying $W$ by a phase of~$\ii$  amounts to a rotation by quarter turn of the spectrum on the left and to a horizontal shift for the arguments on the right, leading to the opposite inequality.}
	\label{fig:sp}
\end{figure}

\subsection{Structure of the paper}

The paper is organized as follows: The next section is devoted to the description of our quantum dynamical system in a fairly general abstract framework. The long time asymptotic state is determined in Section 3, together with its restrictions to the sample and the reservoirs. Section 4 analyses the properties of the steady state currents of particles across the sample, while the study of the entropy production rate is conducted in Section 5. Eventually, the small coupling regime is analyzed in Section 6, and the paper closes with the proofs of certain results.

\paragraph*{Acknowledgements} The research of the authors is partially supported by the French National Agency through the grant NonStops (ANR-17-CE40-0006). The research of S.\,A. is supported by the French National Research Agency in the framework of the ``Investissements d’avenir'' program (ANR-15-IDEX-02). The research of R.\,R. is partially funded by the National Sciences and Engineering Research Council of Canada. The authors would like to thank the anonymous referees for their comments, which improved the quality of the presentation.

\section{The setup}

\subsection{The spaces and one-particle dynamics}
\label{ssec:spaces}

Let~$\Hs$ be a finite-dimensional Hilbert space. Throughout the paper, our terminology implicitly relies on the assumption that~$\Hs$ is the appropriate Hilbert space for the description of a quantum walker on a finite graph, sometimes referred to as a \textit{sample}. An evolution for a quantum walker on a slight extension of this sample could be encoded in a unitary operator~$Z$ on a Hilbert space of the form~$\Hb \oplus \Hs$ where~$\Hb$ is the Hilbert space associated the extension. With respect to this direct sum decomposition, the blocks of~$Z$, say
\begin{equation}
\label{eq:def-Z}
  Z =
  \begin{pmatrix}
    C & Z\BlSa \\ Z\SaBl & M
    \end{pmatrix},
\end{equation}
should satisfy
\begin{equation}\label{eq:zunitary}
  \left\{
\begin{array}{c c}
  C^*C+Z\SaBl^*Z\SaBl=\one, &
  C^* Z\BlSa+{Z\SaBl^*}M=0, \\
  Z\BlSa^*C+M^* Z\SaBl=0, &
  Z\BlSa^*Z\BlSa+M^*M=\one,
\end{array}
  \right.
\end{equation}
for the identity~$Z^*Z = \one$ to hold (and similarly for $ZZ^* = \one$). The off-diagonal blocks~$Z\BlSa$ and~$Z\SaBl$ describe the coupling between the sample and its extension and the bock~$M$ is thought of as an effective perturbation of a unitary~$W$ on~$\Hs$.

The Hilbert space
\[
  \Htot := (\ell^2(\zz) \otimes \Hb) \oplus \Hs
\]
for some finite-dimensional Hilbert space~$\Hb$ is instead suitable for the description of situations where the sample is interacting with an infinite environment which has a certain translation-invariant structure. Let us construct a single-particle unitary operator~$\fU$ on~$\Htot$ such that powers of~$\fU$ can be interpreted as successive interactions of the type encoded in~$Z$ with different blocks of this infinite environment.

Let~$(\delta_l)_{l \in \zz}$ be the canonical basis of~$\ell^2(\zz)$ and let
\begin{align*}
  S :   \ell^2(\zz)   &\to \ell^2(\zz) \\
        \delta_l   &\mapsto \delta_{l-1}
\end{align*}
be the shift operator and~$U : \Hb \to \Hb$ be an arbitrary unitary operator.
We set
\begin{equation}
  \label{eq:def-fU}
  \fU :=
  \begin{pmatrix}
    (S \otimes U)(P_0^\perp \otimes \one + P_0 \otimes C)   & S \delta_0 \otimes UZ\BlSa \\
    \delta_0^* \otimes Z\SaBl                      & M
    \end{pmatrix},
\end{equation}
on~$\Htot$ where $P_0 : \ell^2(\zz) \to \ell^2(\zz)$ is the orthogonal projector on the span of~$\delta_0$ and~$P_0^\perp := \one - { P_0}$.
Here, $\delta_0 \in \ell^2(\zz)$ is identified with a linear operator from~$\cc$ to~$\ell^2(\zz)$, so that e.g.\ $\delta_0 \otimes Z\BlSa$ can indeed be considered as an operator from~$\Hs \simeq \cc\otimes \Hs$ to~$\ell^2(\zz)\otimes\Hb$.
The unitary operator $\fU$ is quite natural to consider: it acts as the unitary operator~$Z$ on the space $\{\delta_0\}\otimes \Hb\oplus \Hs \simeq \Hb\oplus \Hs$ and then as the free evolution $S\otimes U$ on~$\ell^2(\zz)\otimes \Hb$; see Section~\ref{sec:small} for the discussion of the explicit link with the Introduction.

We make the following assumptions on the effective dynamics in the sample which was previously discussed in~\cite{HJ17,Ra20} in important examples.
\begin{description}
  \item[Assumption (Sp)] The spectrum of~$M$ is contained in the interior of the unit disk.
\end{description}

\subsection{The initial state in Fock space}
\label{ssec:init-state}

To describe the evolution of a varying number of fermionic walkers in the system we consider observables in the canonical anticommutation algebra~$\CAR(\Htot)$ represented on the fermionic Fock space~$\Ga(\Htot)$.

The fermionic Fock space space~$\Ga(\Htot)$ is unitarily equivalent to the tensor product~$\Ga(\ell^2(\zz) \otimes \Hb) \otimes \Ga(\Hs)$ of Fock spaces through a map~$\mathsf{E}$ such that
\[
  \mathsf{E} a^*(v \oplus w) \mathsf{E}^{-1} = a^*(v) \otimes \one + (-1)^{\dG(\one)} \otimes a^*(w)
\]
for all~$v \in \ell^2(\zz) \otimes \Hb$ and $w \in \Hs$.
This map associates quasifree states on~$\CAR(\Htot)$ with a symbol of the form $\T \oplus \Delta$ for some suitable~$\T : \ell^2(\zz) \otimes \Hb \to \ell^2(\zz) \otimes \Hb$ and~$\Delta : \Hs \to \Hs$
with the product of the corresponding quasifree states on $\CAR(\ell^2(\zz) \otimes \Hb)$ and~$\CAR(\Hs)$ respectively. We refer the reader to~\cite[\S{5.1,6.3}]{AJPP06} for a more thorough discussion.

We recall that $\omega_{\T}$ is a gauge-invariant quasifree state on~$\CAR(\ell^2(\zz) \otimes \Hb)$ with symbol~$0 \leq T \leq \one$ if
\[
  \omega_{\T}\big[a^*(v_n) \dotsb a^*(v_1)a(u_1) \dotsb a(u_m)\big] = \delta_{n,m}\det [\braket{u_i,\T v_j}]
\]
for all choices of~$v_1, \dotsc, v_n, u_1, \dotsc, v_m \in \ell^2(\zz) \otimes \Hb$, where~$a^*$ and~$a$ are the usual Fock space creation and annihilation operators\,---\,and similarly for other spaces. We refer the reader to~\cite{DFP08} for the basic theory of such states.

We will always make either of the following two assumptions on the initial state of the system, the second being technically more convenient and allowing simpler expressions for quantities of interest:
\begin{description}
  \item[Assumption (IC)] The initial state of the joint system is of the form
  \[
    \rho(0) = \mathsf{E}^{-1}(\omega_{\T} \otimes \rho\Sa)\mathsf{E}
  \]
  where $\rho_{\textnormal{S}}$ is an even state on the algebra~$\CAR(\Hs)$
  and~$\omega_{\T}$ is a gauge-invariant quasifree state on the algebra~$\CAR(\ell^2(\zz)\otimes\Hb)$ with symbol $T : \ell^2(\zz)~\otimes~\Hb \to \ell^2(\zz)~\otimes~\Hb$, $0 \leq~\T~\leq~\one$ such that
  \[
    [\T, S\otimes U] = 0.
  \]
  In addition, we assume that
  \[
    \sum_{l\in\zz} |l| \|(\delta_0^* \otimes \one) \T (\delta_l \otimes \one)\|< \infty.
  \]
  \item[Assumption (IC+)] The initial state of the joint system is as in~(IC) with $\rho\Sa$ also quasifree, with a symbol~$\Delta : \Hs \to \Hs$; equivalently, the initial state is a quasifree state with a density of the form~$\T \oplus \Delta$. Moreover, it is bounded away from~$0$ and~$\one$ in the sense that there exists~$\epsilon > 0$ such that~$\epsilon\one \leq \T \leq (1-\epsilon)\one$.
\end{description}
We also suppose that
\begin{description}
  \item[Assumption (Bl)] There exists a family~$\{\Pi_k\}_{k=1}^{n\Bl}$ of orthogonal projections summing to the identity on~$\Hb$ such that
  \[
    [U,\Pi_k] = 0,
  \]
  and
  \[
    [\T, \one \otimes \Pi_k] = 0
  \]
  for each~$k = 1,\dotsc, n\Bl$.
\end{description}
Note that~Assumption (Bl) technically always holds with~$n\Bl = 1$ and~$\Pi_1 = \one$, but is thought of as a separation of the environment into~$n\Bl$ different bi-infinite reservoirs of fermions,  with their own dynamics, which only interact through the sample. {Also, the case with $\operatorname{rank} \Pi_k = 1$ for each~$k$ will allow more explicit computations of some important quantities.}

In terms of the linear operators
\begin{equation}
  \label{eq:pre-blocks-T}
  \T_{n,m} := (\delta_n^* \otimes \one )\T(\delta_m \otimes \one)
\end{equation}
on~$\Hb$, referred to as \emph{blocks}, the commutation assumption in~(IC) becomes the requirement that
\begin{equation}
  \label{eq:blocks-T}
  \T_{n,m} = U^{-n} \T_{0,m-n} U^n.
\end{equation}
for all~$n, m \in \zz$.

\subsection{Relation to repeated interaction systems}

To clarify the place of our model in the zoo of discrete-time quantum dynamics, we comment on its relation to \emph{repeated interaction systems} (\textsc{ris}). This subsection can be skipped on a first reading. Consider the effective one-step dynamics in the sample
\[
	\Lambda_1(\rho) := \tr_{\Ga(\ell^2(\zz) \otimes \Hb)} [\Gamma(\fU^*)(\omega_{\T} \otimes \rho)\Gamma(\fU)],
\]
starting with an initial state as in Assumption~(IC+). A straightforward computation  making use of the Bogolyubov relation shows that~$\Lambda_1(\rho)$ is a quasifree state with symbol
\[
	\Delta^1 = M \Delta M^* + Z\SaBl \T_{0,0} Z\SaBl^*.
\]
Repeatedly applying the map~$\Lambda_1$, say~$t$ times to obtain a quasifree state with symbol
\[
	\Delta^t_{\textnormal{RIS}} = M^t \Delta (M^*)^t + \sum_{m=0}^{t-1} M^m Z\SaBl  \T_{0,0} Z\SaBl^* (M^*)^m,
\]
is an instance of a \textsc{ris}, as noted in the single reservoir setups of~\cite{HJ17, Ra20}. One can show that this \textsc{ris} picture coincides precisely with what happens at the level of the sample in the setup of Subsections~\ref{ssec:spaces} and~\ref{ssec:init-state} if $T_{n,m} = 0$ whenever $n \neq m$.
For example, compare our setup with~$Z = \exp[-\ii \tau (k\En \oplus k\Sa  + \lambda v)]$ for some one-particle selfadjoints operators~$k\En, k\Sa$ and~$v$ and compare the resulting dynamics on Fock space to the content of Section~II of~\cite{BJM14} using the exponential law for fermions.

However, in general, the effective dynamics in the sample
\[
	\Lambda_t(\rho) := \tr_{\Ga(\ell^2(\zz) \otimes \Hb)} [\Gamma(\fU^*)^t(\omega_{\T} \otimes \rho)\Gamma(\fU)^t]
\]
\emph{need not} enjoy the semigroup property~$\Lambda_{t+t'} = \Lambda_t \circ \Lambda_{t'}$. Indeed, we will see in Remark~\ref{rem33bis} below that, under Assumption~(IC+), $\Lambda_t(\rho)$ is a quasifree state with density
\[
	\Delta^t= M^t \Delta (M^*)^t + \sum_{m=0}^{t-1}\sum_{n=0}^{t-1}  M^m Z \SaBl \T_{0,m-n} U^{n-m} Z\SaBl^* (M^*)^n.
\]
The difference between~$\Delta^t_{\textnormal{RIS}}$ obtained in the~\textsc{ris} scenario and our general $\Delta^t$ amounts to the terms with $n \neq m$ in the latter, which generically do not cancel out. More generally, tracing out at steps that are multiples of a number~$\tau \geq 2$ for which $\T_{0,m} = 0$ for $m > \tau$, a similar computation shows that the dynamics differs from the original one by terms with no particular structure for cancellation.

On the other hand, the fact that we obtain our dynamics from the second quantization of a one-body operator imposes a conservation law which rules out certain \textsc{ris} scenarios where nontrivial entropy production rates arise from interaction with a single reservoir; see e.g.\ the discussions surrounding Lemma~6.5 in~\cite{HJPR17} and Section~3.4 in~\cite{BB20}.

\section{Mixing}

We present several results on the large-time behaviour of the system. While explicit formulae using the canonical relations in Fock space have proved to be useful in~\cite{HJ17,Ra20}, we here focus on a scattering approach to the problem. We set
\begin{equation}
  \label{eq:def-Y0}
  Y_0 := C
\end{equation}
and
\begin{equation}
  \label{eq:def-Ym}
  Y_m := Z\BlSa M^{m-1} Z\SaBl
\end{equation}
for $m \geq 1$­. Heuristically, $Y_m$ encodes what happens to the wave function of a fermion from a reservoir which enters the sample, spends~$m-1$ more time steps there and then exits the sample.

\subsection{Scattering and the asymptotic state}
\label{sec:gen-i}

It is straightforward to check by induction that
\begin{multline}
  \label{eq:taking-powers}
  \fU^t
  - \sum_{n \neq 0,\dotsc,t-1} \delta_{n-t} \delta_{n}^* \otimes U^t \oplus 0
  \\  =
  \begin{pmatrix}
    \sum_{l = 0}^{t-1} \sum_{m=0}^{t-l-1} \delta_{l-t+m} \delta_l^* \otimes U^{t-l-m} Y_m U^l
    & \sum_{m=0}^{t-1} \delta_{-t+m} \otimes U^{t-m} Z\BlSa M^m
    \\
    \sum_{m=0}^{t-1} \delta_{t-m-1}^*\otimes  M^m Z\SaBl U^{t-m-1}
    & M^t
    \end{pmatrix}.
\end{multline}
for all~$t \geq 0$. As is customary, we investigate the behaviour of~$\fU^t$ for large~$t$ through M{\o}ller-like operators. Multiplying~\eqref{eq:taking-powers} by~$(S\otimes U \oplus \one)^{-t}$ on the right and performing a reindexation to eliminate explicit occurrences of~$t$ in the summand for the double sum, we find
\begin{multline}
\label{eq:pre-scat-t}
  \fU^t(S\otimes U \oplus \one)^{-t}
  - \sum_{n \neq -t,\dotsc,-1} \delta_{n} \delta_{n}^* \otimes \one \oplus 0
  \\
  =
  \begin{pmatrix}
    \sum_{m = 0}^{t-1}\sum_{l = 1}^{t-m} \delta_{-l} \delta_{-m-l}^* \otimes U^l Y_m U^{-m-l}
    & \sum_{m=0}^{t-1} \delta_{-t+m} \otimes U^{t-m} Z\BlSa M^m \\
    \sum_{m = 0}^{t-1}\delta^*_{-m-1} \otimes M^m Z\SaBl U^{-m-1}
    & M^t
  \end{pmatrix}
\end{multline}
for~$t \geq 0$. Multiplying the adjoint of~\eqref{eq:taking-powers} by~$(S\otimes U \oplus \one)^{t}$ on the right and performing a reindexation, we find a similar formula for $\fU^{-t}(S\otimes U \oplus \one)^{t}$ with $t \geq 0$.

Under Assumption~(Sp), it is thus easy to see from the matrix elements that the limits
\begin{equation}
\label{eq:scat-wlim}
  \Omega_U^\pm := \wlim\limits_{t \to \mp\infty} \fU^t(S\otimes U \oplus \one)^{-t}
\end{equation}
exist and are given by the explicit expressions
\begin{equation}\label{eq:scat}
\begin{split}
  \Omega_U^- &=
  \begin{pmatrix}
    \sum_{n \geq 0} \delta_n \delta_n^* \otimes \one
    & 0 \\
    0
    & 0
  \end{pmatrix}
  +
  \sum_{m \geq 0}
  \begin{pmatrix}
     \sum_{l \geq 1} \delta_{-l} \delta_{-m-l}^* \otimes U^l Y_m U^{-m-l}
      & 0 \\
   \delta^*_{-m-1} \otimes M^m Z\SaBl U^{-m-1}
      & 0
  \end{pmatrix}
\end{split}
\end{equation}
and
\begin{equation}\label{eq:scat-}
\begin{split}
  \Omega_U^+ &=
    \begin{pmatrix}
      \sum_{n' \leq -1} \delta_{n'} \delta_{n'}^* \otimes \one
      & 0 \\
      0
      & 0
    \end{pmatrix}
    +
    \sum_{m' \geq 0}
    \begin{pmatrix}
       \sum_{l' \geq m'} \delta_{l'-{m'}} \delta_{l'}^*  \otimes U^{m'-l'} Y_{m'}^* U^{l'}
        &  0 \\
        \delta_{m'}^* \otimes (M^*)^{m'} Z\BlSa^* U^{m'}
        & 0
    \end{pmatrix}.
\end{split}
\end{equation}
Note that we have not yet projected onto~$\ell^2(\zz) \otimes \Hb$, i.e.\ the subspace associated to the absolutely continuous spectrum of~$(S \otimes U \oplus \one)$, but have used the weak operator topology. As expected, strong convergence holds on the appropriate subspace; the proof of the following proposition concerning~$\Omega^-_U$ is postponed to Section~\ref{sec:scat-proofs}. While not needed in what follows, an analogue result holds for~$\Omega^+_U$.

\begin{proposition}\label{prop:scat}
  Suppose that Assumption~(Sp) holds. Then, both
  \begin{equation*}
  \label{eq:scat-slim}
    \slim\limits_{t \to \infty} \fU^t(S\otimes U \oplus \one)^{-t} (\one\otimes\one \oplus 0) = \Omega_U^- (\one\otimes\one \oplus 0)
  \end{equation*}
  and
  \begin{equation*}
        \slim\limits_{t \to \infty} (\fU^t(S\otimes U \oplus \one)^{-t})^* = (\Omega_U^-)^*.
  \end{equation*}
\end{proposition}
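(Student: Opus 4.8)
The plan is to upgrade the weak convergence in~\eqref{eq:scat-wlim} to the strong convergence claimed, by exploiting the explicit block form in~\eqref{eq:pre-scat-t} together with the exponential decay of~$M^t$ furnished by Assumption~(Sp). First I would recall that Assumption~(Sp) gives constants~$c \geq 1$ and~$0 < r < 1$ with~$\|M^t\| \leq c r^t$ for all~$t \geq 0$; this is the only quantitative input needed. The strategy for each of the two limits is the same: write the difference between~$\fU^t(S\otimes U \oplus \one)^{-t}$ (restricted to the relevant subspace) and its candidate limit as an explicit finite-versus-infinite sum of rank-one-in-$\ell^2(\zz)$ terms coming from~\eqref{eq:pre-scat-t} and~\eqref{eq:scat}, and bound the operator norm of this difference by a tail of a convergent series that is independent of the vector it acts on. Since $U$ is unitary, the factors $U^l$, $U^{-m-l}$ contribute nothing to the norms, so the estimates reduce to controlling $\sum_m \|Y_m\| = \|C\| + \sum_{m\geq 1}\|Z\BlSa M^{m-1} Z\SaBl\| \leq \|C\| + c\,\|Z\BlSa\|\,\|Z\SaBl\|\,(1-r)^{-1} < \infty$ and similar geometric sums.

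Concretely, for the first limit I would restrict to the subspace~$\ell^2(\zz)\otimes\Hb \simeq (\ell^2(\zz)\otimes\Hb)\oplus 0$ and compare~\eqref{eq:pre-scat-t}, evaluated on that subspace, with the first columns of~\eqref{eq:scat}. The column $\sum_{m=0}^{t-1}\delta_{-t+m}\otimes U^{t-m}Z\BlSa M^m$ has norm at most $\|Z\BlSa\|\sum_{m=0}^{t-1}\|M^m\| $, which is $O(1)$; crucially, after multiplication by $(S\otimes U\oplus\one)^{-t}$ and the reindexation that produced~\eqref{eq:pre-scat-t}, these terms are supported on basis vectors $\delta_{-t+m}$ with $m \le t-1$, hence on indices $\le -1$ that march off to $-\infty$, so on any fixed vector they tend to $0$ strongly, and uniformly on the unit ball their contribution is dominated by a tail $\sum_{m \geq t_0}\|M^m\|\to 0$. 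The upper-left block difference is $\sum_{m \geq 0}\sum_{l \geq t-m+1}\delta_{-l}\delta^*_{-m-l}\otimes U^l Y_m U^{-m-l}$ (the part of the infinite double sum in~\eqref{eq:scat} not yet reached at time~$t$); because the maps $\delta_{-l}\delta_{-m-l}^*$ for distinct $(m,l)$ have orthogonal ranges \emph{and} orthogonal coranges, the operator norm of this remainder equals $\sup_{m,l}$ over the relevant range of $\|Y_m\|$, or more safely is bounded by $\sum_{m}\sum_{l \geq t-m+1}\|Y_m\| \leq \sum_{m \geq 0}\|Y_m\|\cdot(\text{number of surviving }l)$ — here I would instead organize the sum by the orthogonality so that the bound is simply $\sum_{\,m+l > t,\ l\geq 1}\|Y_m\|$ arranged as a convergent tail; the cleanest route is to note that for fixed target index $-l$ only finitely many $(m,l)$ contribute and use orthogonality of the $\delta_{-l}\delta^*_{-m-l}$ to get norm $= \sup\{\|Y_m\| : m+l>t\}$, which is $\leq c\|Z\BlSa\|\|Z\SaBl\| r^{\,t-1}\to 0$ once $t$ is large (only $m$ large survive for each fixed $l$, or $l$ large). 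Finally the lower-left block $\sum_{m\geq 0}\delta^*_{-m-1}\otimes M^m Z\SaBl U^{-m-1}$ already appears in full in~\eqref{eq:scat}; truncating it at $m \leq t-1$ gives a remainder of norm $\leq \|Z\SaBl\|\sum_{m\geq t}\|M^m\|\to 0$. Assembling the three blocks gives a bound on $\|(\fU^t(S\otimes U\oplus\one)^{-t}-\Omega_U^-)(\one\otimes\one\oplus 0)\|$ that vanishes as $t\to\infty$.

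For the second limit I would apply exactly the same scheme to $(\fU^t(S\otimes U\oplus\one)^{-t})^*$, using the adjoint of~\eqref{eq:pre-scat-t} and comparing with $(\Omega_U^-)^*$; the structure is identical, with the roles of rows and columns interchanged, and the same geometric sums control everything. The main obstacle is purely bookkeeping: one must carefully track the supports (the $\delta$-indices) of each surviving summand after the reindexation so as to correctly identify which terms have disappeared to spatial infinity (these go to $0$ strongly by support-escape, uniformly on bounded sets via a tail estimate) versus which are genuine truncation remainders of a norm-convergent series (these go to $0$ in operator norm by $\sum_m\|M^m\|<\infty$). There is no analytic subtlety beyond the geometric decay of $M^t$; the content is entirely in organizing~\eqref{eq:pre-scat-t} against~\eqref{eq:scat} block by block and invoking the mutual orthogonality of the rank-one operators $\delta_j\delta_k^*$ to turn sums of norms into suprema where possible.
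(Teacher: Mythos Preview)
Your proposal has a genuine gap: you claim that the upper-left block remainder tends to~$0$ in \emph{operator norm}, but this is false, and only the strong convergence actually asserted in the proposition holds. Concretely, your ``cleanest route'' bounds the remainder by $\sup\{\|Y_m\| : m+l>t\}$, but for every~$t$ the pair $(m,l)=(0,t+1)$ lies in the remainder index set, so this supremum is at least $\|Y_0\|=\|C\|$, which does not decay. More directly: apply the difference $\Omega^{(t)}-\Omega_U^-$ to the unit vector~$\delta_{-2t}\otimes\phi$; the~$\delta_{-2t}$-component of the output is $\phi - U^{2t}CU^{-2t}\phi$, which is bounded away from~$0$ whenever~$C\neq\one$. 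The same obstruction kills your other attempted bounds $\sum_m\sum_{l\geq t-m+1}\|Y_m\|$ and $\sum_{m+l>t}\|Y_m\|$, which simply diverge.

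The paper's proof avoids this by working with a \emph{fixed} vector~$v = \sum_{l'}\sum_j a_{j,l'}\,\delta_{l'}\otimes\phi_j$ and showing the partial sums are Cauchy in the strong topology: the remainder satisfies
\[
  \bigg\|\sum_{(m,l)\notin\mathbf{T}_t}\mathsf{ULB}^-_{m,l}v\bigg\|^2 \leq \sum_j \sum_{n\geq t}|a_{j,-n}|^2 \sum_{m\geq 0}\|Y_m\|^2,
\]
where the first factor is the $\ell^2$-tail of~$v$ and tends to~$0$. The point is that although terms with small~$m$ and large~$l$ remain, they act on components~$v_{-m-l}$ with $m+l>t$, which are small \emph{for that fixed}~$v$. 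For the second limit, the adjoint of the upper-right block does not ``march off to infinity'' in the way you describe; the paper handles it via a separate convolution-type estimate (Lemma~\ref{lem:L1}) combining the~$\ell^2$-decay of~$\|M^m\|$ with the~$\ell^2$-decay of the components of~$v$. Your plan needs to be reworked at the level of fixed vectors throughout.
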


The scattering matrix
\[
  \fY_U := (\one\otimes\one \oplus 0)(\Omega_U^+)^* \Omega_U^-(\one\otimes\one \oplus 0)
\]
on~$\ell^2(\zz) \otimes \Hb$ will also frequently appear in the sequel. The following lemma makes its structure more explicit. A direct proof that~$\fY_U$ is unitary is given in the next section.

\begin{lemma}
  Under Assumption~(Sp),
  \begin{equation}
      \fY_U =\sum_{m \geq 0} \sum_{l\in\zz} \delta_l\delta_{l-m}^* \otimes U^{-l} Y_m U^{l-m}.
  \end{equation}
\end{lemma}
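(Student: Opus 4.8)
The plan is to establish the formula by substituting the closed-form expressions \eqref{eq:scat} and \eqref{eq:scat-} for $\Omega_U^-$ and $\Omega_U^+$ into the definition of $\fY_U$, multiplying out the relevant blocks, and then reindexing. The first step is to record the two blocks of $(\Omega_U^+)^*$ that matter: transposing \eqref{eq:scat-} and using that $U$ is unitary, its $(1,1)$-block is
\[
  \sum_{n \leq -1} \delta_n \delta_n^* \otimes \one \;+\; \sum_{m \geq 0}\,\sum_{l \geq m} \delta_l \delta_{l-m}^* \otimes U^{-l} Y_m U^{l-m},
\]
while its $(1,2)$-block is $\sum_{m \geq 0} \delta_m \otimes U^{-m} Z\BlSa M^m$. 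Since $(\one\otimes\one \oplus 0)$ is the orthogonal projection onto $\ell^2(\zz)\otimes\Hb$, only the $(1,1)$-entry of $(\Omega_U^+)^*\Omega_U^-$ contributes, and by the block-multiplication rule it equals $[(\Omega_U^+)^*]_{11}\,[\Omega_U^-]_{11} + [(\Omega_U^+)^*]_{12}\,[\Omega_U^-]_{21}$. So it remains to expand these two products, using the orthogonality $\delta_i^*\delta_j = \delta_{ij}$ of the canonical basis.

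For the first product I would split each $(1,1)$-block into its diagonal part ($\sum_{n\geq 0}\delta_n\delta_n^*\otimes\one$ for $\Omega_U^-$, $\sum_{n\leq -1}\delta_n\delta_n^*\otimes\one$ for $(\Omega_U^+)^*$) and its scattering part ($\sum_{m\geq 0}\sum_{l\geq 1}\delta_{-l}\delta_{-m-l}^*\otimes U^l Y_m U^{-m-l}$ and the double sum displayed above). Two of the four cross-terms vanish identically: the product of the two diagonal parts is zero because one index ranges over $\{n \leq -1\}$ and the other over $\{n \geq 0\}$; and the product of the two scattering parts is zero because it would force a non-negative index $l-m$ to equal a negative index $-l'$. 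The surviving terms are, first, the scattering part of $(\Omega_U^+)^*$ times the diagonal part of $\Omega_U^-$, which collapses (matching $\delta_{l-m}^*$ with $\delta_{l-m}$, $l \geq m \geq 0$) to $\sum_{m\geq 0}\sum_{l\geq m}\delta_l\delta_{l-m}^*\otimes U^{-l}Y_m U^{l-m}$; and, second, the diagonal part of $(\Omega_U^+)^*$ times the scattering part of $\Omega_U^-$, which is $\sum_{m\geq 0}\sum_{l\geq 1}\delta_{-l}\delta_{-m-l}^*\otimes U^l Y_m U^{-m-l}$ and becomes, after relabeling $l\mapsto -l$, $\sum_{m\geq 0}\sum_{l\leq -1}\delta_l\delta_{l-m}^*\otimes U^{-l}Y_m U^{l-m}$. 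Together these contribute exactly the pairs $(m,l) \in \nn\times\zz$ with $l\leq -1$ or $l\geq m$.

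For the second product, $[(\Omega_U^+)^*]_{12}[\Omega_U^-]_{21}$, multiplying $\sum_{m'\geq 0}\delta_{m'}\otimes U^{-m'}Z\BlSa M^{m'}$ by $\sum_{m\geq 0}\delta_{-m-1}^*\otimes M^m Z\SaBl U^{-m-1}$ yields $\sum_{m,m'\geq 0}\delta_{m'}\delta_{-m-1}^*\otimes U^{-m'}\big(Z\BlSa M^{m+m'}Z\SaBl\big)U^{-m-1}$, and the one identity to invoke here is $Z\BlSa M^{m+m'}Z\SaBl = Y_{m+m'+1}$, straight from \eqref{eq:def-Ym}. Putting $\mu := m+m'+1 \geq 1$, so that $m' \in \{0,\dots,\mu-1\}$ and $-m-1 = m'-\mu$, this product is $\sum_{\mu\geq 1}\sum_{l=0}^{\mu-1}\delta_l\delta_{l-\mu}^*\otimes U^{-l}Y_\mu U^{l-\mu}$, which supplies precisely the pairs with $0 \leq l < m$ missing from the first product. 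Summing the three contributions, which tile $\nn\times\zz$ without overlap, gives
\[
  \fY_U = \sum_{m\geq 0}\sum_{l\in\zz}\delta_l\delta_{l-m}^*\otimes U^{-l}Y_m U^{l-m}.
\]

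There is no real obstacle here beyond careful bookkeeping: one must track the ranges of the position indices so that the three surviving pieces partition $\nn\times\zz$ with no gaps or overlaps, and respect the off-by-one in $Z\BlSa M^{m+m'}Z\SaBl = Y_{m+m'+1}$, which stems from the conventions $Y_0 = C$ and $Y_m = Z\BlSa M^{m-1}Z\SaBl$. Convergence of all the series is easy to justify: for fixed $m$ the inner sum over $l$ equals $V(S^{-m}\otimes Y_m)V^*$ for the unitary $V := \sum_l \delta_l\delta_l^*\otimes U^{-l}$ on $\ell^2(\zz)\otimes\Hb$, hence has operator norm $\|Y_m\|$, and $\sum_m\|Y_m\| < \infty$ since $\|Y_m\| \leq \|M^{m-1}\|$ decays exponentially under Assumption~(Sp).
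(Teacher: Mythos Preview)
Your proof is correct and follows essentially the same route as the paper's own argument: both expand the $(1,1)$-entry of $(\Omega_U^+)^*\Omega_U^-$ from the explicit formulas \eqref{eq:scat}--\eqref{eq:scat-}, obtain the same three surviving contributions, and reindex the cross term via $Z\BlSa M^{m+m'}Z\SaBl = Y_{m+m'+1}$. Your version is simply more explicit in justifying why the two remaining cross-products vanish and in checking that the index ranges tile $\nn\times\zz$, and it adds a (routine but welcome) convergence remark.
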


\begin{proof}
  We expand
  \begin{align*}
    &(\one\otimes\one \oplus 0)(\Omega_U^+)^* \Omega_U^-(\one\otimes\one \oplus 0)
    \\  & \qquad
    = \sum_{m\geq 0} \sum_{l \geq 1} \delta_{-l} \delta_{-m-l}^* \otimes U^l Y_m U^{-m-l} + \sum_{m'\geq 0} \sum_{l' \geq m'}  \delta_{l'} \delta_{l'-{m'}}^*  \otimes  U^{-l'} Y_{m'} U^{-m'+l'}
      \\ & \qquad \qquad \qquad
      + \sum_{m' \geq 0} \sum_{m \geq 0} \delta_{m'}\delta_{-m-1}^* \otimes U^{-m'} Z\BlSa M^{m'} M^m Z\SaBl U^{-m-1}.
  \end{align*}
  Rewriting the double sum on the last line in terms of~$Y_{m''}$ with~$m'' = m + m'$ yields the desired formula.
\end{proof}

  We use a subscript~$U$ on some of the objects introduced in this section because it is at times convenient to factor out the contribution from the unitary~$U$ and then consider the special case~$U = \one$. For example,
  \[
    \Omega_{U}^{-} = \bigg(\sum_{m \in \zz} P_m \otimes U^m \oplus \one \bigg)^* \Omega_{\one}^- \bigg(\sum_{n \in \zz} P_n \otimes U^n \oplus \one \bigg)
  \]
  and
  \begin{equation}
  \label{eq:fY-wo-U}
    \fY_{U}=
      \Big(\sum_{m \in \zz} P_m \otimes U^m \Big)^* \fY_{\one} \Big(\sum_{n \in \zz} P_n \otimes U^n \Big).
  \end{equation}
In view of this factorization, we introduce a modification of~$\T$ which absorbs part of the free dynamics in the environment:
\begin{equation}
\label{eq:def-Xi}
  \Xi := \bigg( \sum_{n \in \zz} P_n \otimes U^n \bigg) \T \bigg( \sum_{m \in \zz} P_m \otimes U^m \bigg)^*,
\end{equation}
so that
\[
  \Xi = \sum_{n,m \in \zz} \delta_n \delta_m^* \otimes \Xi_{m-n},
\]
where
\begin{equation*}
  \Xi_n := \T_{0,n} U^{-n}.
\end{equation*}

Note that~$\Xi$ is selfadjoint and commutes with~$S \otimes \one$ and $\one \otimes \Pi_k$, $k=1,\dots, n_B$.

\begin{proposition}
  Under Assumptions~(IC) and (Sp), the limit
  \begin{equation}
    \rho(\infty)[A] := \lim_{t \to \infty} \rho(0)[\Gamma(\fU)^{-t} A \Gamma(\fU)^{t}]
  \end{equation}
  exists for all~$A \in \CAR(\H\tot)$ and defines a quasifree state with symbol
  \begin{equation}
    \T\tot^\infty := \Omega_U^-  (\T \oplus 0) (\Omega_U^-)^*.
  \end{equation}
\end{proposition}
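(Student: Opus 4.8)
\emph{Proof plan.} The plan is to evaluate the evolved state on monomials, transfer the dynamics to the one-particle space, replace~$\fU^{-t}$ by the M{\o}ller operator via Proposition~\ref{prop:scat}, and then exploit the translation invariance of~$\omega_{\T}$ to see that the residual $t$-dependence cancels exactly. First I would reduce to monomials: since $\|\Gamma(\fU)^{\pm t}\| = 1$, the functionals $A \mapsto \rho(0)[\Gamma(\fU)^{-t} A \Gamma(\fU)^t]$ are states, hence uniformly bounded, and the linear span of the normal-ordered monomials $a^*(v_n) \dotsb a^*(v_1) a(u_1) \dotsb a(u_m)$ is dense in~$\CAR(\Htot)$; so it suffices to prove convergence on such elements, and the limit~$\rho(\infty)$ will automatically be a state. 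On a monomial, $\Gamma(\fU)^{-t} a^\#(v) \Gamma(\fU)^t = a^\#(\fU^{-t}v)$, and writing $\bigl(\fU^t(S\otimes U\oplus\one)^{-t}\bigr)^* = (S\otimes U\oplus\one)^t \fU^{-t}$ and invoking the second half of Proposition~\ref{prop:scat} gives $\fU^{-t}v = (S\otimes U\oplus\one)^{-t}\bigl((\Omega_U^-)^* v + r_t\bigr)$ with $\|r_t\| \to 0$. Because $\rho(0)$ evaluated on a product of creators and annihilators is multilinear and bounded in its arguments with $\|a^\#(\cdot)\| = \|\cdot\|$, and because $\|(S\otimes U\oplus\one)^{-t}(\Omega_U^-)^* v\| = \|(\Omega_U^-)^* v\| \le \|v\|$ is bounded in~$t$, I may replace each~$\fU^{-t} v_i$ by $z_i^{(t)} := (S\otimes U\oplus\one)^{-t}(\Omega_U^-)^* v_i$ at the cost of an error that tends to~$0$.

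Next I would exploit the structure of~$\Omega_U^-$. By~\eqref{eq:scat} its last column vanishes, so $(\Omega_U^-)^*$ maps~$\Htot$ into~$\ell^2(\zz)\otimes\Hb \oplus 0$; hence each $z_i^{(t)}$ lies in that subspace and, through~$\mathsf{E}$, satisfies $\mathsf{E} a^\#(z_i^{(t)}) \mathsf{E}^{-1} = a^\#\bigl((S\otimes U)^{-t}(\Omega_U^-)^* v_i\bigr) \otimes \one$. Therefore
\[
  \rho(0)\bigl[a^{\#_1}(z_1^{(t)})\dotsb a^{\#_k}(z_k^{(t)})\bigr] = \omega_{\T}\bigl[a^{\#_1}\bigl((S\otimes U)^{-t}(\Omega_U^-)^*v_1\bigr)\dotsb a^{\#_k}\bigl((S\otimes U)^{-t}(\Omega_U^-)^*v_k\bigr)\bigr],
\]
and in particular the arbitrary even state~$\rho\Sa$ has dropped out. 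Since $[\T, S\otimes U] = 0$ by Assumption~(IC), $\omega_{\T}$ is invariant under the Bogolyubov automorphism induced by the unitary~$(S\otimes U)^{-t}$, so the right-hand side is independent of~$t$ and equals $\omega_{\T}\bigl[a^{\#_1}((\Omega_U^-)^*v_1)\dotsb a^{\#_k}((\Omega_U^-)^*v_k)\bigr]$. Combined with the error estimate of the first step, this shows the limit defining~$\rho(\infty)$ exists, with $\rho(\infty)\bigl[a^{\#_1}(v_1)\dotsb a^{\#_k}(v_k)\bigr] = \omega_{\T}\bigl[a^{\#_1}((\Omega_U^-)^*v_1)\dotsb a^{\#_k}((\Omega_U^-)^*v_k)\bigr]$.

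Finally I would identify the limit. Specialising to a normal-ordered monomial and using that $\omega_{\T}$ is gauge-invariant quasifree with symbol~$\T$,
\[
  \rho(\infty)\bigl[a^*(v_n)\dotsb a^*(v_1)a(u_1)\dotsb a(u_m)\bigr] = \delta_{n,m}\det\bigl[\braket{(\Omega_U^-)^*u_i, \T(\Omega_U^-)^* v_j}\bigr] = \delta_{n,m}\det\bigl[\braket{u_i, \Omega_U^-(\T\oplus 0)(\Omega_U^-)^* v_j}\bigr],
\]
where the last equality again uses that $(\Omega_U^-)^*$ takes values in~$\ell^2(\zz)\otimes\Hb \oplus 0$, on which~$\T\oplus 0$ acts as~$\T$. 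Since~$\Omega_U^-$ is a contraction (a weak limit of unitaries), $\T\tot^\infty := \Omega_U^-(\T\oplus 0)(\Omega_U^-)^*$ satisfies $0 \le \T\tot^\infty \le \one$, so the displayed formula is exactly the defining relation of the gauge-invariant quasifree state with symbol~$\T\tot^\infty$; as this relation pins down a state on the dense span of normal-ordered monomials, $\rho(\infty)$ coincides with that state.

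The only genuinely nontrivial input is Proposition~\ref{prop:scat} (already granted), which upgrades the weak convergence~\eqref{eq:scat-wlim} to strong convergence of the adjoints; once that is in hand the argument is bookkeeping resting on two features of~$\Omega_U^-$: it annihilates the sample degrees of freedom, so the arbitrary even initial state~$\rho\Sa$ never enters the limit, and it intertwines the free dynamics with the bulk shift, so that the translation invariance of~$\omega_{\T}$ removes the remaining dependence on~$t$.
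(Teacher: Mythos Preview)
Your argument is correct and follows essentially the same strategy as the paper's: reduce to monomials, invoke the strong convergence of $(\fU^t(S\otimes U\oplus\one)^{-t})^*$ from Proposition~\ref{prop:scat}, observe that the range of $(\Omega_U^-)^*$ lies in the environment subspace so that~$\rho\Sa$ drops out, and use the $(S\otimes U)$-invariance of~$\omega_\T$. The paper orders these steps slightly differently---it inserts the free dynamics first and closes with a weak-convergence argument for the finite-time symbol after reducing to~$\Delta=0$---but the substance is the same and your ordering is, if anything, marginally more direct.
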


\begin{proof}
  To prove the proposition it suffices to show that
  \[
    \lim_{t\to\infty}\rho(0)\left[\Gamma(\fU^*)^t \Big(\prod_{h=1}^N a(V_h)\Big)^* \Big(\prod_{h'=1}^{N'} a(V'_{h'})\Big) \Gamma(\fU)^t \right] = \delta_{N,N'} \det [\braket{V'_{h'}, \T\tot^\infty V_{h}}]_{h,h'=1}^N
  \]
  for an arbitrary choice of~$N,N' \geq 0$ and~$V_1, \dotsc, V_N, V'_1, \dotsc, V'_{N'} \in \H\tot$. Because~$\T$ commutes with~$S \otimes U$, we have
  \[
    \rho(0)[A] =   \rho(0)\big[\Gamma(S \otimes U \oplus \one)^t A \Gamma(S^* \otimes U^* \oplus \one)^t\big]
  \]
  for all~$A \in \CAR(\H\tot)$ and the Bogolyubov relation gives that the identity to be shown is equivalent to
  \begin{equation}
  \label{eq:int-st-Bog}
    \lim_{t\to\infty}\rho(0)\left[ \Big(\prod_{h=1}^N a((\Omega_U^{(t)})^*V_h)\Big)^* \prod_{h'=1}^{N'} a((\Omega_U^{(t)})^*V'_{h'}) \right] = \delta_{N,N'} \det [\braket{V'_{h'}, \T\tot^\infty V_{h}}]_{h,h'=1}^N,
  \end{equation}
	where
	\[
		\Omega_U^{(t)} := \fU^t(S\otimes U \oplus \one)^{-t}.
	\]
  First note that
  \[
    \lim_{t \to \infty} \| (\Omega_U^{(t)})^*V_h - (\one \otimes \one \oplus 0)(\Omega_U^{(t)})^*V_h \| = 0
  \]
  for each $h = 1,\dotsc, N$ by Proposition~\ref{prop:scat}, and similarly with primes. Hence, by continuity of the fermionic creation and annihilation operators as functions from~$(\H\tot,\|\,\cdot\,\|)$ to~$(\mathcal{B}(\Ga(\H\tot)),\|\,\cdot\,\|)$, the limit in~\eqref{eq:int-st-Bog} will exist if and only if the limit
  \begin{align*}
    \lim_{t\to\infty}\omega_{\T}\left[ \Big(\prod_{h=1}^N a((\one \otimes \one \oplus 0)(\Omega_U^{(t)})^*V_h)\Big)^* \prod_{h'=1}^{N'} a((\one \otimes \one \oplus 0)(\Omega_U^{(t)})^*V'_{h'}) \right]
  \end{align*}
  exists, in which case they will coincide. In particular, we may as well assume that the initial state~$\rho\Sa(0)$ is quasifree with vanishing symbol.

  Under this extra assumption, the state~$\rho(t)$ is quasifree for all~$t \in \nn$ and has symbol $\T\tot(t)$:
  \begin{equation*}
    \rho(0)\left[ \Big(\prod_{h=1}^N a((\Omega_U^{(t)})^*V_h)\Big)^* \prod_{h'=1}^{N'} a((\Omega_U^{(t)})^*V'_{h'}) \right] = \delta_{N,N'} \det [\braket{V'_{h'}, \T\tot(t) V_{h}}]_{h,h'=1}^N,
  \end{equation*}
  where
  \begin{align*}
    \T\tot(t) &= \Omega_U^{(t)}(\T \oplus 0)(\Omega_U^{(t)})^*.
  \end{align*}
  Therefore, we will be done if we can show that~$\T\tot(t)$ converges weakly to the proposed limit~$\T\tot^\infty$. But this is easily deduced from Proposition~\ref{prop:scat}.
\end{proof}

We are now in a position to get the symbol of the restriction of the state to the sample, i.e.\
\begin{equation}
\label{eq:def-D-infty}
  \Delta^\infty := (0 \oplus \one)\T\tot^\infty(0 \oplus \one).
\end{equation}

\begin{proposition}\label{prop:Delta}
  Suppose that Assumptions~\textnormal{(Sp)} {and \textnormal{(IC)}} hold and let
  \[
    \Psi(X) := \sum_{k = 0}^\infty M^k X (M^*)^k
  \]
  for~$X: \Hs \to \Hs$. Then,
  \begin{align*}
    \Delta^\infty = \Psi(G+G^*),
  \end{align*}
  where
  \[
    G := \tfrac 12 Z\SaBl \Xi_0 Z\SaBl^* + \sum_{l = 1}^\infty M^l Z\SaBl \Xi_l Z\SaBl^*.
  \]
\end{proposition}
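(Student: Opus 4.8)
The plan is to exploit the block structure of $\Omega_U^-$ provided by~\eqref{eq:scat}. With respect to the decomposition $\Htot=(\ell^2(\zz)\otimes\Hb)\oplus\Hs$, the second block-column of $\Omega_U^-$ vanishes, so writing
\[
  \Omega_U^- = \begin{pmatrix} A & 0 \\ B & 0 \end{pmatrix},
  \qquad
  B := \sum_{m\geq 0}\delta^*_{-m-1}\otimes M^m Z\SaBl U^{-m-1},
\]
a $2\times 2$ block multiplication shows that the lower-right (that is, $\Hs$-to-$\Hs$) block of $\T\tot^\infty = \Omega_U^-(\T\oplus 0)(\Omega_U^-)^*$ is $B\T B^*$, so $\Delta^\infty = B\T B^*$ by~\eqref{eq:def-D-infty}. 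First I would record $B^* = \sum_{m\geq 0}\delta_{-m-1}\otimes U^{m+1}Z\SaBl^*(M^*)^m$ and, using $(\delta^*_n\otimes\one)\T(\delta_m\otimes\one)=\T_{n,m}$, obtain
\[
  \Delta^\infty = \sum_{m,m'\geq 0} M^m Z\SaBl\,\big(U^{-m-1}\,\T_{-m-1,-m'-1}\,U^{m'+1}\big)\,Z\SaBl^*(M^*)^{m'}.
\]

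Next I would apply the covariance relation~\eqref{eq:blocks-T}: it gives $\T_{-m-1,-m'-1}=U^{m+1}\T_{0,m-m'}U^{-m-1}$, so the bracketed factor collapses to $\T_{0,m-m'}U^{-(m-m')}=\Xi_{m-m'}$, leaving
\[
  \Delta^\infty = \sum_{m,m'\geq 0} M^m Z\SaBl\,\Xi_{m-m'}\,Z\SaBl^*(M^*)^{m'}.
\]
To identify this with $\Psi(G+G^*)$ I would split the sum according to $m=m'$, $m>m'$ and $m<m'$; setting $k=m=m'$ in the first, $(k,l)=(m',m-m')$ in the second and $(k,l)=(m,m'-m)$ in the third (with $l\geq 1$ in both off-diagonal cases), and using the selfadjointness of $\Xi$ in the form $\Xi_{-l}=\Xi_l^*$, these three families are exactly what one obtains by expanding $\Psi(G+G^*)=\sum_{k\geq 0}M^k(G+G^*)(M^*)^k$ with $G+G^* = Z\SaBl\Xi_0 Z\SaBl^* + \sum_{l\geq 1}\big(M^l Z\SaBl\Xi_l Z\SaBl^* + Z\SaBl\Xi_{-l}Z\SaBl^*(M^*)^l\big)$. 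In particular, the factor $\tfrac12$ in the definition of $G$ is precisely what makes $G+G^*$ carry the diagonal term $Z\SaBl\Xi_0 Z\SaBl^*$ with coefficient one, matching the $m=m'$ family, which completes the identification.

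I expect no conceptual difficulty here; the only thing genuinely requiring care is convergence. I would check that the series defining $B$, $G$ and $\Psi(\,\cdot\,)$, as well as the double series for $\Delta^\infty$, all converge absolutely in operator norm\,---\,which is what licenses the block multiplication and the rearrangement into the three families\,---\,and this follows from the geometric bound $\|M^m\|\leq Cr^m$ for some $r\in(0,1)$, guaranteed by Assumption~(Sp), together with $\|\Xi_n\|=\|\T_{0,n}\|\leq\|\T\|\leq 1$. Note that only boundedness of the blocks is needed here, not the stronger summability $\sum_n|n|\,\|\T_{0,n}\|<\infty$ from Assumption~(IC), which will only matter for the flux formulas. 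Beyond convergence, the substantive inputs are just the covariance identity and the triangular reindexing, and the one bookkeeping point to get right is the $\tfrac12$ on the $\Xi_0$ term, present exactly so the diagonal is not double counted.
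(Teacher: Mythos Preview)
Your proposal is correct and follows essentially the same route as the paper's own proof: both identify the lower-left block $B=(0\oplus\one)\Omega_U^-(\one\oplus 0)$, compute $\Delta^\infty=B\T B^*$ as the double sum $\sum_{m,m'\geq 0}M^mZ\SaBl\Xi_{m-m'}Z\SaBl^*(M^*)^{m'}$ via the covariance relation, and then split into the three cases $m=m'$, $m>m'$, $m<m'$ with the reindexing $l=|m-m'|$. Your version is simply more explicit about the convergence justification and the role of the $\tfrac12$ coefficient, and you correctly observe that only the spectral bound from~(Sp) and boundedness of~$\T$ are needed here, not the full summability clause of~(IC).
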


\begin{proof}
  Since,
  \[
    (0 \oplus \one) \Omega_{U}^- (\one \oplus 0) = \sum_{m=0}^\infty \delta_{-m-1}^* { \otimes} M^m Z\SaBl U^{-m-1}
  \]
  by Proposition~\ref{prop:scat}, \eqref{eq:def-D-infty} gives
  \begin{align*}
    \Delta^\infty &= \sum_{m,n=0}^\infty  M^m Z\SaBl U^{-m-1} \T_{-m-1,-n-1} U^{n+1} Z\SaBl^* (M^*)^n \\
      &= \sum_{m,n=0}^\infty M^m Z\SaBl \Xi_{m-n} Z\SaBl^*(M^*)^n
  \end{align*}
  using $\T_{-m-1,-n-1} = U^{m+1}\Xi_{m-n}U^{-n-1}$. Splitting the contributions with $m-n > 0$, $m-n = 0$ and~$m-n < 0$ and reindexing with $l=|m-n|$ gives the proposed formula.
\end{proof}

 \begin{remark}\label{rem33bis}
   If Assumption~\textnormal{(IC+)} holds, the symbol of the restriction to the sample at time $t$ reads
   \[
    \Delta^t = M^t \Delta (M^*)^t + \sum_{m=0}^{t-1}\sum_{n=0}^{t-1}  M^m Z \SaBl \T_{0,m-n} U^{n-m} Z\SaBl^* (M^*)^n.
   \]
 \end{remark}

We now turn our attention to the block
\[
  \T\En^\infty := (\one\otimes\one \oplus 0) \T\tot^\infty (\one\otimes\one \oplus 0)
\]
of~$\T\tot$ corresponding to the environment. As a direct consequence of Proposition~\ref{prop:scat}, we have the following corollary.
\begin{corollary}
  Suppose that Assumption~\textnormal{(Sp)} holds and let $\rho(0)$ be an initial state on~$\Ga(\Htot)$ as in Assumption~\textnormal{(IC)}. Then,
  \begin{equation}
  \label{eq:blocks-T-infty}
    \delta^*_n \T^\infty\En \delta_m
    =
    \begin{cases}
      U^{-n} \left( \sum_{l,l' \geq 0} Y_l \Xi_{l-l'+m-n} Y_{l'}^* \right) U^m
      & n < 0, m < 0,  \\
      U^{-n} \left( \sum_{l \geq 0} Y_l \Xi_{l+m-n} \right) U^m
      & n < 0, m \geq 0, \\
      U^{-n} \Xi_{m-n} U^m
      & n \geq 0, m \geq 0. \\
    \end{cases}
  \end{equation}
  In particular, $\delta^*_n \T^\infty\En \delta_m = \delta^*_n \T \delta_m$ for~$n,m \geq 0$.
\end{corollary}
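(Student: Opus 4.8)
The plan is to compute the environment--environment block of~$\T\tot^\infty$ directly, using the formula $\T\tot^\infty = \Omega_U^-(\T\oplus 0)(\Omega_U^-)^*$ from the preceding proposition together with the explicit expression~\eqref{eq:scat} for~$\Omega_U^-$. Writing $\one\En := \one\otimes\one\oplus 0$ for the orthogonal projection onto $\ell^2(\zz)\otimes\Hb$ and noting that $\T\oplus 0 = \one\En(\T\oplus 0)\one\En$ is block-diagonal with vanishing sample block, one inserts $\one = \one\En + (0\oplus\one)$ on either side of $\T\oplus 0$ and only the $\one\En$ term survives, so that
\[
  \T\En^\infty = \one\En\,\Omega_U^-\,(\T\oplus 0)\,(\Omega_U^-)^*\,\one\En = A\,\T\, A^*,\qquad A := \one\En\,\Omega_U^-\,\one\En,
\]
where in the last equality $\T\oplus 0$ and the environment--environment block~$A$ of~$\Omega_U^-$ are identified with operators on~$\ell^2(\zz)\otimes\Hb$. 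From~\eqref{eq:scat},
\[
  A = \sum_{n\geq 0}\delta_n\delta_n^*\otimes\one \;+\; \sum_{m\geq 0}\sum_{l\geq 1}\delta_{-l}\delta_{-m-l}^*\otimes U^l Y_m U^{-m-l},
\]
from which, reindexing by $l:=n-k$, one reads off that $(\delta_n^*\otimes\one)A(\delta_k\otimes\one)$ equals $\one$ when $n=k\geq 0$, equals $U^{-n}Y_{n-k}U^{k}$ when $k\leq n\leq -1$ (recall $Y_0=C$), and vanishes otherwise; the corresponding blocks of~$A^*$ follow by taking adjoints.

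Next I would substitute $\T = \sum_{k,k'\in\zz}\delta_k\delta_{k'}^*\otimes \T_{k,k'}$ into $\T\En^\infty = A\T A^*$ and evaluate $(\delta_n^*\otimes\one)\T\En^\infty(\delta_m\otimes\one)$ by cases according to the signs of~$n$ and~$m$. In each case the internal sums over~$k,k'$ collapse to sums over $l\geq 0$ (from~$A$) and $l'\geq 0$ (from~$A^*$), and the chains of powers of~$U$ telescope once one rewrites the covariance relation~\eqref{eq:blocks-T} as $\T_{p,q}=U^{-p}\Xi_{q-p}U^{q}$ (equivalently $\Xi_r = \T_{0,r}U^{-r}$), which is designed precisely to cancel the factors $U^{n-l}$, $U^{l-n}$, $U^{m'-m}$, $U^{m-m'}$ against one another. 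This produces $\T_{n,m}=U^{-n}\Xi_{m-n}U^m$ for $n,m\geq 0$, the expression $U^{-n}\big(\sum_{l\geq 0}Y_l\Xi_{l+m-n}\big)U^m$ for $n\leq -1\leq m$, and $U^{-n}\big(\sum_{l,l'\geq 0}Y_l\Xi_{l-l'+m-n}Y_{l'}^*\big)U^m$ for $n,m\leq -1$, as claimed; the remaining case $n\geq 0>m$ is the adjoint of the second one and follows from self-adjointness of~$\T\En^\infty$. The ``in particular'' statement is just the case $n,m\geq 0$ together with the block definition~\eqref{eq:pre-blocks-T}.

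There is no real conceptual obstacle: the only care needed is in the index bookkeeping (the ranges of~$l$ and~$l'$ after reindexation) and in checking absolute convergence of the (double) series, which is immediate under Assumption~(Sp) since the spectral radius of~$M$ is strictly less than one\,---\,so $\|M^k\|$ decays geometrically and $\sum_{l\geq 0}\|Y_l\|<\infty$\,---\,while $\|\Xi_r\| = \|\T_{0,r}U^{-r}\|\leq\|\T\|\leq 1$. These bounds justify all rearrangements of the sums.
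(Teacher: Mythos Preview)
Your proof is correct and is precisely the direct computation that the paper leaves implicit when it labels the corollary ``a direct consequence of Proposition~\ref{prop:scat}'': you take the environment--environment block of $\Omega_U^-(\T\oplus 0)(\Omega_U^-)^*$ using the explicit formula~\eqref{eq:scat}, read off the $(\delta_n,\delta_k)$-blocks of the upper-left block of~$\Omega_U^-$, and then contract against~$\T_{k,k'}=U^{-k}\Xi_{k'-k}U^{k'}$, with the geometric decay of~$\|M^k\|$ under Assumption~(Sp) handling all convergence issues. One cosmetic remark: the paper already uses the symbol~$A$ for the coupling operator in Section~\ref{sec:small}, so if you keep this write-up you may want another letter for the environment block of~$\Omega_U^-$.
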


Note that the asymptotic symbol $\T\En^\infty$ need not commute with~$S \otimes U$; blocks corresponding to positions having already interacted (negative indices) are given a different expression than those corresponding to position which have not yet interacted. This is inherent to our choice of dynamics in the environment, which prevents the effects of the interaction taking place at the site zero to affect the state at locations that have not yet been in contact with the sample. We will come back to this point in the next subsection.

\subsection{Fourier representation}
\label{sec:Fourier}

Many of the expressions call for a representation in Fourier space that we will take advantage of in what follows. We introduce the unitary map
$
  \mathcal{F} : \ell^2(\zz) \otimes \Hb \to L^2([0,2\pi];\Hb)
$
as follows: for $\psi = \sum_{l \in \zz} \delta_l \otimes \psi_l$ {with $\sum_{l \in \zz} \|\psi_l\|^2 < \infty$} and $\theta \in [0,2\pi]$, we set
\[
  (\mathcal{F}\psi)(\theta) := \sum_{l \in \zz} \Exp{{ - \ii l \theta}} \psi_l.
\]
In practice, we will more often use the notation
\[
  \hat\psi := \mathcal{F}\psi.
\]
Let $R : \ell^2(\zz) \otimes \Hb \to \ell^2(\zz) \otimes \Hb$ have the form
\begin{equation}
\label{eq:form-tb-Fourier}
  R = \sum_{n,m \in \zz} \delta_n \delta_m^* \otimes R_{m-n}
\end{equation}
for some norm-summable sequence~$(R_l)_{l \in \zz}$ of operators on~$\Hb$\,---\,hereafter referred to as Fourier coefficients\,---, so that~$\|R\| \leq \sum_{l \in \zz} \|R_l\|$. Then,
\[
  (\mathcal{F}R\psi)(\theta) = \big((\mathcal{F}R\mathcal{F}^{-1})(\mathcal{F}\psi)\big)(\theta)
  = \hat{R}(\theta) \hat\psi(\theta),
\]
where $\hat{R} : L^2([0,2\pi];\Hb) \to L^2([0,2\pi];\Hb)$ is the multiplication operator by
\[
  \hat{R}(\theta) := \sum_{l \in \zz} \Exp{\ii l \theta} R_l.
\]
Also note that~$R$ is selfadjoint if and only if~$R_{-l} = R_l^*$  for each~$l \in \zz$, in which case~$\hat{R}(\theta)$ is { selfadjoint} for all~$\theta\in[0,2\pi]$.

We will make use of this representation for~$\Xi$:
\begin{align*}
  \hat{\Xi}(\theta) &= \sum_{l \in \zz} \Exp{\ii l \theta} \Xi_l.
\end{align*}
Recall That~$\Xi$ is of the form~\eqref{eq:form-tb-Fourier} by construction (under Assumption~(IC)), with blocks
\begin{equation}
\label{eq:blocks-Xi}
	\Xi_{m-n} = U^n T_{n,m}U^{-m}.
\end{equation}
Then, with
\[
	\hat{\fY}(\theta) := \sum_{l \geq 0} \Exp{-\ii l \theta} Y_l
\]
(note the sign of~$\ii l \theta$), we set
\[
	\hat{\Xi}^\infty(\theta) := \hat{\fY}(\theta) \hat{\Xi}(\theta) \hat{\fY}(\theta)^*.
\]
Equivalently, $\hat{\Xi}^\infty(\theta)$ is the Fourier representation of an operator~$\Xi^\infty$ of the form~\eqref{eq:form-tb-Fourier} with blocks
\begin{equation}
\label{eq:blocks-Xi-infty}
	\Xi^\infty_{m} = \sum_{l,l'\geq 0} Y_l \Xi_{l-l'+m} Y_{l'}^*
\end{equation}
for all $m \in \zz$. To see this, integrate $\hat{\fY}(\theta) \hat{\Xi}(\theta) \hat{\fY}(\theta)^*$ against~$\tfrac{1}{2\pi}\Exp{-\ii m \theta}$ to find the~$m$-th block.

Note that combining~\eqref{eq:blocks-T-infty} and~\eqref{eq:blocks-Xi-infty} gives
\begin{equation}
\label{eq:blocks-Xi-infty-diff}
	\Xi^\infty_{m-n} = U^n \delta_n^* \T\En^\infty \delta_m U^{-m}
\end{equation}
if~$n < 0$ and~$m < 0$. In other words, $\Xi^\infty$ is translation invariant, but as far as blocks that have been affected by the interaction with the sample, $\Xi^\infty$ is to~$\T\En^\infty$
as~$\Xi$ is to~$\T$; compare~\eqref{eq:blocks-Xi-infty-diff} to~\eqref{eq:blocks-Xi}. Note that $\T\En^\infty =  \T$ implies~$\Xi^\infty = \Xi$; the converse implication fails.

\begin{lemma}
  The operator~$\fY_{U}$ is unitary.
\end{lemma}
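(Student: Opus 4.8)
The plan is to peel off the dependence on~$U$, pass to the Fourier picture, and reduce everything to a pointwise statement about an operator-valued transfer function on the unit circle, which one then checks by a short algebraic computation with the colligation relations~\eqref{eq:zunitary}. First I would invoke the factorisation~\eqref{eq:fY-wo-U}, which exhibits $\fY_{U}$ as the conjugate of $\fY_{\one}$ by the unitary $\sum_{n}P_n \otimes U^n$ on $\ell^2(\zz)\otimes\Hb$; hence it suffices to show $\fY_{\one}$ is unitary. From the formula for $\fY_U$ in the preceding lemma with $U=\one$, the operator $\fY_{\one}$ is of the form~\eqref{eq:form-tb-Fourier} with Fourier coefficients $R_l = Y_{-l}$ for $l\leq 0$ and $R_l = 0$ for $l>0$; this sequence is norm-summable since, by~\eqref{eq:def-Y0}--\eqref{eq:def-Ym} and Assumption~(Sp), $\sum_{m\geq 0}\|Y_m\| \leq 1 + \|Z\BlSa\|\,\|Z\SaBl\|\sum_{k\geq 0}\|M^k\| < \infty$. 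Therefore $\mathcal{F}\fY_{\one}\mathcal{F}^{-1}$ is multiplication by the continuous $\mathcal{B}(\Hb)$-valued function $\theta\mapsto\hat{\fY}(\theta)$, and since $\Hb$ is finite dimensional it is enough to prove that $\hat{\fY}(\theta)$ is a unitary operator on~$\Hb$ for each $\theta\in[0,2\pi]$.

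Next I would write $z := \Exp{-\ii\theta}$, note $|z|=1$, and use Assumption~(Sp) to sum the geometric series:
\[
  \hat{\fY}(\theta) = C + \sum_{m\geq 1} z^m Z\BlSa M^{m-1} Z\SaBl = C + z\,Z\BlSa (\one - zM)^{-1} Z\SaBl .
\]
By finite dimensionality it suffices to verify $\hat{\fY}(\theta)^*\hat{\fY}(\theta) = \one$. Setting $R := (\one - zM)^{-1}$ (so $R^* = (\one-\bar z M^*)^{-1}$ and $zMR = R-\one$, $\bar z M^* R^* = R^*-\one$), I would expand the product $\hat{\fY}(\theta)^*\hat{\fY}(\theta)$ into its four pieces and substitute, from~\eqref{eq:zunitary}, $C^*C = \one - Z\SaBl^*Z\SaBl$, $C^*Z\BlSa = -Z\SaBl^* M$, $Z\BlSa^*C = -M^*Z\SaBl$, $Z\BlSa^*Z\BlSa = \one - M^*M$. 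After collecting terms and using the resolvent identities above, all the $Z\SaBl^*Z\SaBl$-, $Z\SaBl^*RZ\SaBl$- and $Z\SaBl^*R^*Z\SaBl$-contributions combine into
\[
  \hat{\fY}(\theta)^*\hat{\fY}(\theta) = \one + Z\SaBl^*\big( R^*(\one - M^*M)R - R^* - R + \one \big) Z\SaBl ,
\]
so that the proof is completed by the "boundary losslessness" identity
\[
  R^*(\one - M^*M)R = R^* + R - \one \qquad (|z| = 1),
\]
which one checks by left-multiplying by $R^{*-1} = \one - \bar z M^*$, right-multiplying by $R^{-1} = \one - zM$, and simplifying with $\bar z z = 1$. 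Finally, $\hat{\fY}(\theta)\hat{\fY}(\theta)^* = \one$ follows either from finite dimensionality or, verbatim, from the three relations in~\eqref{eq:zunitary} coming from $ZZ^* = \one$; this gives $\hat{\fY}(\theta)$ unitary, hence $\fY_{\one}$, hence $\fY_U$.

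The only genuine work, and what I expect to be the main obstacle in writing out the details, is the bookkeeping in the expansion of $\hat{\fY}(\theta)^*\hat{\fY}(\theta)$ together with the verification of $R^*(\one-M^*M)R = R^*+R-\one$ on $|z|=1$ — this identity is exactly the statement that the transfer function of the contractive colligation~$Z$ is lossless on the unit circle, and it is the point where the constraint $|z|=1$ (i.e.\ that we are on the absolutely continuous spectrum of $S\otimes U$) is used. Everything else is formal: the reduction via~\eqref{eq:fY-wo-U}, the Fourier representation, and the appeal to $\dim\Hb<\infty$. I note in passing that one could instead try to derive unitarity from isometry of the Møller operators $\Omega_U^{\pm}$ (Proposition~\ref{prop:scat}) plus asymptotic completeness, but completeness is not transparent here, which is why the direct algebraic route above is preferable.
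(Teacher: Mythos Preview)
Your proposal is correct and follows essentially the same route as the paper: reduce to $U=\one$ via~\eqref{eq:fY-wo-U}, pass to the Fourier picture, sum the geometric series to write $\hat{\fY}(\theta)$ as a Schur-complement--type transfer function of the unitary colligation~$Z$, and then verify unitarity pointwise using the relations~\eqref{eq:zunitary} together with a resolvent identity. The paper packages the last algebraic step as a separate lemma (checking $s(\eta)s(\eta)^*=\one$ with the identity $d(d-\Exp{\ii\eta})^{-1}=\one+\Exp{\ii\eta}(d-\Exp{\ii\eta})^{-1}$), whereas you check $\hat{\fY}^*\hat{\fY}=\one$ via the equivalent identity $R^*(\one-M^*M)R=R^*+R-\one$ on $|z|=1$; these are purely cosmetic differences.
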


\begin{proof}
  In view of~\eqref{eq:fY-wo-U} and \eqref{eq:def-Xi}, it suffices to prove the lemma with~$U = \one$. Let
  \[
    \hat{\fY}(\theta) := \sum_{l \geq 0} \Exp{-\ii l \theta} Y_l
  \]
  be as in the previous { discussion}; it is clear that it suffices to show that~$\hat{\fY}(\theta)$  { is unitary} for all~$\theta \in \rr$.
  Given the definitions~$Y_0 := C$ and~$Y_l := Z\BlSa M^{l-1} Z\SaBl$ for~$l \geq 1$, the operator~$\hat{\fY}(\theta)$ can be expressed in terms of resolvents of~$M$:
  \begin{equation}
    \hat{\fY}(\theta) = C + \sum_{l \geq 0} \Exp{-\ii \theta} \Exp{-\ii l \theta} Z\BlSa M^l Z\SaBl = C - Z\BlSa (M - \Exp{\ii \theta})^{-1} Z\SaBl
  \end{equation}
  an expression which is well defined for all~$\theta \in \rr$ under Assumption~(Sp). The operators involved correspond to the block representation~\eqref{eq:def-Z} of the unitary operator~$Z$. Unitarity of~$\hat{\fY}$ is given by the next lemma and the present lemma follows.
\end{proof}

\begin{lemma}
  Let~$Z$ be a unitary operator with block decomposition~$Z = (\begin{smallmatrix} a & b \\ c & d \end{smallmatrix})$ with respect to an orthogonal direct sum decomposition of a finite-dimensional Hilbert space. Then, for all~$\eta \in \rr$, the bounded operator $s(\eta) := a - b(d-\Exp{\ii \eta})^{-1}c$ is a unitary operator on the first subspace in the decomposition.
\end{lemma}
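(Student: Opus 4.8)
The plan is to recognise~$s(\eta)$ as the operator that survives when one closes, on the second summand of the orthogonal decomposition~$\mathcal{H}_1\oplus\mathcal{H}_2$, a feedback loop whose ``gain'' is the unimodular scalar~$\Exp{\ii\eta}$; because a unimodular loop preserves norms, the resulting map is automatically an isometry, and an isometry of a finite-dimensional space is unitary. Write the blocks as $a\colon\mathcal{H}_1\to\mathcal{H}_1$, $b\colon\mathcal{H}_2\to\mathcal{H}_1$, $c\colon\mathcal{H}_1\to\mathcal{H}_2$, $d\colon\mathcal{H}_2\to\mathcal{H}_2$, and fix~$\eta\in\rr$ for which $\Exp{\ii\eta}\notin\operatorname{sp}(d)$, which is exactly what makes $s(\eta)=a-b(d-\Exp{\ii\eta})^{-1}c$ well defined (in the application to $\hat{\fY}(\theta)$ this holds for every~$\theta$ thanks to Assumption~(Sp)).

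Concretely, I would argue as follows. Given~$x\in\mathcal{H}_1$, put $y:=-(d-\Exp{\ii\eta})^{-1}cx\in\mathcal{H}_2$; this is the unique vector with $cx+dy=\Exp{\ii\eta}y$, i.e.\ the unique fixed point of~$\zeta\mapsto\Exp{-\ii\eta}(cx+d\zeta)$. Using the elementary identity $\one-d(d-\Exp{\ii\eta})^{-1}=-\Exp{\ii\eta}(d-\Exp{\ii\eta})^{-1}$ one checks in one line that
\[
  Z(x\oplus y)=(ax+by)\oplus(cx+dy)=\bigl(s(\eta)x\bigr)\oplus\bigl(\Exp{\ii\eta}\,y\bigr).
\]
Since~$Z$ is unitary and~$|\Exp{\ii\eta}|=1$, taking squared norms gives $\|x\|^2+\|y\|^2=\|s(\eta)x\|^2+\|y\|^2$, hence $\|s(\eta)x\|=\|x\|$ for every~$x$. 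Thus~$s(\eta)$ is an isometry of the finite-dimensional space~$\mathcal{H}_1$, and therefore unitary. (If one prefers a purely algebraic route, one expands $s(\eta)^*s(\eta)$, substitutes the relations from~$Z^*Z=\one$ to eliminate~$a$ and~$b$, and collects the result into $\one+c^*Bc$ with $B:=-\one+dR+R^*d^*+R^*R-R^*d^*dR$ and $R:=(d-\Exp{\ii\eta})^{-1}$; the identities $dR=\one+\Exp{\ii\eta}R$ and $R^*d^*=\one+\Exp{-\ii\eta}R^*$ coming from $R^{-1}=d-\Exp{\ii\eta}$, together with $|\Exp{\ii\eta}|=1$, give $B=0$, and $s(\eta)s(\eta)^*=\one$ follows symmetrically from $ZZ^*=\one$.)

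I do not anticipate a real obstacle. The two points that deserve a word of care are: (i) the implicit invertibility of $d-\Exp{\ii\eta}$ — harmless in our setting, and automatic under Assumption~(Sp) whenever the lemma is invoked — which is precisely the condition guaranteeing that the fixed-point vector~$y$ above exists and is unique; and (ii) choosing that ansatz for~$y$, namely the one making the~$\mathcal{H}_2$-component of~$Z(x\oplus y)$ come back proportional to~$y$, so that the norm bookkeeping collapses. Everything else is a one-line manipulation.
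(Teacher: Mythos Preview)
Your proof is correct. The paper's own proof is the purely algebraic one you sketch in parentheses: it expands $s(\eta)s(\eta)^*$, plugs in the block relations from $ZZ^*=\one$, and simplifies using the resolvent identity $d(d-\Exp{\ii\eta})^{-1}=\one+\Exp{\ii\eta}(d-\Exp{\ii\eta})^{-1}$. Your primary argument takes a different, more geometric route: you exhibit, for each $x\in\mathcal{H}_1$, a vector $x\oplus y$ on which~$Z$ acts as $s(\eta)\oplus\Exp{\ii\eta}$, and read off the isometry property of~$s(\eta)$ directly from unitarity of~$Z$ and $|\Exp{\ii\eta}|=1$. This has the advantage of explaining \emph{why} the result is true (feedback through a unimodular phase preserves norms) rather than merely verifying it, and it avoids any bookkeeping of cross-terms; the paper's route, on the other hand, is entirely self-contained in a single line of algebra and needs no clever choice of ansatz. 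Both use the same resolvent identity at the key step, and your parenthetical alternative is essentially the paper's proof.
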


\begin{proof}
  Simply expand the expression~$s(\eta)s(\eta)^*$ and make use of the relation satisfied by~$a,b,c$ and~$d$ as a consequence of unitarity of~$Z$ as well as of the identity $d(d-\Exp{\ii \eta})^{-1} = \one + \Exp{\ii \eta} (d-{\Exp{\ii \eta}})^{-1}$.
\end{proof}

\section{Fluxes of particles}

We associate to a bounded selfadjoint operator~$X : \Hb \to \Hb$ the flux
\[
  \Phi_X = \dG(\fU^* (\one\otimes X \oplus 0)\fU - \one \otimes X \oplus 0).
\]
Using the block form of~$\fU$, {and assuming $[X,U]=0$}, one can check that
\[
  \fU^* (\one\otimes X \oplus 0)\fU - \one \otimes X \oplus 0 =
  \begin{pmatrix}
    { -P_0} \otimes X + P_0 \otimes C^* X C           & \delta_0 \otimes C^* X Z\BlSa \\
    \delta_0^* \otimes Z\BlSa^* X C  &  Z\BlSa^* X Z\BlSa
  \end{pmatrix}
\]
is trace class. The interest of such quantities is best seen through the case of particle fluxes between the different parts of the environment, hereafter referred to as \emph{reservoirs}, whose definition requires the structure in Assumption~(Bl). Such a structure is evidently present in the special case discussed in the introduction. Formally, the (infinite) number of fermions in the reservoir~$\ell^2(\zz) \otimes \Pi_k\Hb$ is given by the observable~$\dG(\one\otimes \Pi_k \oplus 0)$, { where $[\Pi_k,U]=0$,} and the number of fermions that enter this reservoir in one time step is given by the observable
\[
  \Phi_k \equiv \Phi_{\Pi_k} = \Gamma(\fU^*) \dG(\one\otimes \Pi_k \oplus 0) \Gamma(\fU) - \dG(\one\otimes \Pi_k \oplus 0)
\]
on~$\Ga(\Htot)$.

Back to the general observable~$X$ {such that $[X,U]=0$}, we know from Section~\ref{sec:gen-i} that the asymptotic state of the full system, denoted~$\rho(\infty)$, is quasifree with symbol~$\T\tot^\infty = \Omega_{U}^-(\T\oplus 0) (\Omega_{U}^-)^*$ if~$\rho(0)$ satisfies Assumption~(IC). Hence, the steady-state expectation value of the flux~$\Phi_X$, or current, is given by
\begin{equation}
\begin{split}
  J_X &:= \rho(\infty)[\Phi_X] \\
    &\phantom{:}= \tr_{\Htot} [\T\tot^\infty \{\fU^* (\one \otimes X \oplus 0) \fU - \one \otimes X \oplus 0\}].
\end{split}
\end{equation}
Using the decomposition
\begin{equation}
  \T\tot^\infty
  =
  \begin{pmatrix}
    \T^\infty\En & \T^\infty\EnSa \\
    \T^\infty\SaEn & \Delta^\infty
  \end{pmatrix},
\end{equation}
we get
\begin{equation}
\begin{split}
  J_X &= \tr_{\ell^2(\zz)\otimes\Hb}( \T^\infty\En(P_0 \otimes (CXC^* - X))+\T^\infty\EnSa (\delta_0^* \otimes Z^*\BlSa XC) )\\
    &\qquad + \tr_{\Hs}(\T^\infty\SaEn (\delta_0 \otimes C^*XZ\BlSa + \Delta^\infty Z^*\BlSa X Z\BlSa).
\end{split}
\end{equation}
This expression serves as a basis for obtaining more transparent expressions.

\begin{proposition}\label{prop:flux_X}
  Under Assumptions~\textnormal{(IC)} and~\textnormal{(Sp)}, if~$X: \H\Bl \to \H\Bl$ is a bounded observable  {such that $[X,U]=0$}, then
  \[
    J_X = \tr \bigg[  X \int_0^{2\pi} (\hat{\fY}(\theta) \hat{\Xi}(\theta) \hat{\fY}^*(\theta) - \hat{\Xi}(\theta)) \frac{\d\theta}{2\pi} \bigg].
  \]
\end{proposition}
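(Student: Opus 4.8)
The plan is to start from the expression $J_X = \tr_{\Htot}\bigl[\T\tot^\infty\,K\bigr]$ with $K := \fU^*(\one\otimes X\oplus 0)\fU - \one\otimes X\oplus 0$, which was already derived above, and to exploit the fact that $K$ is actually supported on a finite-dimensional subspace. Recall that $\fU$ factors as $\bigl((S\otimes U)\oplus\one\bigr)\tilde Z$, where $\tilde Z$ acts as $Z = \bigl(\begin{smallmatrix}C & Z\BlSa\\ Z\SaBl & M\end{smallmatrix}\bigr)$ on $\{\delta_0\}\otimes\Hb\oplus\Hs$ and as the identity on $\{\delta_n: n\neq0\}\otimes\Hb$. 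Since $[X,U]=0$, conjugating $\one\otimes X\oplus 0$ by $(S\otimes U)\oplus\one$ leaves it unchanged, so $K = \tilde Z^*(\one\otimes X\oplus 0)\tilde Z - (\one\otimes X\oplus 0)$; as both $\tilde Z$ and $\one\otimes X\oplus 0$ are block-diagonal for the decomposition $\bigl(\{\delta_n:n\neq0\}\otimes\Hb\bigr)\oplus\bigl(\{\delta_0\}\otimes\Hb\oplus\Hs\bigr)$ and $\tilde Z$ is the identity on the first summand, $K$ vanishes there and equals $Z^*(X\oplus 0)Z - (X\oplus 0)$ on $\{\delta_0\}\otimes\Hb\oplus\Hs\simeq\Hb\oplus\Hs$. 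Hence
\[
  J_X = \tr_{\Hb\oplus\Hs}\Bigl(\rho_0\,\bigl[Z^*(X\oplus 0)Z - X\oplus 0\bigr]\Bigr),
\]
where $\rho_0$ is the compression of $\T\tot^\infty$ to $\{\delta_0\}\otimes\Hb\oplus\Hs$; every trace from now on is over a finite-dimensional space, so cyclicity is unrestricted.

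Next I would compute the three relevant blocks of $\rho_0$ from $\T\tot^\infty = \Omega_U^-(\T\oplus 0)(\Omega_U^-)^*$ and the explicit form~\eqref{eq:scat} of $\Omega_U^-$. The diagonal blocks are $\delta_0^*\T\En^\infty\delta_0 = \T_{0,0} = \Xi_0$ and $\Delta^\infty = \sum_{m,n\geq0}M^m Z\SaBl\,\Xi_{m-n}\,Z\SaBl^*(M^*)^n$ (the quantity already met in the proof of Proposition~\ref{prop:Delta}), while the off-diagonal block works out, using $\T_{n,m} = U^{-n}\T_{0,m-n}U^n$ and $\Xi_n = \T_{0,n}U^{-n}$, to $B := \delta_0^*\T\EnSa^\infty = \sum_{m\geq0}\Xi_{-m-1}Z\SaBl^*(M^*)^m$, so that $B^* = \sum_{l\geq1}M^{l-1}Z\SaBl\,\Xi_l$. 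Expanding $\tr_{\Hb\oplus\Hs}(\rho_0[Z^*(X\oplus 0)Z - X\oplus 0])$ and moving $X$ to the front by cyclicity yields
\[
  J_X = \tr_{\Hb}\Bigl[X\bigl(C\Xi_0 C^* + C B Z\BlSa^* + Z\BlSa B^* C^* + Z\BlSa\Delta^\infty Z\BlSa^* - \Xi_0\bigr)\Bigr].
\]

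The crux is then to identify the bracket as $\Xi^\infty_0 - \Xi_0$, i.e.\ to verify the operator identity
\[
  C\Xi_0 C^* + C B Z\BlSa^* + Z\BlSa B^* C^* + Z\BlSa\Delta^\infty Z\BlSa^* = \sum_{l,l'\geq0} Y_l\,\Xi_{l-l'}\,Y_{l'}^* = \Xi^\infty_0 .
\]
I would do this by splitting the double sum according to whether each index vanishes or is positive, using $Y_0 = C$ and $Y_l = Z\BlSa M^{l-1}Z\SaBl$: the $(0,0)$ term gives $C\Xi_0 C^*$; the $(0,\geq1)$ and $(\geq1,0)$ terms collapse to $C B Z\BlSa^*$ and $Z\BlSa B^* C^*$ once one recognizes the partial sums defining $B$ and $B^*$; and the $(\geq1,\geq1)$ part, after reindexing both indices down by one, is exactly $Z\BlSa\Delta^\infty Z\BlSa^*$ by the formula for $\Delta^\infty$. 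Finally, since $\Xi^\infty_0$ and $\Xi_0$ are the zeroth Fourier coefficients of $\hat{\fY}(\theta)\hat{\Xi}(\theta)\hat{\fY}^*(\theta)$ and $\hat{\Xi}(\theta)$ respectively, one has $\Xi^\infty_0 - \Xi_0 = \int_0^{2\pi}\bigl(\hat{\fY}\hat{\Xi}\hat{\fY}^* - \hat{\Xi}\bigr)\frac{\d\theta}{2\pi}$, which gives the claimed formula.

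The main obstacle I anticipate is purely computational: getting the off-diagonal compression $B$ right from~\eqref{eq:scat} (one must carefully track the $U$-factors and the relation $\Xi_n = \T_{0,n}U^{-n}$), and then matching the resulting four-term expression to the double sum defining $\Xi^\infty_0$. Neither step is conceptually hard once the finite-rank reduction of $K$ is in hand; absolute convergence of every series involved is guaranteed by Assumption~(Sp), which forces $\|M^k\|$ to decay geometrically, together with the summability of the blocks of $\T$ built into Assumption~(IC). One should also note, as is already done in the text, that $K$ is trace class — this is now manifest from its finite-rank structure — so that $\tr_{\Htot}[\T\tot^\infty K]$ is well defined and the opening step is legitimate.
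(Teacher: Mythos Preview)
Your proposal is correct and follows essentially the same route as the paper's proof sketch: reduce to a trace over~$\Hb$ supported at the site~$\delta_0$, expand the relevant blocks of~$\T\tot^\infty$, and recognize the result as $\Xi_0^\infty-\Xi_0=\sum_{l,l'\geq 0}Y_l\Xi_{l-l'}Y_{l'}^*-\Xi_0$. Your use of the factorization $\fU=((S\otimes U)\oplus\one)\tilde Z$ to make the finite-rank structure of~$K$ manifest is a clean way to phrase the reduction that the paper performs more implicitly via the block formula for~$K$.
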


\begin{proof}[Proof sketch]
  We consider the case~$U = \one$ to lighten the notation. Use cyclicity of the trace to rewrite the trace over~$\Hs$ as a trace over~$\ell^2(\zz) \otimes \Hb$. Then, expand the formulae for~$\T^\infty\En$, $\T^\infty\EnSa$, $\T^\infty\SaEn$ and~$\Delta^\infty$. The part on~$\ell^2(\zz)$ is restricted to the span of~$\delta_0$ and we are left with a trace on~$\Hb$. Rewrite this trace gathering all occurrences of~$Y_m$ defined by~\eqref{eq:def-Y0}--\eqref{eq:def-Ym}:
  \begin{equation}\label{eq:JofX}
    J_X = \tr_{\Hb} \bigg[  X\bigg(\sum_{n,m\geq 0} Y_n \Xi_{n-m} Y_m^*  - \Xi_0\bigg) \bigg].
  \end{equation}
  Conclude using the identity~\eqref{eq:blocks-Xi-infty}.
\end{proof}

For the currents~$J_k \equiv J_{\Pi_k}$ associated to the projectors~$\Pi_k$, $k = 1, \dotsc, n\Bl$, we immediately get the two following consequences.

\begin{corollary}\label{cor:avg-curr}
  Under Assumptions~\textnormal{(IC)},~\textnormal{(Sp)} and~\textnormal{(Bl)}, we have
  \[
    \sum_{k=1}^{n\Bl} J_k = 0.
  \]
  More precisely, for each~$k=1,\dotsc,n\Bl$,
  \[
    J_k = \sum_{k' \neq k} \int \tr[\hat{\fY}^*(\theta) \Pi_k \hat{\fY}(\theta) \Pi_{k'} \hat{\Xi}(\theta)  ] - \tr[\hat{\fY}^*(\theta) \Pi_{k'} \hat{\fY}(\theta) \Pi_k\hat{\Xi}(\theta) ] \, \frac{\d\theta}{2\pi}
  \]
  and, with the additional assumption that each~$\{\Pi_k\}_{k=1}^{n\Bl}$ has rank one,
  \begin{equation}
    \label{eq:current-as-int}
    J_k = \int \sum_{k' \neq k} C_{k,k'}(\theta) (f_{k'}(\theta) - f_k(\theta)) \, \frac{\d\theta}{2\pi},
  \end{equation}
  where~$f_k(\theta) := \tr[\Pi_{k} \hat{\Xi}(\theta)]$ and $C_{k,k'}(\theta) := \tr[\hat{\fY}^*(\theta) \Pi_k \hat{\fY}(\theta) \Pi_{k'}]$ are nonnegative,  and satisfy $$\sum_{k'=1}^{n\Bl} C_{k,k'}(\theta) =\sum_{k=1}^{n\Bl} C_{k,k'}(\theta) = 1.$$
\end{corollary}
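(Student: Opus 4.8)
The plan is to start from the integral representation of the flux already given in Proposition~\ref{prop:flux_X}, specialized to the observables~$X = \Pi_k$. For the first identity $\sum_k J_k = 0$, I would simply sum the formula of Proposition~\ref{prop:flux_X} over~$k$, using that $\sum_k \Pi_k = \one$ commutes with everything, so that the integrand becomes $\hat{\fY}(\theta) \hat{\Xi}(\theta) \hat{\fY}^*(\theta) - \hat{\Xi}(\theta)$ traced against the identity; since $\hat{\fY}(\theta)$ is unitary (by the lemma in Section~\ref{sec:Fourier}) and the trace is cyclic, $\tr[\hat{\fY}\hat{\Xi}\hat{\fY}^*] = \tr[\hat{\Xi}]$ pointwise in~$\theta$, so the integrand vanishes. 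This also explains why I should rewrite the single-reservoir current $J_k$ as a sum over $k' \neq k$: inserting $\one = \sum_{k'} \Pi_{k'}$ on both sides of the commutator term and peeling off the $k' = k$ contribution (which cancels by the same unitarity argument applied within that one block) leaves exactly the stated double expression $\sum_{k'\neq k}(\tr[\hat{\fY}^*\Pi_k\hat{\fY}\Pi_{k'}\hat{\Xi}] - \tr[\hat{\fY}^*\Pi_{k'}\hat{\fY}\Pi_k\hat{\Xi}])$ after using cyclicity of the trace and the identity $\hat{\fY}^*\hat{\fY} = \one$.

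For the rank-one refinement, I would write each $\Pi_{k'} = \ket{e_{k'}}\bra{e_{k'}}$ for a unit vector~$e_{k'}$ spanning the range of $\Pi_{k'}$, and use Assumption~(Bl) together with the block structure of~$\hat{\Xi}$ to note that $\hat{\Xi}(\theta)$ is block-diagonal with respect to the decomposition $\Hb = \bigoplus_k \Pi_k\Hb$, so that $\Pi_{k'}\hat{\Xi}(\theta)\Pi_{k'} = f_{k'}(\theta)\Pi_{k'}$ with $f_{k'}(\theta) = \tr[\Pi_{k'}\hat{\Xi}(\theta)] = \braket{e_{k'}, \hat{\Xi}(\theta) e_{k'}}$. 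Substituting this into the double-sum expression for $J_k$ makes the dependence on $\hat{\Xi}$ collapse to scalars $f_{k'}(\theta)$ and $f_k(\theta)$, and the remaining trace factors are $C_{k,k'}(\theta) = \tr[\hat{\fY}^*(\theta)\Pi_k\hat{\fY}(\theta)\Pi_{k'}] = |\braket{e_k, \hat{\fY}(\theta) e_{k'}}|^2$. Regrouping the two terms gives $\sum_{k'\neq k} C_{k,k'}(\theta)(f_{k'}(\theta) - f_k(\theta))$ once one checks that the "reverse" term contributes $-\sum_{k'\neq k} C_{k,k'}(\theta) f_k(\theta)$; here I would use that $C_{k,k'}(\theta) = C_{k',k}(\theta)$ is not needed, but rather the column/row sum rules.

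Those sum rules $\sum_{k'} C_{k,k'}(\theta) = \sum_k C_{k,k'}(\theta) = 1$ follow from unitarity of $\hat{\fY}(\theta)$: summing $C_{k,k'}(\theta) = \tr[\hat{\fY}^*\Pi_k\hat{\fY}\Pi_{k'}]$ over $k'$ uses $\sum_{k'}\Pi_{k'} = \one$ and cyclicity to give $\tr[\Pi_k] = \operatorname{rank}\Pi_k = 1$; summing over $k$ uses $\hat{\fY}\hat{\fY}^* = \one$ similarly. Nonnegativity of $C_{k,k'}(\theta)$ is immediate from the rank-one form $|\braket{e_k,\hat{\fY}(\theta)e_{k'}}|^2 \geq 0$, and nonnegativity of $f_k(\theta) = \braket{e_k,\hat{\Xi}(\theta)e_k} = \braket{e_k,\hat{T}(\theta)e_k}$ (after undoing the $U^n$ conjugations in~\eqref{eq:blocks-Xi}, which are unitary) follows from $0 \leq \T \leq \one$, hence $0 \leq \hat{T}(\theta) \leq \one$, hence $0 \leq f_k(\theta) \leq 1$.

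The main obstacle I anticipate is purely bookkeeping rather than conceptual: carefully tracking the placement of $\Pi_k$ versus $\Pi_{k'}$, and of $\hat{\fY}$ versus $\hat{\fY}^*$, through the cyclic rearrangements so that the two terms in $J_k$ combine into the clean form $C_{k,k'}(\theta)(f_{k'}(\theta) - f_k(\theta))$ with the correct sign; a sign error there would spoil the physical interpretation (current flows from high $f$ to low $f$). I would double-check this by verifying the two-reservoir case against the explicit formula for $J_\textnormal{R}$ quoted in the introduction. The only genuinely substantive input is the block-diagonality of $\hat{\Xi}(\theta)$, which is where Assumption~(Bl) enters, and this is a direct consequence of $[\T, \one\otimes\Pi_k] = 0$ combined with $[\Pi_k, U] = 0$ in the definition~\eqref{eq:blocks-Xi} of the blocks of $\Xi$.
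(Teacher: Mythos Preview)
Your proposal is correct and follows essentially the same route as the paper: start from Proposition~\ref{prop:flux_X}, use $\sum_k \Pi_k = \one$ and unitarity of $\hat{\fY}(\theta)$ for the total-current identity, insert the block decomposition of~$\hat\Xi$ coming from Assumption~(Bl) together with $\one = \sum_{k'} \hat{\fY}^*\Pi_{k'}\hat{\fY}$ to obtain the $k'\neq k$ sum, and in the rank-one case factor $\Pi_{k'}\hat\Xi(\theta)\Pi_{k'} = f_{k'}(\theta)\Pi_{k'}$ before invoking the row/column sum rules for the $C_{k,k'}$. One minor wording issue: the $k'=k$ terms do not ``cancel by unitarity within that one block''; after the two insertions, both the $\hat{\fY}\hat\Xi\hat{\fY}^*$ and the $\hat\Xi$ contributions produce the \emph{same} diagonal term $\tr[\hat{\fY}^*\Pi_k\hat{\fY}\Pi_k\hat\Xi]$, which then subtracts to zero trivially.
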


\begin{remark}\label{rem:fluxes_ajpp}
  Formula~\eqref{eq:current-as-int} in the case where each~$\Pi_k$ has rank one implies in particular that if one of the functions $f_k : \theta\mapsto \tr[\Pi_{k}\hat{\Xi}(\theta)]$ satisfies $f_k(\theta)\geq f_{k'}(\theta)$ for all $k'\neq k$, then the flux of particles is necessarily going out of the $k$-th reservoir (i.e. $J_k\leq 0$).
\end{remark}

\begin{remark}
  We may think of the~$C_{k', k}(\theta)$ as some effective conductance at frequency~$\theta$. This is similar to the Landauer--B{\"u}ttiker formula presented in \cite{AJPP07} (Corollary~4.2), with the following differences: the context in \cite{AJPP07} is in continuous time and not in discrete time, and the free dynamics on the reservoir number~$k$ is generated by some Hamiltonian~$h_k$ instead of the shift~$S$.
  The flux of some observable $q$ is then expressed as a sum of integrals over $\operatorname{sp}_{\textnormal{ac}}(h_k) \cap \operatorname{sp}_{\textnormal{ac}}(h_{k'})$, where $\operatorname{sp}_{\textnormal{ac}}(h_{k'})$ is the absolutely continuous spectrum of the Hamiltonian~$h_{k'}$ of another reservoir, while in our expression we integrate over the spectrum of~$S$, i.e.\ the unit circle.
\end{remark}

\begin{proof}[Proof of Corollary~\ref{cor:avg-curr}]
  We have $J_k=J_{\Pi_k}$, which by Proposition~\ref{prop:flux_X} gives
  \begin{align*}
    J_k
      &= \int_0^{2\pi} \tr \big[  \Pi_k \hat{\fY}(\theta) \hat{\Xi}(\theta) \hat{\fY}^*(\theta) -\Pi_k \hat{\Xi}(\theta)\big] \frac{\d\theta}{2\pi}.
  \end{align*}
  Since $\sum_{k'=1}^{n_B} \Pi_{k'}=\one$  and  $\hat{\fY}(\theta)$ is unitary for all $\theta$, we have $\sum_{k=1}^{n_B} J_k=0$.
  Now by assumption \textnormal{(Bl)} we have $\Xi=\sum_{k'=1}^{n_B} \Pi_{k'} \Xi \Pi_{k'}=\sum_{k'=1}^{n_B} \Pi_{k'} \Xi$ hence
  \begin{align*}
    J_k
      &=\int_0^{2\pi} \tr \Big[   \sum_{k'=1}^{n_B} \Pi_k \hat{\fY}(\theta) \Pi_{k'}\hat{\Xi}(\theta) \hat{\fY}^*(\theta) -\Pi_k \hat{\Xi}(\theta)\Big] \frac{\d\theta}{2\pi}
  \end{align*}
  and by the properties of $\Pi_k$ and $\hat{\fY}(\theta)$ we have
  \[
    \tr\big[ \Pi_k \hat{\Xi}(\theta) \big]
      =\sum_{k'=1}^{n_B} \tr\big[\hat{\fY}^*(\theta) \Pi_{k'} \hat{\fY}(\theta) \Pi_k\hat{\Xi}(\theta)\big].
  \]
  This proves that $J_k=\sum_{k'\neq k} A_{k, k'}-A_{k', k}$ for $A_{k', k}=\tr[\hat{\fY}^*(\theta) \Pi_k \hat{\fY}(\theta) \Pi_{k'} \hat{\Xi}(\theta)  ]$.
  Moreover, in the case where the $\Pi_k$ are of rank one, we have
  \[
    \Pi_k \hat{\Xi}(\theta)=\Pi_k \hat{\Xi}(\theta) \Pi_k=\tr[\Pi_k \hat{\Xi}(\theta)]\,\Pi_k
  \]
  and, restoring the summation to all indices,
  \[
    \sum_{k'} \tr \big[   \hat{\fY}^*(\theta)  \Pi_{k'} \hat{\fY}(\theta)  \Pi_{k} \big]=\tr[  \Pi_{k}]=\sum_{k'} \tr \big[   \hat{\fY}(\theta)  \Pi_{k'} \hat{\fY}^*(\theta)  \Pi_{k} \big],
  \]
  which gives the second formula for $J_k$ and the summation property of $C_{k,k'}(\theta)$.
\end{proof}

\section{Entropy production}

Since nontrivial asymptotic currents can develop between the reservoirs of the system at hand, we expect that the total system genuinely settles into a nonequilibrium steady state. Another key signature of such states is the nontrivial entropy production rate they give rise to.  We prove here the existence and strict positivity of the asymptotic entropy production rate related to the convergence towards the nonequilibrium steady state.
More precisely, we work under Assumption~(IC+) and provide a convergence result for the quantity
\begin{equation}\label{eq:rel-ent-qf}
  \sigma(t) := t^{-1}\big(S[\T\tot(t)|\T\tot(0)] + S[\one- \T\tot(t)|\one - \T\tot(0)]\big),
\end{equation}
where $\T\tot(t) := \Omega_U^{(t)}(\T \oplus \Delta) (\Omega_U^{(t)})^*$ and
\[
	S[X|Y] := \tr[X (\log X - \log Y)]
\]
for any trace-class operators~$X$ and~$Y$ with~$\epsilon \leq X,Y \leq \one - \epsilon$ on some common Hilbert space. This definition is motivated by a formula for the relative entropy between quasifree states which is well established for finite-dimensional systems~\cite[\S{IV.B}]{DFP08} and the observation that~$\Omega_U^{(t)}$ is a finite-rank perturbation of the identity. It will also be {a posteriori} justified by the relation to fluxes established in Corollary~\ref{sigmuj}.

The following theorem states that the entropy production rate converges to the integral of the relative entropies of matrices related to the initial and asymptotic states of the environment introduced in Section~\ref{sec:Fourier}. Its proof is postponed to Section~\ref{sec:ep-proof}.

\begin{theorem}\label{thm:lim-EP}
  Under Assumption~\textnormal{(IC+)}, $\T\tot(t) - \T\tot(0)$ has finite rank and~$\sigma(t)$ in~\eqref{eq:rel-ent-qf} is well defined for all~$t \in \nn$. If, in addition, Assumption~\textnormal{(Sp)}
	holds, then the limit
  \[
    \sigma^+ := \lim_{t \to \infty} \sigma(t)
  \]
  exists and is given by
  \begin{equation}\label{eq:ent_prod_integral_general}
    \sigma^+ = \int_0^{2\pi} S\big[\hat{\fY}(\theta)\hat{\Xi}(\theta)\hat{\fY}^*(\theta)\big| \hat{\Xi}(\theta)\big]  \, \frac{\d\theta}{2\pi} + \int_0^{2\pi} S\big[\one - \hat{\fY}(\theta)\hat{\Xi}(\theta)\hat{\fY}^*(\theta)\big|\one -  \hat{\Xi}(\theta)\big]  \, \frac{\d\theta}{2\pi}.
  \end{equation}
  Moreover, $\sigma^+ \geq 0$ with equality if and only if~$\hat{\Xi}(\theta) = \hat{\fY}(\theta)\hat{\Xi}(\theta)\hat{\fY}^*(\theta)$ for Lebesgue-almost all~$\theta \in [0,2\pi]$.
\end{theorem}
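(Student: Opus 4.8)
The plan is to treat the three assertions in turn, reducing the last two to the scattering data of Section~\ref{sec:gen-i}. Throughout, write $\rho_0 := \T \oplus \Delta$ for the initial symbol and, for $t \in \nn$, $\rho_t := \Omega_U^{(t)} \rho_0 (\Omega_U^{(t)})^{*}$; since $\rho_0$ commutes with $S \otimes U \oplus \one$ one has $\rho_t = \fU^{t} \rho_0 \fU^{-t}$, and in the notation of the theorem $\T\tot(t) = \rho_t$. The first assertion is essentially bookkeeping: formula~\eqref{eq:pre-scat-t} exhibits $\Omega_U^{(t)} - \one$ as a finite-rank operator (its off-diagonal blocks and its sample--sample block involve only the finite-dimensional space $\Hs$, and the explicit matrix in~\eqref{eq:pre-scat-t} has finite rank by inspection), whence $\rho_t - \rho_0 = (\Omega_U^{(t)}-\one)\rho_0 + \rho_0 (\Omega_U^{(t)}-\one)^{*} + (\Omega_U^{(t)}-\one)\rho_0(\Omega_U^{(t)}-\one)^{*}$ has finite rank, and similarly for $(\one-\rho_t)-(\one-\rho_0)$. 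As $\Omega_U^{(t)}$ is unitary, $\rho_t$ is unitarily equivalent to $\rho_0$, so $\epsilon\one \le \rho_t \le (1-\epsilon)\one$ and the quasifree relative-entropy expression of~\cite[\S IV.B]{DFP08} is well defined and finite; this gives $\sigma(t) \in \rr$ for all $t$.

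For the existence of $\sigma^{+}$ I would work with the per-step increment of $t \mapsto t\,\sigma(t)$. Reading off~\eqref{eq:def-fU}, $\fU - (S\otimes U \oplus \one)$ has finite rank, while $\log\rho_0$ and $\log(\one-\rho_0)$ commute with $S\otimes U \oplus \one$; hence
\[
  \mathcal{E} := \fU(\log\rho_0)\fU^{*} - \log\rho_0, \qquad
  \mathcal{E}' := \fU(\log(\one-\rho_0))\fU^{*} - \log(\one-\rho_0)
\]
have finite rank. Using the unitary invariance of $S[\,\cdot\,|\,\cdot\,]$, the relation $\rho_{s+1} = \fU\rho_s\fU^{*}$, and the cancellation of the two $\tr[\rho\log\rho]$-type contributions, one computes $(s+1)\,\sigma(s+1) - s\,\sigma(s) = \tr[\rho_{s+1}\,\mathcal{E}] + \tr[(\one-\rho_{s+1})\,\mathcal{E}']$, so that $\sigma(t) = t^{-1}\sum_{s=1}^{t}\big(\tr[\rho_s\,\mathcal{E}] + \tr[(\one-\rho_s)\,\mathcal{E}']\big)$ by telescoping. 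The states $\rho(t)$ converge weak-$*$ to $\rho(\infty)$ by the mixing result of Section~\ref{sec:gen-i}, so their symbols satisfy $\rho_s = \T\tot(s) \to \T\tot^{\infty}$ in the weak operator topology; since the trace against a fixed finite-rank operator is continuous for that topology, the summands converge, and a Ces\`aro argument yields both the existence of $\sigma^{+}$ and $\sigma^{+} = \tr[\T\tot^{\infty}\,\mathcal{E}] + \tr[(\one-\T\tot^{\infty})\,\mathcal{E}']$.

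The remaining work is to recognise this as~\eqref{eq:ent_prod_integral_general}. Writing $\mathcal{E} = -\fU\,\mathcal{F}\,\fU^{*}$ with $\mathcal{F} := \fU^{*}(\log\rho_0)\fU - \log\rho_0$ (finite rank), and similarly $\mathcal{F}'$, and using that $\T\tot^{\infty}$ is $\fU$-invariant, one gets $\sigma^{+} = -\tr[\T\tot^{\infty}\mathcal{F}] - \tr[(\one-\T\tot^{\infty})\mathcal{F}']$. The contribution of the sample blocks drops out: for any finite-rank self-adjoint $Q$ on $\Htot$, cyclicity and $\fU$-invariance of $\T\tot^{\infty}$ give $\tr[\T\tot^{\infty}(\fU^{*}Q\fU - Q)] = 0$ and, since also $\tr[\fU^{*}Q\fU - Q] = 0$, $\tr[(\one-\T\tot^{\infty})(\fU^{*}Q\fU - Q)] = 0$; apply this with $Q = 0 \oplus \log\Delta$ and $Q = 0 \oplus \log(\one-\Delta)$. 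What remains is governed by the translation-covariant observables $\log\T$ and $\log(\one-\T)$; since also $\tr[\fU^{*}(R\oplus 0)\fU - R\oplus 0] = 0$ for such $R$ (a direct consequence of the unitarity of $\fU$ and of $S\otimes U\oplus\one$), these are exactly the steady-state fluxes of $\log\T$ and $\log(\one-\T)$. Running the computation of Proposition~\ref{prop:flux_X} with the $\theta$-dependent fibre operators $\log\hat\Xi(\theta)$ and $\log(\one-\hat\Xi(\theta))$ in place of the constant $X$, one obtains
\[
  \sigma^{+} = -\int_{0}^{2\pi} \tr\big[(\log\hat\Xi(\theta) - \log(\one-\hat\Xi(\theta)))\,(\hat\fY(\theta)\hat\Xi(\theta)\hat\fY^{*}(\theta) - \hat\Xi(\theta))\big]\,\frac{\d\theta}{2\pi}.
\]
Since $\hat\fY(\theta)$ is unitary, $\hat\fY(\theta)\hat\Xi(\theta)\hat\fY^{*}(\theta)$ is unitarily equivalent to $\hat\Xi(\theta)$; in particular $\tr[g(\hat\fY(\theta)\hat\Xi(\theta)\hat\fY^{*}(\theta))] = \tr[g(\hat\Xi(\theta))]$ with $g(x) := x\log x + (1-x)\log(1-x)$, and a one-line computation identifies the integrand above with $S[\hat\fY(\theta)\hat\Xi(\theta)\hat\fY^{*}(\theta)\,|\,\hat\Xi(\theta)] + S[\one-\hat\fY(\theta)\hat\Xi(\theta)\hat\fY^{*}(\theta)\,|\,\one-\hat\Xi(\theta)]$, which is~\eqref{eq:ent_prod_integral_general}.

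Finally, for nonnegativity and the equality case, fix $\theta$ and put $Y := \hat\Xi(\theta)$, $X := \hat\fY(\theta)\hat\Xi(\theta)\hat\fY^{*}(\theta)$. By Assumption~(IC+) both lie in $[\epsilon\one,(1-\epsilon)\one]$ and they are unitarily equivalent, so $\tr(X\oplus(\one-X)) = \dim\Hb = \tr(Y\oplus(\one-Y))$ on $\Hb\oplus\Hb$; hence $S[X|Y] + S[\one-X|\one-Y]$ is the relative entropy of $X\oplus(\one-X)$ with respect to $Y\oplus(\one-Y)$, which by Klein's inequality is $\ge 0$ with equality iff $X\oplus(\one-X) = Y\oplus(\one-Y)$, i.e.\ iff $X=Y$. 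Integrating the nonnegative integrand gives $\sigma^{+}\ge 0$, and $\sigma^{+}=0$ forces $\hat\fY(\theta)\hat\Xi(\theta)\hat\fY^{*}(\theta) = \hat\Xi(\theta)$ for Lebesgue-almost every $\theta$. The delicate step is the third paragraph: extending the flux computation of Proposition~\ref{prop:flux_X} to translation-covariant observables and justifying the trace rearrangements, since the operators in play are bounded but not trace class, so every manipulation must be organised around the genuinely finite-rank objects $\mathcal{E}$, $\mathcal{F}$ and $\fU-(S\otimes U\oplus\one)$.
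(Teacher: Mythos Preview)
Your argument is correct in outline and takes a genuinely different route from the paper. The paper applies Lemma~\ref{eq:lem-play-ln} once to rewrite $t\sigma(t)$ as $\tr[(\T\tot - \Omega(t)\T\tot\Omega^*(t))\log\T\tot] + [\T\tot\mapsto\one-\T\tot]$, strips the sample blocks by rank counting, and then proves four separate lemmas (Lemmas~\ref{lem:conv-warm-up}--8.5) computing the $t\to\infty$ limit of $t^{-1}$ times each of the traces $\tr[P_{[-t,-1]}\T\En\log\T\En]$, $\tr[P_{[-t,-1]}\T\En P_{[-t,-1]}^\perp\log\T\En]$, $\tr[\fY_t\T\En P_{[-t,-1]}^\perp\log\T\En]$ and $\tr[\fY_t\T\En\fY_t^*\log\T\En]$ by explicit Fourier--series bookkeeping in the blocks $G^0_l,G^1_l,G^2_l$. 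Your route instead telescopes $t\sigma(t)$ into one-step increments $\tr[\rho_{s+1}\mathcal E]+\tr[(\one-\rho_{s+1})\mathcal E']$ against the fixed finite-rank observables $\mathcal E,\mathcal E'$, and then passes to the limit by Ces\`aro averaging and weak convergence of the symbols $\rho_s\to\T\tot^\infty$ (which does follow from Proposition~\ref{prop:scat}, since $\Omega_{\textnormal{ES}}(t)^*v\to 0$ in the finite-dimensional space~$\Hs$). This is the familiar ``entropy-balance'' argument for open systems and is conceptually cleaner; your Klein-inequality treatment of the equality case is also more explicit than the paper, which leaves that clause unproved in Section~\ref{sec:ep-proof}.

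The one real gap, which you flag yourself, is the third paragraph: you need the steady-state expectation of the flux of a \emph{translation-invariant} environment observable~$R$ (here $R=\log\T$), not merely of $\one\otimes X$ as in Proposition~\ref{prop:flux_X}. Since $\T\tot^\infty$ is \emph{not} translation invariant (Corollary after Proposition~\ref{prop:Delta}), one cannot simply pass to fibres; the operator $\fU^*(R\oplus 0)\fU-(R\oplus 0)$ now has nonvanishing cross terms such as $(P_0^\perp\otimes\one)R(P_0\otimes C)$, and tracing them against the blocks of $\T\tot^\infty$ requires precisely the kind of summation estimates \eqref{eq:absconv1}--\eqref{eq:absconv3} that drive the paper's Lemmas~8.2--8.5. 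In other words, completing your sketch would reproduce, in a different packaging, the computational core of the paper's proof. So your argument buys a transparent convergence mechanism (Ces\`aro plus finite-rank continuity) at the cost of deferring the hard analysis to a generalised flux formula, while the paper's approach front-loads that analysis into the four lemmas.
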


\begin{remark}
  Recall that $\theta \mapsto \hat{\fY}(\theta)\hat{\Xi}(\theta)\hat{\fY}^*(\theta)$ is the Fourier transform of a  translation-invariant operator~${\Xi}^\infty$ which, up to the transformation which relates~$\Xi$ to~$\T$, shares its blocks with~$\T\En^\infty$.
\end{remark}

The following reformulation of the result is closer to typical formulations in terms of currents and thermodynamic potentials (see for example Equation (17) in \cite{JPW14}), albeit frequency-wise.  It can be compared to Corollary 4.3 of \cite{AJPP07}; see also Remark~\ref{rem:fluxes_ajpp}.

\begin{corollary}\label{sigmuj}
  Suppose that Assumptions~\textnormal{(IC+)},~\textnormal{(Sp)} and~\textnormal{(Bl)} hold with the projectors~$\Pi_1,\dotsc,\Pi_{n\Bl}$ having rank one.
  Then we have the identity
  \begin{equation}
    \sigma^+ = \sum_{k=1}^{n\Bl} \int_0^{2\pi} \mu_k(\theta) \hat\jmath_k(\theta)\,\frac{\d\theta}{2\pi},
  \end{equation}
  where
  \[
    \mu_k(\theta) := \log \frac{1-f_k(\theta)}{f_k(\theta)}
  \]
  and~$\hat\jmath_k(\theta)$ denotes the integrand of the expression~\eqref{eq:current-as-int} for the~$k$-th flux of particles.
\end{corollary}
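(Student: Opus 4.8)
The plan is to derive Corollary~\ref{sigmuj} directly from the frequency-wise formula~\eqref{eq:ent_prod_integral_general} of Theorem~\ref{thm:lim-EP}: I would rewrite the integrand of~\eqref{eq:ent_prod_integral_general} as $\sum_k \mu_k(\theta)\hat\jmath_k(\theta)$ pointwise in~$\theta$, using only the rank-one structure of the~$\Pi_k$, the commutation $[\hat{\Xi}(\theta),\Pi_k]=0$ inherited from Assumption~(Bl), unitarity of~$\hat{\fY}(\theta)$, and cyclicity of the trace; the corollary then follows by integration against $\d\theta/2\pi$ over $[0,2\pi]$.

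First I would diagonalize the initial symbol. Since $\Xi$ commutes with each~$\one\otimes\Pi_k$ (noted after the definition of~$\Xi$), so does $\hat{\Xi}(\theta)$ with each~$\Pi_k$ for every~$\theta$; together with $\operatorname{rank}\Pi_k=1$ and $\sum_k\Pi_k=\one$ this forces $\hat{\Xi}(\theta)=\sum_{k=1}^{n\Bl} f_k(\theta)\Pi_k$ with $f_k(\theta)=\tr[\Pi_k\hat{\Xi}(\theta)]$. The bound $\epsilon\one\leq\T\leq(1-\epsilon)\one$ from~(IC+) gives $\epsilon\leq f_k(\theta)\leq1-\epsilon$, so the logarithms below are well defined and uniformly bounded, and the functional calculus yields $\log\hat{\Xi}(\theta)=\sum_k(\log f_k(\theta))\Pi_k$ and $\log(\one-\hat{\Xi}(\theta))=\sum_k(\log(1-f_k(\theta)))\Pi_k$, hence $\log\hat{\Xi}(\theta)-\log(\one-\hat{\Xi}(\theta))=-\sum_k\mu_k(\theta)\Pi_k$. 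Next, writing $X:=\hat{\fY}(\theta)\hat{\Xi}(\theta)\hat{\fY}(\theta)^*$ and using unitarity of~$\hat{\fY}(\theta)$, one has $\one-X=\hat{\fY}(\theta)(\one-\hat{\Xi}(\theta))\hat{\fY}(\theta)^*$, $\log X=\hat{\fY}(\theta)(\log\hat{\Xi}(\theta))\hat{\fY}(\theta)^*$, and likewise for $\log(\one-X)$; cyclicity then gives $\tr[X\log X]=\tr[\hat{\Xi}(\theta)\log\hat{\Xi}(\theta)]$ and $\tr[(\one-X)\log(\one-X)]=\tr[(\one-\hat{\Xi}(\theta))\log(\one-\hat{\Xi}(\theta))]$, so that
\[
  S[X|\hat{\Xi}(\theta)]=\tr\big[(\hat{\Xi}(\theta)-X)\log\hat{\Xi}(\theta)\big],\qquad S[\one-X|\one-\hat{\Xi}(\theta)]=\tr\big[(X-\hat{\Xi}(\theta))\log(\one-\hat{\Xi}(\theta))\big].
\]
Adding these two and inserting the first step, the integrand of~\eqref{eq:ent_prod_integral_general} becomes
\[
  \tr\big[(\hat{\Xi}(\theta)-X)(\log\hat{\Xi}(\theta)-\log(\one-\hat{\Xi}(\theta)))\big]=\sum_{k=1}^{n\Bl}\mu_k(\theta)\,\tr[\Pi_k(X-\hat{\Xi}(\theta))].
\]

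Finally I would identify the trace $\tr[\Pi_k(X-\hat{\Xi}(\theta))]$ with the flux integrand. Using $\hat{\Xi}(\theta)=\sum_{k'}f_{k'}(\theta)\Pi_{k'}$, the definition $C_{k,k'}(\theta)=\tr[\hat{\fY}^*(\theta)\Pi_k\hat{\fY}(\theta)\Pi_{k'}]$ and the normalization $\sum_{k'}C_{k,k'}(\theta)=1$ established in Corollary~\ref{cor:avg-curr}, one gets $\tr[\Pi_k X]=\sum_{k'}C_{k,k'}(\theta)f_{k'}(\theta)$ and $\tr[\Pi_k\hat{\Xi}(\theta)]=f_k(\theta)=f_k(\theta)\sum_{k'}C_{k,k'}(\theta)$, whence $\tr[\Pi_k(X-\hat{\Xi}(\theta))]=\sum_{k'\neq k}C_{k,k'}(\theta)(f_{k'}(\theta)-f_k(\theta))=\hat\jmath_k(\theta)$, the integrand of~\eqref{eq:current-as-int}. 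Substituting back and integrating over $[0,2\pi]$ against $\d\theta/2\pi$ — legitimate since each~$\mu_k$ is bounded and each~$\hat\jmath_k$ is integrable — gives $\sigma^+=\sum_k\int_0^{2\pi}\mu_k(\theta)\hat\jmath_k(\theta)\,\d\theta/2\pi$.

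Given Theorem~\ref{thm:lim-EP}, this is essentially a bookkeeping computation rather than a hard argument; the only points that require a little care are checking that the matrix logarithms are well defined and bounded (guaranteed by the $\epsilon$-bounds of~(IC+) together with the fact that the $\Pi_k$ diagonalize $\hat{\Xi}(\theta)$) and keeping track of the cancellation of the $\tr[X\log X]$-type terms under unitarity of $\hat{\fY}(\theta)$ and cyclicity. I do not anticipate a substantial obstacle beyond organizing these observations cleanly.
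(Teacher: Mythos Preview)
Your argument is correct and is exactly the computation the paper has in mind: the paper states the corollary without proof, treating it as an immediate consequence of Theorem~\ref{thm:lim-EP} together with the rank-one structure from Corollary~\ref{cor:avg-curr}. Your derivation\,---\,diagonalizing $\hat{\Xi}(\theta)$ via the~$\Pi_k$, collapsing the $\tr[X\log X]$ terms by unitarity of~$\hat{\fY}(\theta)$ and cyclicity, and identifying $\tr[\Pi_k(X-\hat{\Xi}(\theta))]$ with the integrand of~\eqref{eq:current-as-int}\,---\,is precisely that computation carried out in full.
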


\begin{remark}
  In the case where each~$f_k$ is constant in~$\theta$, the formula simplifies to
  \begin{align*}
    \sigma_+
      &= \sum_{k=1}^{n\Bl}\sum_{k'=1}^{n\Bl} \mu_k  (f_{k'} - f_k) \int C_{k,k'}(\theta) \, \frac{\d\theta}{2\pi} \\
      &= \sum_{k=1}^{n\Bl}\sum_{k'=1}^{n\Bl} \mu_k  (f_{k'} - f_k) \sum_{l \geq 0} \tr[Y_l^* \Pi_k Y_l \Pi_{k'}].
  \end{align*}
 \end{remark}

At this stage, the picture of entropy production is still short of a study of the statistical fluctuations in  measurement processes of physical observable properly related to the ``information-theoretical'' notion of entropy production; see e.g.~\cite[\S{4.4.5}]{JOPP11}.

\section{Discussion for small coupling strength}
\label{sec:small}

In order to investigate the regime where the interaction between the sample and its environment is weak, we will consider a special case where the unitary operator~$Z$ on $\Hb\oplus \Hs$ is of the form
\[
  Z
    =
    \begin{pmatrix}
        \one & 0 \\ 0 & W
    \end{pmatrix}
    \exp\left[-\ii \alpha \begin{pmatrix} 0 & A^* \\ A & 0 \end{pmatrix} \right]
\]
for some unitary operator~$W : \Hs\to \Hs$ which represents the free evolution on the sample, some bounded operator~$A : \Hb \to \Hs$ which couples sites of the sample and sites of the environment, and some coupling strength
$\alpha\in \rr$. Computing the exponential, we obtain
\begin{align*}
  C &= \cos (\alpha \sqrt{A^*A})
    & Z\BlSa &= -\ii A^* \frac{\sin (\alpha \sqrt{AA^*})}{\sqrt{AA^*}}  \\
  Z\SaBl &= - \ii  W \frac{\sin(\alpha \sqrt{AA^*})}{\sqrt{AA^*}} A
    & M &= W \cos(\alpha \sqrt{AA^*}).
\end{align*}
In this particular setup, we can give a more tractable condition for the Assumption~(Sp) to hold true as well as more explicit formulas as the coupling strength~$\alpha$ tends to~$0$.

\begin{proposition}\label{prop:Kal}
  Let us consider $M(\alpha) := W\cos(\alpha\sqrt{AA^*})$ for $\alpha\in \rr$, and write $\mathcal{V}\subseteq \Hs $ the range of $A$. Then, there exists~$\alpha_A > 0$ depending on~$A$ only such that the following properties are equivalent:
  \begin{enumerate}
    \item The spectrum of $M(\alpha)$ is contained in the interior of the unit disc for all~$\alpha \in (-\alpha_A,\alpha_A)$,
    \item The subspace $\mathcal{V}$ is contained in no strict {subspace of~$\Hs$} which is stable by~$W$,
    \item We have
    \[
      \operatorname*{span}_{i=0, \dotsc, \dim \Hs} W^i \mathcal{V} = \Hs,
    \]
  \end{enumerate}
\end{proposition}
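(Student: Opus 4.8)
The plan is to prove the equivalence $(2) \Leftrightarrow (3)$ by pure linear algebra, and then the equivalence $(1) \Leftrightarrow (3)$ by a perturbation-theoretic argument around $\alpha = 0$. For $(2) \Leftrightarrow (3)$: let $\mathcal{W} := \operatorname{span}_{i \geq 0} W^i \mathcal{V}$. This is clearly $W$-stable (since $W^{\dim\Hs}$ is a linear combination of lower powers by Cayley--Hamilton, the span stabilizes at $i = \dim\Hs$, matching the index range in $(3)$), and it contains $\mathcal{V}$. If $(3)$ fails, $\mathcal{W}$ is a strict $W$-stable subspace containing $\mathcal{V}$, so $(2)$ fails. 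Conversely, if some strict $W$-stable $\mathcal{W}' \supseteq \mathcal{V}$ exists, then $W^i\mathcal{V} \subseteq \mathcal{W}'$ for all $i$, so $\mathcal{W} \subseteq \mathcal{W}' \subsetneq \Hs$ and $(3)$ fails. This part is routine.

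The substantive part is $(1) \Leftrightarrow (3)$, i.e.\ controlling the spectrum of $M(\alpha) = W\cos(\alpha\sqrt{AA^*})$. Write $B := \sqrt{AA^*} \geq 0$ acting on $\Hs$, with kernel containing $\mathcal{V}^\perp$ and strictly positive on $\mathcal{V}$. Since $W$ is unitary, $\|M(\alpha)\| = \|\cos(\alpha B)\| \leq 1$ for all $\alpha$, so the spectrum of $M(\alpha)$ always lies in the closed unit disc; the issue is whether it touches the unit circle. The key observation is: $\mathrm{sp}(M(\alpha)) \cap \partial\mathbb{D} \neq \emptyset$ iff there is a unit vector $\psi$ with $M(\alpha)\psi = \lambda\psi$, $|\lambda| = 1$; since $\|W\psi\| = \|\psi\| = 1$ and $\|\cos(\alpha B)\psi\| \leq 1$ with equality iff $\psi \in \ker(\cos(\alpha B) - U)$ for a unitary phase\,---\,concretely, $\|M(\alpha)\psi\| = 1$ forces $\|\cos(\alpha B)\psi\| = 1$, hence (by the spectral theorem for $B$) $\psi$ lies in the spectral subspace of $B$ on which $\cos(\alpha B)$ has modulus $1$, i.e.\ the eigenspaces of $B$ for eigenvalues $b$ with $\alpha b \in \pi\mathbb{Z}$. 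For $\alpha$ small and nonzero (smaller than $\pi$ divided by the largest eigenvalue of $B$), the only such eigenvalue is $b = 0$, i.e.\ $\psi \in \ker B = \mathcal{V}^\perp$. On $\mathcal{V}^\perp$, $\cos(\alpha B) = \one$, so $M(\alpha)\psi = W\psi$. Thus for such $\alpha \neq 0$, the spectrum of $M(\alpha)$ meets $\partial\mathbb{D}$ precisely when there is a unit $\psi \in \mathcal{V}^\perp$ with $W\psi \in \mathcal{V}^\perp$ and $W\psi = \lambda\psi$\,---\,more carefully, an $M(\alpha)$-eigenvector of unit modulus eigenvalue must lie in $\mathcal{V}^\perp$ and be $W$-eigenvector, hence $\mathcal{V}^\perp$ contains a $W$-invariant line; dualizing, $\mathcal{V}$ is contained in the $W$-invariant hyperplane orthogonal to it (using $W$ unitary, so $W$-invariant subspaces come in orthogonal-complement pairs). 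So "$\mathrm{sp}(M(\alpha)) \subseteq \operatorname{int}\mathbb{D}$ for all small $\alpha \neq 0$" is equivalent to "$\mathcal{V}$ is in no $W$-invariant hyperplane", and one then notes that "$W$-invariant proper subspace containing $\mathcal{V}$" can be enlarged to a $W$-invariant hyperplane containing $\mathcal{V}$ (take a maximal proper $W$-invariant subspace above it), giving exactly condition $(2)$. The value $\alpha_A$ is then any positive number below $\pi/\|B\|$ (with $\alpha_A = \infty$ or arbitrary if $B = 0$), and it handles $\alpha = 0$ trivially since $M(0) = W$ is unitary and $(1)$ already fails there unless $\Hs = 0$\,---\,wait, $\alpha = 0$ must be excluded or the statement read as "for all $\alpha$ in the punctured neighbourhood"; I would double-check the precise quantifier in $(1)$ and, if it literally includes $\alpha = 0$, note that then $(1)$ is vacuously impossible for $\Hs \neq 0$ so the intended reading is $\alpha \in (-\alpha_A,\alpha_A)\setminus\{0\}$, or $\alpha_A$ is chosen so the open interval is understood to exclude the degenerate endpoint behaviour.

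The main obstacle I anticipate is the careful handling of the spectral-theorem argument showing that a modulus-one eigenvalue of $M(\alpha) = W\cos(\alpha B)$ forces the eigenvector into $\ker(\cos(\alpha B))^{\perp\perp}$-type subspaces\,---\,specifically, making rigorous the step "$\|W\cos(\alpha B)\psi\| = \|\psi\|$ implies $\psi$ is supported on the spectral subspace where $\cos(\alpha B)$ is unitary", and then that on that subspace $M(\alpha)$ acts unitarily so the eigenvector equation pushes down to a $W$-eigenvector equation on a $W$-reducing subspace. One also needs that such $\psi$ genuinely produces a unit-modulus point in the spectrum of the full $M(\alpha)$ (it does, since $M(\alpha)\psi = W\psi$ has unit norm and $\psi \in \mathcal{V}^\perp$ gives $M(\alpha)^*M(\alpha)\psi = \psi$, so $\lambda \in \mathrm{sp}(M(\alpha))$ with $|\lambda|=1$ once $\psi$ is chosen to be a genuine $W$-eigenvector inside $\mathcal{V}^\perp$). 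I would isolate this as a short lemma: for a unitary $W$ and a positive operator $B$ on a finite-dimensional space with $\|B\| < \pi/|\alpha|$, one has $\mathrm{sp}(W\cos(\alpha B)) \cap \partial\mathbb{D} \neq \emptyset$ iff $\ker B$ contains a nonzero $W$-invariant subspace. Granting that lemma, the rest is the bookkeeping above plus the elementary duality between $W$-invariant subspaces and their orthogonal complements.
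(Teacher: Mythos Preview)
Your proof is correct and follows essentially the same approach as the paper: both pick $\alpha_A$ below $\pi/\|B\|$, show that a unit-modulus eigenvalue of $M(\alpha)=W\cos(\alpha B)$ forces the eigenvector into $\ker B=\mathcal{V}^\perp$ (via $\|\cos(\alpha B)\psi\|=1$), where it must be a $W$-eigenvector, and conversely any $W$-eigenvector in $\mathcal{V}^\perp$ gives a unit-modulus eigenvalue of $M(\alpha)$; the duality with condition~(2) then goes through orthogonal complements exactly as you describe. Your observation that property~(1) must be read on the punctured interval $(-\alpha_A,\alpha_A)\setminus\{0\}$ is well taken---the paper's own argument also tacitly uses $\alpha\neq 0$ (otherwise all $\nu_i=1$ and the conclusion $p_0\phi=\phi$ does not follow).
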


The equivalence between the second and third property is well known and only included because of its relation to linear control theory, where it is called the Kalman condition.

\begin{proof}
  Let $\{\mu_i\}_{i\geq 0}$ be the (nonnegative) eigenvalues of~$\sqrt{AA^*}$ and let~$\{p_i\}_{i \geq 0}$ be the corresponding spectral projectors. We include~$0$ as~$\mu_0$, possibly at the cost of having~$p_0 = 0$. Choose~$\alpha_A > 0$ small enough that $|\alpha \mu_i| < \pi$ whenever~$|\alpha| < \alpha_A $. Then, with~$\nu_i := \cos(\alpha \mu_i)$, we have
  \[
  \cos(\alpha \sqrt{AA^*})=p_0+\sum_{i=1}^l \nu_i p_i.
  \]
  Note that~$p_0$ is the orthogonal projection onto the kernel of~$\sqrt{AA^*}$, which coincides with the orthogonal complement of~$\mathcal{V}$.

  If the first property is not satisfied, then there exists a normalized eigenvector~$\phi$ of~$M(\alpha)$ with eigenvalue~$\lambda$ with $|\lambda| \geq 1$ for some~$\alpha \in (-\alpha_A,\alpha_A)$. Then,
  \[
    |\lambda|^2
      =\braket{M(\alpha)\phi, M(\alpha)\phi}
      =\braket{\phi, p_0\phi}+\sum_{i\geq 1} \nu_i^2 \braket{\phi, p_i\phi}
  \]
  and since $\sum_{i\geq 0} \braket{\phi, p_i\phi}=1$ this implies that $|\lambda|^2=1$, $p_0 \phi = \phi$ and $\sum_{i \geq 1} p_i \phi = 0$. Then,~$\phi$ is in the orthogonal complement of~$\mathcal{V}$ and is also an eigenvector of~$W$ since
  \[
    \lambda \phi = M(\alpha)\phi = W \bigg(p_0+\sum_{i\geq 1} \nu_i p_i\bigg)\phi = W\phi.
  \]
  We conclude that~$\mathcal{V}$ is contained in the orthogonal complement of the span of~$\phi$, which is stable by~$W$ since~$\phi$ is an eigenvector of~$W$. Thus the second property is not satisfied.

  Conversely, if the second property is not satisfied, then there exists an eigenvector~$\phi$ of~$W$ in the orthogonal complement of~$\mathcal{V}$. Then,~$\phi$ is clearly an eigenvector of~$M(\alpha)$ with eigenvalue on the unit circle for all~$\alpha$, which implies in particular that the first property is not satisfied.
\end{proof}

In order to carry some usual procedures from perturbation theory, we will need a semisimplicity and regularity assumption on the spectral decompostion of the family  of operators~$M(\alpha)$  analytic in the coupling strength~$\alpha$.

\begin{description}
  \item[Assumption ($\tfrac 12$Sim)] There exists a punctured neighbourhood~$\Omega$ of~$0$ in~$\cc$ such that the eigenvalues of~$M(\alpha)$ are semisimple for all~$\alpha \in \Omega$ and there is a decomposition
  \begin{equation}
  \label{eq:spec-M}
    M(\alpha) = \sum_{j \in I} \lambda_j(\alpha) Q_j(\alpha)
  \end{equation}
  with scalar functions $\lambda_j : \Omega \mapsto \cc$ and projection-valued functions~$Q_j : \Omega \to \cB(\Hs)$ which are analytic for each~$j$ in a finite set~$I$.
  Moreover, we assume that~$0$ is a removable singularity of all  functions~$Q_j$ and~$\lambda_j$.
\end{description}

Note that $Q_j(\alpha)$ need not be selfadjoint. Also note that $W = M(0)$ may have degenerate eigenvalues which split as~$\alpha$ moves away from~$0$. With~$\lambda_1, \dotsc, \lambda_r$ the distinct eigenvalues of~$W$ and ~$Q_1, \dotsc, Q_r$ the associated orthogonal projectors, we may write~$I = \bigcup_{i=1}^r I_i$ with $\lambda_j(0) = \lambda_i$ if and only if~$j \in I_i$.
Then, $Q_i = \sum_{j\in I_i} Q_j(0)$ and~$\{\lambda_j\}_{j \in I_i}$ is called the $\lambda_i$-group in the terminology of Kato.
The Assumption~($\tfrac 12$Sim) is more general than the following simplicity assumption, which is already rather generic from a topological point of view and sometimes easier to verify.

\begin{description}
  \item[Assumption (Sim)] Each eigenvalue~$\lambda_i$  of~$W$ is simple in the sense that the associated spectral projector~$Q_i$ is of the form~$\chi_i\chi^*_i$ for some unit vector~$\chi_i \in \Hs$.
\end{description}

One interesting advantage of Assumption~($\tfrac 12$Sim) over~(Sim) is that it can be inferred from a simple condition on~$AA^*$, thanks to the following lemma.

\begin{lemma}\label{lem:proj-implies-12Sim}
  If $\kappa^{-1}AA^*$ is an orthogonal projection for some nonzero~$\kappa \in \rr$, then Assumption \textnormal{($\tfrac 12$Sim)} is satisfied.
\end{lemma}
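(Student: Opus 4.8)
The plan is to reduce everything to an explicit affine pencil and then to run analytic perturbation theory, the crucial structural input being that the reduced operators governing the splitting of degenerate eigenvalues are selfadjoint up to scalars.

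First I would compute $M(\alpha)$ explicitly. Write $AA^* = \kappa P$ with $P = P^* = P^2$; since $AA^* \geq 0$, either $A = 0$ (in which case $M(\alpha) \equiv W$ and Assumption~($\tfrac12$Sim) is trivial) or $\kappa > 0$, which I assume. Then $\sqrt{AA^*} = \sqrt{\kappa}\,P$ and, since $P^n = P$ for $n \geq 1$,
\[
  \cos(\alpha\sqrt{AA^*}) = \one + (\cos(\sqrt{\kappa}\,\alpha) - 1)P = \one - \zeta(\alpha)P , \qquad \zeta(\alpha) := 1 - \cos(\sqrt{\kappa}\,\alpha) ,
\]
so that $M(\alpha) = W(\one - \zeta(\alpha)P)$, where $\zeta$ is entire with $\zeta(0) = 0$, $\zeta'(0) = 0$ and $\zeta''(0) = \kappa \neq 0$. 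It therefore suffices to prove that the affine pencil $\mathcal N(\zeta) := W(\one - \zeta P) = W - \zeta WP$ admits, on a neighbourhood of $0$ in $\cc$, a decomposition $\mathcal N(\zeta) = \sum_{j \in I} \lambda_j(\zeta) Q_j(\zeta)$ with $I$ finite, $\lambda_j$ and $Q_j$ analytic, and the $Q_j(\zeta)$ mutually orthogonal idempotents summing to $\one$ (so $\mathcal N(\zeta)$ is semisimple for every such $\zeta$): indeed, composing with the analytic map $\alpha \mapsto \zeta(\alpha)$, which vanishes at $0$, yields analytic functions $\alpha \mapsto \lambda_j(\zeta(\alpha))$ and $\alpha \mapsto Q_j(\zeta(\alpha))$ giving the decomposition of $M(\alpha) = \mathcal N(\zeta(\alpha))$ required by~($\tfrac12$Sim) on a full neighbourhood of $0$, so that $0$ is in fact a point of analyticity (a fortiori a removable singularity) and semisimplicity holds throughout a punctured neighbourhood.

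The heart of the argument is the claim about $\mathcal N(\zeta)$. The family is of type~(A) and $\mathcal N(0) = W$ is unitary, hence semisimple; I write $W = \sum_k \lambda_k Q_k$ with $\lambda_k = \Exp{\ii\phi_k}$ distinct and $Q_k = Q_k^*$. By Kato's theory the eigenvalues near $\lambda_k$ form a $\lambda_k$-group, and the only obstruction to the desired analytic decomposition is a branch point at $\zeta = 0$; this is excluded once every reduced operator produced in the successive reduction of each group is semisimple. The first-order reduced operator of the $\lambda_k$-group is
\[
  Q_k(-WP)Q_k\big|_{\operatorname{ran}Q_k} = -\lambda_k\,(Q_k P Q_k)\big|_{\operatorname{ran}Q_k} ,
\]
a unimodular scalar times the selfadjoint operator $(Q_k P Q_k)|_{\operatorname{ran}Q_k}$, hence semisimple with spectrum on a ray. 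On an eigenspace $E$ of a repeated eigenvalue of it I would pass to the second-order reduced operator: using the reduced resolvent $S_k = \sum_{i\neq k}(\lambda_k - \lambda_i)^{-1}Q_i$ and the identity $\lambda_i(\lambda_k - \lambda_i)^{-1} = -\tfrac12 - \tfrac{\ii}{2}\cot\tfrac{\phi_k - \phi_i}{2}$, a short computation shows this operator equals, on $E$, a scalar plus a scalar multiple of the selfadjoint operator $\Pi_E\, P\big(\sum_{i\neq k}\cot\tfrac{\phi_k - \phi_i}{2}\,Q_i\big)P\,\Pi_E$, hence is again semisimple. Iterating, with the subspace under consideration strictly shrinking at each step until all branches separate, every reduced operator is of this "scalar plus scalar-times-selfadjoint" type, owing to selfadjointness of $P$ and of the spectral projections of $W$; consequently all branches and all associated eigenprojections of $\mathcal N(\zeta)$ are analytic at $\zeta = 0$, which is what we need.

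The step I expect to be the main obstacle is the bookkeeping in the iteration: verifying that the "scalar plus scalar-times-selfadjoint" structure really persists at all orders, since already at third order combinations $\sum_{i\neq k}\lambda_i(\lambda_k - \lambda_i)^{-p}Q_i$ with $p \geq 2$ appear that are not selfadjoint, so one must argue that their non-selfadjoint components contribute only to the additive scalar on the relevant eigenspace. In the frequently used case $\operatorname{rank}P = 1$, with $\operatorname{ran}P = \cc\phi$, this difficulty disappears: for each $k$ the subspace $\operatorname{ran}Q_k \cap \ker A$ is a persistent eigenspace of $\mathcal N(\zeta)$ (there $\one - \zeta P$ acts as the identity), while the remaining, at most one-dimensional, part of the $\lambda_k$-group issues from the simple eigenvalue direction $Q_k\phi$ of $(Q_k P Q_k)|_{\operatorname{ran}Q_k}$, so the decomposition is manifestly analytic and one concludes at once.
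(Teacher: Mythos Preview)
Your reduction to the affine pencil $\mathcal N(\zeta) = W(\one - \zeta P)$ is correct, but the order-by-order reduction process you then attempt is both far harder than needed and, as you yourself flag, incomplete: you have not shown that the ``scalar plus scalar-times-selfadjoint'' structure of the reduced operators persists beyond second order, and there is no evident mechanism forcing it to (the coefficients $\lambda_i(\lambda_k-\lambda_i)^{-p}$ for $p\geq 2$ that appear at higher orders are genuinely non-real, and your hope that their non-selfadjoint parts contribute only scalars is unsubstantiated). As it stands, the argument closes only for $\operatorname{rank}P = 1$.

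The paper bypasses the reduction process entirely with a one-line structural observation. Since $\one - \zeta P = (\one - P) + (1-\zeta)P$, the operator $\mathcal N(\zeta)$ is \emph{unitary} whenever $|1-\zeta| = 1$; equivalently, $M(\alpha)$ is unitary on the set $\{\alpha \in \cc : |\cos(\sqrt{\kappa}\,\alpha)| = 1\}$, which contains nontrivial curves through the origin. Thus the holomorphic family is normal on a set with an accumulation point in its domain, and Kato's Theorem~II.1.10 gives at once that all eigenvalues and eigenprojections are holomorphic, with no exceptional points\,---\,which is precisely ($\tfrac12$Sim). Your higher-order obstacle never arises. What your approach would buy, if completed, is explicit control of how the eigenvalues split order by order, but that is not what the lemma asks for.
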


\begin{proof}
  Analytically extend~$M(\alpha)$ to the complex plane and consider the set~$\mathcal{C} := \{ \alpha \in \cc : |\cos (\alpha\kappa)| = 1 \}$. Then, $M(\alpha)$ is unitary for~$\alpha \in \mathcal{C}$. It can be shown that~$\mathcal{C}$ contains nontrivial curves and hence has at least one accumulation point. The lemma thus follows from Theorem~1.10 in~\cite[\S{II.1.6}]{Kat}.
\end{proof}

Now that we have clarified our assumptions, we can proceed to give the limiting behaviour of the formula for the reduced asymptotic symbol in the sample in Proposition~\ref{prop:Delta} and for the asymptotic currents in Corollary~\ref{cor:avg-curr} as~$\alpha \to 0$.

\begin{lemma}\label{lem:12Sim-imp-order2}
	If Assumption~\textnormal{($\tfrac 12$Sim)} is satisfied then for all
	$\alpha\in \Omega$, a complex neighbourhood of the origin, we have $\lambda_j(\alpha)=\lambda_j(-\alpha)$ and $Q_i(\alpha)=Q_i(-\alpha)$.
\end{lemma}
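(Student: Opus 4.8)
The plan rests on one elementary observation. Since the cosine is an even function, $M(\alpha) = W\cos(\alpha\sqrt{AA^*})$ depends on~$\alpha$ only through~$\alpha^2$, so that $M(\alpha) = M(-\alpha)$ throughout the (punctured) neighbourhood on which the decomposition~\eqref{eq:spec-M} is given. Reading Assumption~\textnormal{($\tfrac 12$Sim)} as it is meant\,---\,an honest spectral decomposition, $\sum_{j} Q_j(\alpha) = \one$ and $Q_j(\alpha)Q_{j'}(\alpha) = \delta_{jj'}Q_j(\alpha)$\,---\,it follows that $\alpha \mapsto (\lambda_j(-\alpha), Q_j(-\alpha))_{j \in I}$ is, just like $\alpha \mapsto (\lambda_j(\alpha), Q_j(\alpha))_{j \in I}$, an analytic family realising the semisimple spectral decomposition of~$M(\alpha)$, and the two families visibly agree at~$\alpha = 0$. (Already at this stage the group part is transparent: $Q_i(\alpha) := \sum_{j \in I_i} Q_j(\alpha)$ is a Riesz projection of~$M(\alpha)$ over a contour enclosing the~$\lambda_i$-group and fixed independently of~$\alpha$, hence even in~$\alpha$ because~$M$ is.) The whole proof then reduces to a uniqueness statement: an analytic family realising the semisimple spectral decomposition on a neighbourhood of~$0$ is determined by its value at~$0$, provided that value consists of pairwise distinct idempotents.

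Before proving this I would make two harmless normalisations. First, I replace~\eqref{eq:spec-M} by a minimal decomposition, merging any indices~$j, j'$ for which $\lambda_j \equiv \lambda_{j'}$ as analytic functions; for such indices the claimed identity $\lambda_j(-\alpha) = \lambda_j(\alpha)$ is automatic, and the projections they carry enter the statement only through the group sums~$Q_i(\alpha)$, which are unaffected. After this reduction the functions~$\lambda_j$, $j \in I$, are pairwise distinct as analytic functions, so their pairwise differences are nonzero analytic functions and the exceptional set~$D \subseteq \Omega$ on which two of them coincide is discrete. On~$\Omega \setminus D$ the operator~$M(\alpha)$ then has exactly~$|I|$ distinct eigenvalues~$\lambda_j(\alpha)$, with the~$Q_j(\alpha)$ as their spectral projections; the relation $Q_j(\alpha)Q_{j'}(\alpha) = 0$ for $j \neq j'$ propagates to all of~$\Omega$ by analyticity, so $Q_j(0)Q_{j'}(0) = 0$. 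Since the rank of~$Q_j(\alpha)$ is an integer-valued analytic function, hence constant and at least one, $Q_j(0) \neq 0$; consequently the idempotents~$Q_j(0)$, $j \in I$, are pairwise distinct, since a common value~$P = Q_j(0) = Q_{j'}(0)$ would satisfy $P = P^2 = 0$.

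One then proves the uniqueness statement as follows. Given another analytic family $(\widetilde\lambda_j, \widetilde Q_j)_{j \in I}$ realising the semisimple spectral decomposition of~$\alpha \mapsto M(\alpha)$ with $\widetilde\lambda_j(0) = \lambda_j(0)$ and $\widetilde Q_j(0) = Q_j(0)$, on the connected set~$\Omega \setminus D$ both families list the eigenvalue--projection pairs of an operator with simple spectrum, hence differ by a permutation~$\varsigma_\alpha$ of~$I$; this matching depends continuously on~$\alpha$ and~$I$ is finite, so~$\varsigma_\alpha$ is a fixed~$\varsigma$. Analytic continuation from~$\Omega \setminus D$ gives $\widetilde Q_j = Q_{\varsigma(j)}$ and $\widetilde\lambda_j = \lambda_{\varsigma(j)}$ on all of~$\Omega$, and evaluating at~$\alpha = 0$ together with the pairwise distinctness of the~$Q_j(0)$ forces $\varsigma = \mathrm{id}$. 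Applying this with $\widetilde\lambda_j(\alpha) := \lambda_j(-\alpha)$ and $\widetilde Q_j(\alpha) := Q_j(-\alpha)$\,---\,legitimate by $M(\alpha) = M(-\alpha)$ and the agreement at~$0$\,---\,yields $\lambda_j(\alpha) = \lambda_j(-\alpha)$ and $Q_j(\alpha) = Q_j(-\alpha)$ for every~$j \in I$; summing the latter over~$j \in I_i$ gives $Q_i(\alpha) = Q_i(-\alpha)$.

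The difficulty is organisational rather than conceptual: the one point needing care is to extract from Assumption~\textnormal{($\tfrac 12$Sim)} that one truly has mutually orthogonal spectral projections summing to~$\one$ (so that $Q_jQ_{j'} = \delta_{jj'}Q_j$ is at hand) and to carry out the reduction to a minimal decomposition cleanly; after that, the facts that a punctured disc minus a discrete set is connected, that the matching permutation is locally constant, and that ranks of analytic families of idempotents are constant are routine (and close to standard perturbation-theoretic material, cf.\ the use of~\cite{Kat} in Lemma~\ref{lem:proj-implies-12Sim}). One could instead treat the group part directly from the evenness of the Riesz projection over the~$\lambda_i$-group together with the evenness of the power sums $\tr(Q_i(\alpha)M(\alpha)^n)$, which forces the characteristic polynomial of that group to have coefficients even in~$\alpha$; but identifying the individual branches still requires the distinctness of the~$Q_j(0)$, so the uniqueness route is the most economical.
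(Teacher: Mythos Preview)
Your argument is correct, and it takes a genuinely different route from the paper's. You rely only on the evenness $M(\alpha)=M(-\alpha)$ together with a uniqueness principle for analytic semisimple spectral decompositions, which you prove by a permutation-matching argument on the connected complement of a discrete exceptional set, pinning down the permutation via the pairwise distinctness of the~$Q_j(0)$. The paper instead writes $M(\alpha)=N(\alpha^2)$ for an analytic family~$N$ and applies Kato's perturbation theory directly to~$N$: the eigendata of~$N$ are a priori Puiseux series in~$\alpha^{1/d_j}$, but by~\cite[\S{II.1}, Theorem~1.9]{Kat} a branching index $d_j>1$ would force $\|P_j(\alpha)\|\to\infty$ at the exceptional point, contradicting the boundedness of $Q_j(\sqrt{\alpha})$ built into the removable-singularity clause of Assumption~($\tfrac 12$Sim). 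Hence $d_j=1$, the eigendata of~$N$ are analytic at the origin, and $\lambda_j(\alpha)=\mu_j(\alpha^2)$, $Q_j(\alpha)=P_j(\alpha^2)$ follow by analytic continuation from real positive~$\alpha$. Your approach is more self-contained\,---\,no appeal to the structure theory of algebraic functions or the divergence-of-projectors criterion\,---\,but carries the organisational overhead of the minimal reduction and the permutation tracking (and tacitly requires shrinking~$\Omega$ to a symmetric punctured disc so that $\alpha\mapsto-\alpha$ stays in the domain); the paper's route buys a three-line proof at the price of the black-box Kato citation.
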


\begin{proof}
	With $N(\alpha)=W \sum_{n=0}^{+\infty} \frac{(-\alpha)^n}{(2n)!} \left(A^*A\right)^n$, we have $M(\alpha)=N(\alpha^2)$ for any $\alpha\in \Omega$, and,
	for $0<\alpha\in\Omega$,
	$N(\alpha)=\sum_{j\in I} \lambda_j(\sqrt{\alpha}) Q_j(\sqrt{\alpha})\equiv \sum_{j\in I} \mu_j(\alpha)P_j(\alpha).$
	By perturbation theory, \cite[\S{II.1}]{Kat}, the eigenvalues and eigenprojectors of $N(\alpha)$,   $\mu_j(\alpha)$ and $P_j(\alpha)$, admit analytic extensions  in $\Omega\setminus\{0\}$
	given by Laurent series in $\alpha^{1/d_j}$, $d_j\in \nn^*$.  Theorem 1.9 in \cite[\S{II.1}]{Kat}, implies $d_j=1$, since otherwise
	$\|P_j(\alpha)\|=\|Q_j(\sqrt{\alpha})\|$ diverges as $\alpha\rightarrow 0$, contradicting ($\tfrac 12$Sim.). Thus, $\mu_j$ and $P_j$ are analytic in $\Omega$ and $\lambda_j(\alpha)=\mu_j(\alpha^2)$ and $Q_j(\alpha)=P_j(\alpha^2)$ for all $\alpha\in \Omega$.
\end{proof}

\begin{theorem}\label{thm:Delta-sc}
  Suppose that Assumption~\textnormal{(Sp)} holds for all $\alpha \in \Omega \cap \rr$, that Assumptions~\textnormal{(IC)} and~\textnormal{($\tfrac 12$Sim)} hold. Then, the symbol $\Delta_\alpha^\infty$ in Proposition~\ref{prop:Delta}, which depends on the coupling strength~$\alpha$, admits an expansion
  \[
    \Delta_\alpha^\infty =  \sum_{i=1}^r \sum_{j,j' \in I_i} \frac{2}{c_j + c_{j'}}  Q_j(0) A \hat{\Xi}({ -\ii}\log\lambda_i) A^* Q_{j'}(0) + O({ \alpha^2})
  \]
  where
  \[
    c_j := \tr [Q_j(0) AA^*] > 0.
  \]
\end{theorem}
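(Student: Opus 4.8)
The plan is to start from the explicit series for the reduced asymptotic symbol obtained in the proof of Proposition~\ref{prop:Delta},
\[
  \Delta_\alpha^\infty = \sum_{m,n \geq 0} M^m Z\SaBl \Xi_{m-n} Z\SaBl^*(M^*)^n,
\]
which converges absolutely under Assumption~(Sp) since $\operatorname{spr}(M(\alpha)) < 1$ and $\sum_{l \in \zz}\|\Xi_l\| < \infty$ (the latter being part of Assumption~(IC)). The first move is to expand $Z\SaBl$ for small~$\alpha$: from the explicit block formulae one reads off $Z\SaBl = -\ii\alpha WA + O(\alpha^3)$ in operator norm, hence $Z\SaBl \Xi_p Z\SaBl^* = \alpha^2\, WA\,\Xi_p\, A^*W^* + O(\alpha^4\|\Xi_p\|)$ uniformly in $p \in \zz$.

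Next I insert the semisimple spectral decomposition $M(\alpha) = \sum_{j \in I}\lambda_j(\alpha)Q_j(\alpha)$ supplied by Assumption~($\tfrac 12$Sim) and perform the double geometric summation. The summand depends on $m,n$ only through $\Xi_{m-n}$, so setting $p = m-n$ and summing over $n \geq 0$ (resp.\ $m \geq 0$) according to the sign of $p$ produces the \emph{single} scalar factor $(1 - \lambda_j(\alpha)\overline{\lambda_{j'}(\alpha)})^{-1}$, finite by Assumption~(Sp). Thus, up to the $O(\alpha^2)$ error inherited from the expansion of $Z\SaBl$,
\[
  \Delta_\alpha^\infty = \alpha^2 \sum_{j,j' \in I} \frac{Q_j(\alpha)\,WA\,\big(\sum_{p\geq 0}\lambda_j(\alpha)^p\Xi_p + \sum_{p\geq 1}\overline{\lambda_{j'}(\alpha)}^{\,p}\Xi_{-p}\big)\,A^*W^*\,Q_{j'}(\alpha)^*}{1 - \lambda_j(\alpha)\overline{\lambda_{j'}(\alpha)}} + O(\alpha^2).
\]

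Now I let $\alpha \to 0$, organizing the sum by the $\lambda_i$-groups: $\lambda_j(0) = \lambda_i$ for $j \in I_i$, and $|\lambda_i| = 1$ since $W$ is unitary. For $j \in I_i$, $j' \in I_{i'}$ with $i \neq i'$ one has $\lambda_j(\alpha)\overline{\lambda_{j'}(\alpha)} \to \lambda_i\overline{\lambda_{i'}} \neq 1$, so that term is $O(\alpha^2)$ and drops out; only the terms with $j,j'$ in a common $I_i$ persist. For those, using $\overline{\lambda_i} = \lambda_i^{-1}$, the partial sums converge to $\sum_{p\in\zz}\lambda_i^p\Xi_p = \hat{\Xi}(-\ii\log\lambda_i)$, and $Q_j(\alpha) \to Q_j(0)$, $Q_{j'}(\alpha)^* \to Q_{j'}(0)$, where the $Q_j(0)$ are the (rank-one, orthogonal) spectral projectors $\chi_j\chi_j^*$ of the selfadjoint reduced perturbation $Q_i AA^* Q_i$ governing the $\lambda_i$-group of $M(\alpha) = W\cos(\alpha\sqrt{AA^*}) = W - \tfrac12\alpha^2 WAA^* + O(\alpha^4)$. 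First-order perturbation theory (using $W^*\chi_j = \overline{\lambda_i}\chi_j$) then gives $\lambda_j(\alpha) = \lambda_i\bigl(1 - \tfrac 12 c_j\alpha^2\bigr) + O(\alpha^4)$ with $c_j = \langle\chi_j,AA^*\chi_j\rangle = \tr[Q_j(0)AA^*]$, so $1 - \lambda_j(\alpha)\overline{\lambda_{j'}(\alpha)} = \tfrac 12(c_j+c_{j'})\alpha^2 + O(\alpha^4)$ and the prefactor tends to $\tfrac{2}{c_j+c_{j'}}$. Finally $Q_j(0)W = \lambda_i Q_j(0)$ (valid because $W$ is normal and $Q_j(0)Q_i = Q_j(0) = Q_iQ_j(0)$) and $|\lambda_i| = 1$ let me replace $Q_j(0)WA\cdots A^*W^*Q_{j'}(0)$ by $Q_j(0)A\cdots A^*Q_{j'}(0)$, and regrouping the surviving terms as $\sum_{i=1}^r\sum_{j,j'\in I_i}$ gives the claimed formula.

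The technical crux, which makes the error genuinely $O(\alpha^2)$ rather than merely $o(1)$, is the uniform control of these estimates as the resolvent-type factor $(1-\lambda_j(\alpha)\overline{\lambda_{j'}(\alpha)})^{-1}$ blows up like $\alpha^{-2}$. Three compensations must be verified. The $O(\alpha^4\|\Xi_p\|)$ remainders from the expansion of $Z\SaBl$, resummed against $(1-\lambda_j(\alpha)\overline{\lambda_{j'}(\alpha)})^{-1}$, stay bounded by a multiple of $\alpha^2\sum_p\|\Xi_p\|$ because $1 - |\lambda_j(\alpha)||\lambda_{j'}(\alpha)| = \tfrac12(c_j+c_{j'})\alpha^2 + O(\alpha^4)$ with $c_j, c_{j'} > 0$\,---\,strict positivity of $c_j$ being itself forced by Assumption~(Sp), since $c_j = 0$ would make $\operatorname{range}Q_j(0)$ a $W$-invariant line on which every $M(\alpha)$ acts as multiplication by $\lambda_i$ of modulus one. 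Replacing $\lambda_j(\alpha)^p$ by $\lambda_i^p$ in the partial $\Xi$-sums costs $\sum_p\|\Xi_p\|\,|\lambda_j(\alpha)^p - \lambda_i^p|$, which, via $|\lambda_j(\alpha)^p - \lambda_i^p| \leq \min(Cp\alpha^2,\,2)$ and a split of the sum near $p \sim \alpha^{-2}$, is $O(\alpha^2)$ precisely thanks to the summability $\sum_{l\in\zz}|l|\,\|\T_{0,l}\| < \infty$ encoded in Assumption~(IC). And $\lambda_j(\alpha)-\lambda_i$ and $Q_j(\alpha)-Q_j(0)$ are $O(\alpha^2)$ (not $O(\alpha)$) because, by Lemma~\ref{lem:12Sim-imp-order2}, $\lambda_j$ and $Q_j$ are even analytic in $\alpha$. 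I expect this reconciliation of the $\alpha^{-2}$ divergence against the tail of the sequence $(\Xi_l)_{l\in\zz}$ to be the main obstacle; the rest is first-order perturbation theory and bookkeeping.
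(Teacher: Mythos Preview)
Your argument is correct and follows essentially the same approach as the paper: insert the spectral decomposition of~$M(\alpha)$, sum the resulting geometric series, expand $\lambda_j(\alpha) = \lambda_i(1-\tfrac12 c_j\alpha^2) + O(\alpha^4)$ via Lemma~\ref{lem:12Sim-imp-order2}, and control the competing errors against the $\alpha^{-2}$ blow-up of $(1-\lambda_j\overline{\lambda_{j'}})^{-1}$ using $c_j > 0$ and the summability $\sum_l |l|\,\|\Xi_l\| < \infty$ from Assumption~(IC). The only difference is organizational\,---\,the paper factors $\Delta^\infty = \Psi(G+G^*)$ and treats the map~$\Psi$ and the operator~$G$ separately, whereas you work directly with the double sum\,---\,and one harmless imprecision: under $(\tfrac12\mathrm{Sim})$ alone the limit projectors $Q_j(0)$ need not be rank one (that is the content of the stronger Assumption~(Sim)), but your argument only actually uses their selfadjointness and the eigenvalue expansion, both of which the paper invokes as well.
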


Before we proceed with the proof, let us remark that the appearance of a logarithm is due to the fact that we have defined our Fourier representation on the interval rather than on the unit circle. By periodicity of~$\hat{\Xi}$ and the fact that~$\lambda_i$ is on the unit circle, the choice of logarithm is irrelevant.

\begin{proof}[Proof of Theorem~\ref{thm:Delta-sc}]
  By Proposition~\ref{prop:Kal}, Assumption~(Sp) implies that the image of~$AA^*$ is contained in no nontrivial subspace which is stable by~$W$. Hence, $c_j := \tr [Q_j(0) AA^*] > 0$ for each~$j$. Since $M(\alpha) = W(\one - \tfrac 12 \alpha^2 AA^*) + O(\alpha^4)$, { Lemma \ref{lem:12Sim-imp-order2} and standard perturbation theory give}
  \begin{equation}
  \label{eq:exp-lambda-j}
    j \in I_i \quad \Rightarrow \quad \lambda_j(\alpha) = \lambda_i(1 - \tfrac 12 \alpha^2 c_j) + O(\alpha^4).
  \end{equation}
  \begin{description}
    \item[Claim.] The map~$\Psi$ introduced in Proposition~\ref{prop:Delta} is such that
    \[
     \alpha^2 \Psi(X) = \sum_{i=1}^r \sum_{j,j' \in I_i} \frac{2}{c_j+c_j'} Q_j(0)XQ_{j'}(0)+O(\alpha^2)
    \]
    for any linear map~$X$ on~$\Hs$.
  \end{description}
  Accepting this claim, we need only note that
  \[
    Z\SaBl = - \ii W \frac{\sin(\alpha\sqrt{AA^*})}{\sqrt{AA^*}} A = -\ii \alpha W A + O(\alpha^3)
  \]
  and the summability condition in Assumption~(IC) imply that the map~$G$ appearing in Proposition~\ref{prop:Delta} has the expansion
  \[
    G = \alpha^2 \bigg( \frac 12 W A \Xi_0A^*W^* + \sum_{k=1}^\infty W^{k+1} A \Xi_k A^* W^* \bigg) + O(\alpha^4)
  \]
  to conclude the proof.

  \begin{description}
    \item[Proof of Claim.] Inserting the spectral decomposition~\eqref{eq:spec-M} of~$M$ in Assumption~($\tfrac 12$Sim) in the definition of~$\Psi(X) := \sum_{m=0}^\infty M^m X (M^*)^m$ yields
    \begin{align*}
      \Psi(X)
        &= \sum_{j,j' \in I} \sum_{m=0}^\infty \lambda_j(\alpha)^m \overline{\lambda_{j'}(\alpha)}^{m} Q_j(\alpha) X Q_{j'}(\alpha)^* \\
        &= \sum_{j,j' \in I}  \frac{1}{1 - \lambda_j(\alpha) \overline{\lambda_{j'}(\alpha)}} Q_j(\alpha) X Q_{j'}(\alpha)^*.
    \end{align*}
    Since~$W$ is unitary we have~$Q_j(0)^* = Q_j(0)$ and $\overline{\lambda_{j'}(0)} = \lambda_{j'}(0)^{-1}$. If $\lambda_j(0) \neq \lambda_{j'}(0)$, the expansion~\eqref{eq:exp-lambda-j} gives
    \[
      \frac{1}{1 - \lambda_j(\alpha) \overline{\lambda_{j'}(\alpha)}} = \frac{1}{1 - \lambda_j(0) \lambda_{j'}(0)^{-1}} + O(\alpha^2)
    \]
    This leaves the terms for which~$\lambda_j(0) = \lambda_{j'}(0)$ (i.e.\ $j,j' \in I_i$ for some~$i$), for which we have
    \begin{align*}
      \lambda_j(\alpha) \overline{\lambda_{j'}(\alpha)}
        &= 1 - \tfrac 12 \alpha^2 (c_j + c_{j'}) + O(\alpha^4).
    \end{align*}
    by~\eqref{eq:exp-lambda-j}. Hence,
    \[
      \frac{\alpha^2}{1 - \lambda_j(\alpha) \overline{\lambda_{j'}(\alpha)}} = \frac{2}{c_j + c_{j'}} + O(\alpha^2)
    \]
    whenever $j,j' \in I_i$ for some common~$i$.
    \hfill\qed
  \end{description}
  And the Claim yields the {Theorem}.
\end{proof}

\begin{proposition}
  Suppose that Assumption~\textnormal{(Sp)} for all $\alpha \in \Omega \cap \rr$ and that Assumptions~\textnormal{(IC)}, { \textnormal{(Bl)}} and~\textnormal{($\tfrac 12$Sim)} hold. Then, with~$J_k$ as in Corollary~\ref{cor:avg-curr} depending on~$\alpha$, we have
  \begin{equation}
    J_k = \alpha^2 \tr(\Pi_k D) + O({ \alpha^4}),
  \end{equation}
  as $\Omega \ni \alpha \to 0$, where
  \[
     D = \sum_{h=1}^{r} \bigg(
      -A^* Q_h A \hat{\Xi}(-\ii\log \lambda_h)
      + \sum_{j,j' \in I_h} \frac{2}{c_j + c_{j'}} A^* Q_j(0) A \hat{\Xi}( -\ii\log \lambda_h) A^* Q_{j'}(0)A
      \bigg).
  \]
\end{proposition}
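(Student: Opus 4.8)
The plan is to take the exact current formula of Section~4 and expand it in powers of~$\alpha$, the delicate point being that one family of terms which \emph{looks} negligible by naive power counting actually contributes at order~$\alpha^2$. Throughout I work with real $\alpha$ near~$0$, where Assumption~(Sp) is in force.

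First I would record the starting point: by Proposition~\ref{prop:flux_X} applied to $X=\Pi_k$ (admissible since $[\Pi_k,U]=0$ by Assumption~(Bl)), together with~\eqref{eq:JofX} and~\eqref{eq:blocks-Xi-infty} at~$m=0$,
\[
  J_k=\tr_{\Hb}\Big[\Pi_k\Big(\sum_{n,m\geq 0}Y_n\Xi_{n-m}Y_m^*-\Xi_0\Big)\Big],\qquad Y_0=C,\quad Y_n=Z\BlSa M^{n-1}Z\SaBl\ (n\geq 1).
\]
From the formulas of Section~\ref{sec:small} one reads $C=\one-\tfrac12\alpha^2A^*A+O(\alpha^4)$, $Z\BlSa=-\ii\alpha A^*+O(\alpha^3)$, $Z\SaBl=-\ii\alpha WA+O(\alpha^3)$, $M=W-\tfrac12\alpha^2WAA^*+O(\alpha^4)$, so $\|M^{n-1}-W^{n-1}\|\leq Cn\alpha^2$, while by Assumption~($\tfrac12$Sim) $\|M(\alpha)^n\|\leq C_\Omega\,r(M(\alpha))^n$ with $r(M(\alpha))<1$ by (Sp). Splitting the double sum according to which indices vanish, the $(0,0)$ term gives $C\Xi_0C-\Xi_0=-\tfrac12\alpha^2(A^*A\Xi_0+\Xi_0A^*A)+O(\alpha^4)$, and for the ``mixed'' family $\{n\geq1,\,m=0\}$ one has $Y_n\Xi_nY_0^*=-\alpha^2A^*W^nA\,\Xi_n+E_n$ with $\sum_{n\geq1}\|E_n\|\,\|\Xi_n\|=O(\alpha^4)$: the summability $\sum_l|l|\,\|\Xi_l\|<\infty$ from Assumption~(IC) absorbs both the $O(n\alpha^2)$ error from $M^{n-1}-W^{n-1}$ and the geometric tail $r(M)^{n-1}$. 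Thus this family contributes $-\alpha^2\sum_{n\geq1}A^*W^nA\,\Xi_n+O(\alpha^4)$, and its adjoint family $\{n=0,\,m\geq1\}$ contributes $-\alpha^2\sum_{m\geq1}\Xi_{-m}\,A^*(W^*)^mA+O(\alpha^4)$.

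The main obstacle is the ``bulk'' family $\{n,m\geq1\}$. Although $Y_n\Xi_{n-m}Y_m^*$ carries four factors of order~$\alpha$, one \emph{cannot} conclude it is $O(\alpha^4)$: since $\|Y_n\|=O(\alpha^2r(M)^{n-1})$ one gets $\sum_{n,m\geq1}\|Y_n\|\,\|\Xi_{n-m}\|\,\|Y_m\|\sim(1-r(M(\alpha)))^{-1}=O(\alpha^{-2})$, the enhancement coming from $r(M(\alpha))=1-\tfrac12\alpha^2\min_jc_j+O(\alpha^4)$ with $c_j:=\tr[Q_j(0)AA^*]>0$ by Proposition~\ref{prop:Kal} and (Sp). To extract its $O(\alpha^2)$ part I would resum: writing $\Xi_{n-m}=\int_0^{2\pi}\Exp{-\ii(n-m)\theta}\hat{\Xi}(\theta)\,\tfrac{\d\theta}{2\pi}$ and $\sum_{n\geq1}\Exp{-\ii n\theta}Y_n=\hat{\fY}(\theta)-C=Z\BlSa(\Exp{\ii\theta}-M)^{-1}Z\SaBl$, and inserting the spectral decomposition~\eqref{eq:spec-M},
\[
  \sum_{n,m\geq1}Y_n\Xi_{n-m}Y_m^*=\sum_{j,j'\in I}\int_0^{2\pi}\frac{B_j\,\hat{\Xi}(\theta)\,B_{j'}^*}{(\Exp{\ii\theta}-\lambda_j(\alpha))\,\overline{(\Exp{\ii\theta}-\lambda_{j'}(\alpha))}}\,\frac{\d\theta}{2\pi},\qquad B_j:=Z\BlSa Q_j(\alpha)Z\SaBl.
\]
When $\lambda_j(0)\neq\lambda_{j'}(0)$ the denominator stays bounded away from~$0$ on the unit circle, the integral is $O(1)$, and since $\|B_jB_{j'}^*\|=O(\alpha^4)$ such terms are genuinely $O(\alpha^4)$ and are discarded. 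For $j,j'\in I_h$ I would evaluate the integral \emph{exactly}, by partial fractions in~$\Exp{\ii\theta}$ and the absolutely convergent Fourier series of~$\hat{\Xi}$ (again using $\sum_l|l|\,\|\Xi_l\|<\infty$):
\[
  \int_0^{2\pi}\frac{\hat{\Xi}(\theta)\,\d\theta}{2\pi\,(\Exp{\ii\theta}-\lambda_j(\alpha))\,\overline{(\Exp{\ii\theta}-\lambda_{j'}(\alpha))}}=\frac{1}{1-\overline{\lambda_{j'}(\alpha)}\lambda_j(\alpha)}\Big(\sum_{l\geq1}\lambda_j(\alpha)^l\Xi_l+\sum_{l\geq0}\overline{\lambda_{j'}(\alpha)}^{\,l}\Xi_{-l}\Big),
\]
and then use Lemma~\ref{lem:12Sim-imp-order2} and the expansion~\eqref{eq:exp-lambda-j}, $\lambda_j(\alpha)=\lambda_h(1-\tfrac12\alpha^2c_j)+O(\alpha^4)$, to get $1-\overline{\lambda_{j'}(\alpha)}\lambda_j(\alpha)=\tfrac12\alpha^2(c_j+c_{j'})+O(\alpha^4)$ and $\sum_{l\geq1}\lambda_j(\alpha)^l\Xi_l+\sum_{l\geq0}\overline{\lambda_{j'}(\alpha)}^{\,l}\Xi_{-l}=\sum_{l\in\zz}\lambda_h^l\Xi_l+O(\alpha^2)=\hat{\Xi}(-\ii\log\lambda_h)+O(\alpha^2)$ (recall $\Exp{\ii(-\ii\log\lambda_h)}=\lambda_h$). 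With $B_j=-\alpha^2\lambda_hA^*Q_j(0)A+O(\alpha^4)$ and $Q_j(0)^*=Q_j(0)$ (as in the proof of Theorem~\ref{thm:Delta-sc}), so that $B_j(\,\cdot\,)B_{j'}^*$ carries the cancelling phase $\lambda_h\overline{\lambda_h}=1$, the bulk family contributes $\alpha^2\sum_{h=1}^r\sum_{j,j'\in I_h}\tfrac{2}{c_j+c_{j'}}A^*Q_j(0)A\,\hat{\Xi}(-\ii\log\lambda_h)\,A^*Q_{j'}(0)A+O(\alpha^4)$.

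It remains to assemble: $J_k=\tr_{\Hb}[\Pi_k(\mathcal S+\mathcal B)]+O(\alpha^4)$, with $\mathcal B$ the bulk term above and $\mathcal S=-\alpha^2\big(\tfrac12(A^*A\Xi_0+\Xi_0A^*A)+\sum_{n\geq1}A^*W^nA\,\Xi_n+\sum_{m\geq1}\Xi_{-m}\,A^*(W^*)^mA\big)$. Invoking Assumption~(Bl) in the form $[\Pi_k,\Xi_l]=0$ (hence $\tr_{\Hb}[\Pi_k[\mathcal O,\Xi_l]]=0$ for any~$\mathcal O$) together with $W^l=\sum_h\lambda_h^lQ_h$, $(W^*)^l=\sum_h\lambda_h^{-l}Q_h$ and $A^*A=\sum_hA^*Q_hA$, one rewrites $\tr_{\Hb}[\Pi_k\mathcal S]=\alpha^2\tr_{\Hb}\big[\Pi_k\big({-\sum_hA^*Q_hA\,\hat{\Xi}(-\ii\log\lambda_h)}\big)\big]$, giving exactly $J_k=\alpha^2\tr(\Pi_kD)+O(\alpha^4)$. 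The genuine difficulty is the one flagged: recognizing that $\{n,m\geq1\}$ is not $O(\alpha^4)$ and computing its $O(\alpha^2)$ part; the partial-fraction evaluation against the absolutely convergent Fourier series of~$\hat{\Xi}$ is cleaner than a Poisson-kernel approach, which would produce spurious logarithmic factors. The secondary technical burden is uniform-in-$\alpha$ control of the operator series, for which the weighted summability in~(IC), the semisimplicity bound $\|M(\alpha)^n\|\leq C_\Omega r(M(\alpha))^n$, and the evenness of $\lambda_j,Q_j$ in~$\alpha$ (Lemma~\ref{lem:12Sim-imp-order2}) are precisely what is used.
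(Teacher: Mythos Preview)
Your argument is correct and reaches the same formula, but it is organized differently from the paper's own proof. The paper stays entirely in the ``time domain'': for the mixed family it keeps $M^{l-1}$ in $Y_l$, inserts the spectral decomposition $M(\alpha)=\sum_j\lambda_j(\alpha)Q_j(\alpha)$, sums $\sum_{l\geq1}\lambda_j(\alpha)^l\Xi_l$ to $F(-\ii\log\lambda_j(\alpha))$, and then uses differentiability of~$F$ to replace $\lambda_j(\alpha)$ by~$\lambda_h$. For the bulk family it splits $\sum_{l,l'>0}$ as $l=l'$, $l<l'$, $l>l'$ and sums the resulting geometric series $\sum(\lambda_j\overline{\lambda_{j'}})^l$ directly, exactly as in the proof of Theorem~\ref{thm:Delta-sc}. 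You instead (i) replace $M^{n-1}$ by $W^{n-1}$ at the outset for the mixed terms, which is slightly more economical since the weighted summability in~(IC) absorbs the $O(n\alpha^2)$ error in one shot, and (ii) handle the bulk family via the Fourier representation $\Xi_{n-m}=\int\Exp{-\ii(n-m)\theta}\hat\Xi(\theta)\,\tfrac{\d\theta}{2\pi}$ and the resolvent identity $\hat\fY(\theta)-C=Z\BlSa(\Exp{\ii\theta}-M)^{-1}Z\SaBl$, then evaluate the resulting contour-type integral termwise. This second route is arguably cleaner because it packages the singular $1/(1-\lambda_j\overline{\lambda_{j'}})$ factor in a single step and makes the link with~$\hat\fY$ explicit; the paper's approach is more elementary but requires the separate $l=l'$/$l\neq l'$ bookkeeping.

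One small point to tighten: your justification that the off-group terms ($\lambda_j(0)\neq\lambda_{j'}(0)$) are $O(\alpha^4)$ is not quite right as stated. The pointwise denominator $(\Exp{\ii\theta}-\lambda_j(\alpha))\overline{(\Exp{\ii\theta}-\lambda_{j'}(\alpha))}$ is \emph{not} bounded away from~$0$ on the circle---each factor can be as small as $1-|\lambda_j(\alpha)|=O(\alpha^2)$. The correct argument is simply to apply your own exact partial-fraction formula to these terms as well: it gives the prefactor $(1-\lambda_j(\alpha)\overline{\lambda_{j'}(\alpha)})^{-1}$, which converges to $(1-\lambda_h\overline{\lambda_{h'}})^{-1}\neq 0$, so the integral is $O(1)$ and the $B_jB_{j'}^*=O(\alpha^4)$ prefactor does the rest.
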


\begin{proof}
  The starting point is the expression \eqref{eq:JofX} for $J_k$.
  \begin{align*}
    Y_0&=C=\cos(\alpha\sqrt{A^*A})=I-\frac{\alpha^2}{2} A^*A+O(\alpha^4) \\
    Y_l&=Z\BlSa M^{l-1} Z\SaBl=-\alpha^2 A^* M^{l-1} W A +O({ \alpha^4}\|M^{l-1}\|),
  \end{align*}
  where $M^{l-1}=M(\alpha)^{l-1}$ is such that $\|M(\alpha)^{l-1}\|$ is uniformly bounded in $l>0$ and $\alpha\in \Omega \cap \rr$. Thus, using Equation \eqref{eq:JofX}
  we have
  \begin{equation}\label{jktheta}
  \begin{split}
    J_k=&\tr_{\Hb} \bigg[ \Pi_k \bigg(-\frac{\alpha^2}{2} (A^*A\Xi_0+\Xi_0A^*A)
    \\ & \quad\qquad \qquad
    + \sum_{l=1}^{+\infty} Y_l \Xi_l C+C\sum_{l=1}^{+\infty} \Xi_{-l} Y_l^*+\sum_{l,l'>0} Y_l \Xi_{l-l'} Y_{l'}^* \bigg) \bigg]
    +O(\alpha^4).
  \end{split}
  \end{equation}
  Let us estimate the first sum, making use of Assumptions (IC) and  ($\tfrac 12$Sim)
  \begin{align*}
  	\sum_{l=1}^{+\infty} Y_l \Xi_l C&=-\alpha^2\sum_{l=1}^{+\infty}\sum_{j\in I}  A^* Q_j(\alpha) W A \lambda_j(\alpha)^{l-1} \Xi_l +O({ \alpha^4}) \\
  		&= -\alpha^2 \sum_{j\in I} \frac{1}{\lambda_j(\alpha)} A^* Q_j(\alpha) W A \left(\sum_{l=1}^{+\infty} \lambda_j(\alpha)^{l}\Xi_l\right)+O({ \alpha^4})~.
  \end{align*}
  Thanks to $\Xi_l^*=\Xi_{-l}$, we have $\hat \Xi(\theta)=F(\theta)+F(\theta)^*=2\operatorname{Re}(F(\theta))$, where
  \[
  	F(\theta)=\frac12\Xi_0+\sum_{l\geq 1}\Exp{\ii l\theta}\Xi_l.
  \]
  Now, $\frac{1}{\lambda_j(\alpha)}Q_j(\alpha)W=Q_j(0)+O(\alpha)$, and $F$ is differentiable (since $\sum_{k\in \zz} |k| \|\Xi_k\|<+\infty$) so $F({ -\ii} \log \lambda_j(\alpha))=F({ -\ii} \log \lambda_h)+O(\alpha)$ where $h$ is such that $j\in I_h$. Taking into account the identity $\sum_{j\in I}Q_j(0)=\one$, and repeating the argument for the second sum, we get
  \begin{equation*}
  \begin{split}
    &\frac{\alpha^2}{2} ({ A^*A}\Xi_0+\Xi_0{ A^*A})-\sum_{l=1}^{+\infty} Y_l \Xi_l C-C\sum_{l=1}^{+\infty} \Xi_{-l} Y_l^*
    \\ & \qquad\qquad =\alpha^2\sum_{h=1}^r 2\operatorname{Re}(A^*Q_h A F( { -\ii}\log \lambda_h))+O({ \alpha^4}).
  \end{split}
  \end{equation*}
  The only thing left is the double sum. We consider the cases where $l=l', l<l'$ and $l>l'$ separately to write
  \begin{align*}
  	\sum_{l,l'>0} Y_l\Xi_{l-l'}Y_{l'}^*&=\sum_{l=1}^{+\infty} Y_l \Xi_0 Y_l^* +2\operatorname{Re}\left(\sum_{d>0}\sum_{l>0}   Y_{l+d} \Xi_{d} Y_l^*\right).
  \end{align*}
  Writing $M=\sum_{j\in I} \lambda_j(\alpha)Q_j(\alpha)$ and performing the summations as in the proof of Theorem \ref{thm:Delta-sc}, we obtain
  \begin{align*}
  	\sum_{l=1}^{+\infty} Y_l\Xi_0 Y_l^*&=\sum_{j,j'\in I}\frac{1}{1-\lambda_j(\alpha)\overline{\lambda_{j'}(\alpha)}} Z\BlSa Q_j(\alpha) Z\SaBl \Xi_0 Z\SaBl^* Q_{j'}(\alpha)^* Z\BlSa^*~.
  \end{align*}
  We also saw in the proof of Theorem \ref{thm:Delta-sc} that as $\alpha\rightarrow 0$
  \[
  	\frac{\alpha^2}{1-\lambda_j(\alpha)\overline{\lambda_{j'}(\alpha)}}=
  	\left\{ \begin{array}{cc}
  	\frac{2}{(c_j+c_{j'})}+O(\alpha^2) & \text{ if $\lambda_j(0)=\lambda_{j'}(0)$} \\
  	O(\alpha^2)  & \text{if $\lambda_j(0)\neq \lambda_{j'}(0)$}
  \end{array}\right.
  \]
  and since $Z\SaBl=-\ii\alpha W A+O(\alpha^3)$ and $Z\BlSa=-\ii\alpha A^*+O(\alpha^3)$ we obtain
  \[
  	\sum_{l=1}^{+\infty}Y_l \Xi_0 Y_l^*=\alpha^2\sum_{h=1}^r \sum_{j,j'\in I_h} \frac{2}{c_j+c_{j'}} A^* Q_j(0)  A \Xi_0 A^* Q_{j'}(0) A+O({ \alpha^4})~.
  \]
  Similarly, using the differentiability of $z\mapsto \sum_{d>0} z^d \Xi_d$ we have
  \[
  	\sum_{d>0}\sum_{l>0}   Y_{l+d} \Xi_d Y_l^*=\alpha^2 \sum_{h=1}^r \sum_{j,j'\in I_h}\frac{2}{c_j+c_{j'}} A^* Q_j(0) A \left(\sum_{d>0} \lambda_h^d \Xi_d\right) A^* Q_{j'}(0) A+O({ \alpha^4})~.
  \]
  Adding up all the previous estimates we get for the order-$\alpha^2$ term in parentheses in~\eqref{jktheta}
  \[
  	\sum_{h=1}^r 2\operatorname{Re}\bigg\{-A^*Q_h A F( { -\ii}\log \lambda_h)+\sum_{j,j'\in I_h} \frac{2}{c_j+c_{j'}} A^* Q_j(0)  A  F({ -\ii}\log \lambda_h) A^* Q_{j'}(0) A\bigg\}.
  \]
  Finally, the relation $[\Pi_k,F(\theta)]=0$ and the cyclicity of the trace in the definition of the current proves the proposition.
\end{proof}

For the remainder of the section, we fix
\begin{equation}\label{eq:couplingA}
  A = \sum_{k=1}^{n\Bl} \phi_k \psi_k^*
\end{equation}
for an orthonormal basis~$(\psi_k)_{k = 1}^{n\Bl}$ of~$\Hb$ and an orthonormal family~$(\phi_k)_{k=1}^{n\Bl}$ in~$\Hs$, and assume that
\[
  \hat{\T}(\theta) = \sum_{k=1}^{n\Bl} f_k(\theta) \psi_k \psi^*_k
\]
for some scalar functions~$f_k : [0,2\pi] \to [0,1]$. This corresponds to the situation from the introduction. Note that~$AA^*$ being an orthogonal projector on~$\Hs$, Lemma~\ref{lem:proj-implies-12Sim} applies.

The following proposition expresses, to leading order in the coupling parameter~$\alpha$, the currents as a sum of the contributions from channels corresponding to the eigenvalues~$\{ \lambda_i \}_{i\in I}$  associated to normalized eigenvectors~$\{ \chi_i \}_{i\in I}$ of~$W$, each expressed in terms of a simple star-shaped linear circuit.

\begin{proposition}\label{prop:circuit}
  Suppose that Assumption~\textnormal{(Sp)} holds for all~$\alpha \in \Omega \cap \rr$ and that Assumptions~\textnormal{(IC)} and~\textnormal{(Sim)} are satisfied in the setup described above. Then the symbol~$\Delta_\alpha^\infty$ admits an expansion
  \[
    \Delta_\alpha^\infty = \sum_{i=1}^r \sum_{k=1}^{n\Bl} \frac{ |\braket{\chi_i,\phi_k}|^2}{\sum_{k' = 1}^{n\Bl} |\braket{\chi_i,\phi_{k'}}|^2 } f_k({{ -\ii} \log \lambda_i})  \chi_i \chi_i^* + O({ \alpha^2})
  \]
  and the $k$-th current admits an expansion
  \[
    J_k = \alpha^2 \sum_{i \in I} J_{k,i}^{(2)} + O({ \alpha^4})
  \]
  where
  \begin{equation}\label{eq:asy-surr-2}
    J_{k,i}^{(2)} =
   \sum_{k'}\frac{|\braket{\phi_k, \chi_i}|^2|\braket{\phi_{k'}, \chi_i}|^2}{\sum_{k''=1}^{n\Bl} |\braket{\phi_{k''},\chi_i}|^2} \big(f_{k'}({ { -\ii} \log \lambda_i}) -  f_k({ { -\ii}\log \lambda_i})\big).
  \end{equation}
  Equivalently, the last equation states that the currents~$\{ J_{k,i}^{(2)} \}_{k=1}^{n_b}$ are the solutions to the classical Kirchhoff problem in Figure~\ref{fig:circuit} with voltage sources~$\{ f_k({ { -\ii}\log \lambda_i}) \}_{k=1}^{n\Bl}$ and resistors~$\{ |\braket{\phi_k,\chi_i}|^{-2} \}_{k=1}^{n\Bl}$.
\end{proposition}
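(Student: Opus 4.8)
The plan is to read off the proposition as a specialisation of Theorem~\ref{thm:Delta-sc} and of the proposition just before it (the one stating $J_k = \alpha^2 \tr[\Pi_k D] + O(\alpha^4)$), both of which already carry the analytic work\,---\,in particular the golden-rule-type extraction of the order-$\alpha^2$ contributions coming from the resonances $\Exp{\ii\theta} \in \operatorname{sp} W$. First I would record the simplifications of the setup fixed by~\eqref{eq:couplingA}: there $A^*A = \sum_k \psi_k\psi_k^* = \one_{\Hb}$ and $AA^* = \sum_k \phi_k\phi_k^*$ is the orthogonal projector onto $\operatorname{span}\{\phi_1,\dots,\phi_{n\Bl}\}$, so Lemma~\ref{lem:proj-implies-12Sim} applies (with $\kappa = 1$), Assumption~($\tfrac12$Sim) holds, and Assumption~(Bl) holds with the rank-one projectors $\Pi_k = \psi_k\psi_k^*$; hence the two results above are available. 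Under Assumption~(Sim) the $\lambda_i$-group $I_i$ reduces to the singleton $\{i\}$ with $Q_i(0) = \chi_i\chi_i^*$, and
\[
  c_i = \tr[Q_i(0) AA^*] = \|A^*\chi_i\|^2 = \sum_{k=1}^{n\Bl} |\braket{\chi_i,\phi_k}|^2,
\]
which is strictly positive by Assumption~(Sp) and Proposition~\ref{prop:Kal}. Finally I would note that, with the assumed form of the reservoir symbol, $\hat{\Xi}(\theta)$ is diagonal in the $\psi_k$-basis, $\hat{\Xi}(\theta)\psi_k = f_k(\theta)\psi_k$, the $f_k$ being the ones from Corollary~\ref{cor:avg-curr}, i.e.\ $f_k(\theta) = \tr[\Pi_k \hat{\Xi}(\theta)]$.

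For the symbol I would substitute these facts into Theorem~\ref{thm:Delta-sc}: the double sum over $I_i$ collapses to $c_i^{-1}\chi_i\chi_i^*\, A\, \hat{\Xi}({-\ii}\log\lambda_i)\, A^*\chi_i\chi_i^*$, and since $A^*\chi_i = \sum_k \braket{\phi_k,\chi_i}\psi_k$ and $\hat{\Xi}$ is diagonal, the middle scalar equals $\braket{A^*\chi_i, \hat{\Xi}({-\ii}\log\lambda_i) A^*\chi_i} = \sum_k |\braket{\chi_i,\phi_k}|^2 f_k({-\ii}\log\lambda_i)$; dividing by $c_i$ yields the claimed expansion of $\Delta_\alpha^\infty$.

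For the currents I would start from $J_k = \alpha^2 \tr[\Pi_k D] + O(\alpha^4)$ and simplify $D$ under~(Sim): writing $w_h := A^*\chi_h$, one has $A^*Q_h A = w_h w_h^*$, and the double-sum term of the $h$-th summand of $D$ becomes $c_h^{-1} s_h\, w_h w_h^*$ with $s_h := \braket{w_h, \hat{\Xi}({-\ii}\log\lambda_h) w_h} = \sum_{k'} |\braket{\chi_h,\phi_{k'}}|^2 f_{k'}({-\ii}\log\lambda_h)$, so that $D = \sum_h \big( c_h^{-1} s_h\, w_h w_h^* - w_h w_h^* \hat{\Xi}({-\ii}\log\lambda_h) \big)$. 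Taking $\tr[\Pi_k\,\cdot\,]$ with $\Pi_k = \psi_k\psi_k^*$ and using $\braket{\psi_k, w_h} = \braket{\phi_k,\chi_h}$ together with $\hat{\Xi}({-\ii}\log\lambda_h)\psi_k = f_k({-\ii}\log\lambda_h)\psi_k$, I expect $\tr[\Pi_k D] = \sum_h |\braket{\phi_k,\chi_h}|^2 \big( c_h^{-1} s_h - f_k({-\ii}\log\lambda_h) \big)$; rewriting $f_k({-\ii}\log\lambda_h)$ as $c_h^{-1}\sum_{k'}|\braket{\chi_h,\phi_{k'}}|^2 f_k({-\ii}\log\lambda_h)$ and using $c_h = \sum_{k''}|\braket{\chi_h,\phi_{k''}}|^2$, each summand becomes precisely $J_{k,h}^{(2)}$ of~\eqref{eq:asy-surr-2}, giving $J_k = \alpha^2 \sum_i J_{k,i}^{(2)} + O(\alpha^4)$.

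For the Kirchhoff reformulation I would fix $i$, let $V$ be the potential of the central node of the star network of Figure~\ref{fig:circuit}, and combine Ohm's law (the current from the node into reservoir $k$ being $|\braket{\phi_k,\chi_i}|^2 (V - f_k({-\ii}\log\lambda_i))$) with the current conservation $\sum_k |\braket{\phi_k,\chi_i}|^2(V - f_k({-\ii}\log\lambda_i)) = 0$ at the node to solve $V = c_i^{-1}\sum_k |\braket{\phi_k,\chi_i}|^2 f_k({-\ii}\log\lambda_i)$, which is well defined because $c_i > 0$; substituting $V$ back reproduces $J_{k,i}^{(2)}$, with branches such that $\braket{\phi_k,\chi_i} = 0$ (infinite resistance) carrying no current, matching the vanishing prefactor in~\eqref{eq:asy-surr-2}. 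I do not anticipate a real obstacle: the whole argument is bookkeeping with rank-one projectors and the overlaps $\braket{\phi_k,\chi_i}$; the single point that deserves to be spelled out is why $\hat{\Xi}$ (and not merely $\hat{\T}$) is diagonal in the $\psi_k$-basis, so that the $f_k$ appearing in $\Delta_\alpha^\infty$ and in the $J_k$ coincide with those of Corollary~\ref{cor:avg-curr}.
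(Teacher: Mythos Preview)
Your proposal is correct and is precisely the argument the paper intends: Proposition~\ref{prop:circuit} is stated without proof and is meant to be read off as the rank-one specialisation of Theorem~\ref{thm:Delta-sc} and of the preceding proposition on $J_k = \alpha^2\tr[\Pi_k D] + O(\alpha^4)$, which is exactly what you do. Your bookkeeping with $Q_i(0)=\chi_i\chi_i^*$, $c_i=\sum_k|\braket{\chi_i,\phi_k}|^2$, $A^*\chi_i=\sum_k\braket{\phi_k,\chi_i}\psi_k$, and the diagonal action of $\hat\Xi$ on the $\psi_k$ is accurate, and your Kirchhoff verification is fine; your closing remark about why $\hat\Xi$ (rather than $\hat T$) is diagonal in the $\psi_k$-basis is a fair point to flag.
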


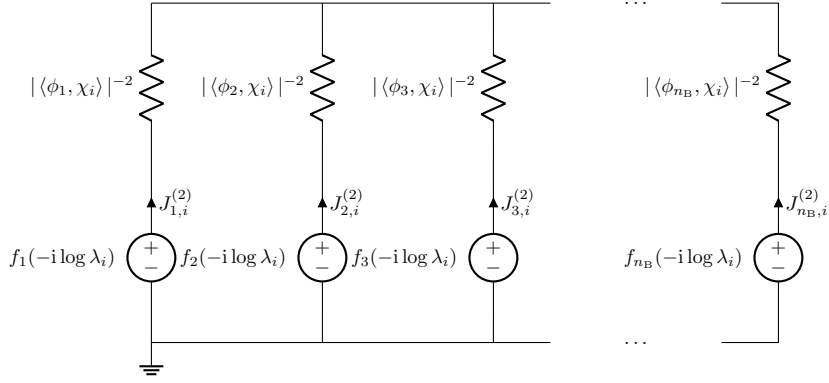
\begin{figure}
  \begin{center}
  {\small
  \begin{circuitikz}[scale = .75, transform shape]
    \draw (0,0) to[V=$f_1({ -\ii}\log \lambda_i)$,invert,i_>=$J_{1,i}^{(2)}$] (0,3) to[R=$|\braket{\phi_1,\chi_i}|^{-2}$] (0,6);
    \draw (3,0) to[V=$f_2( { -\ii}\log \lambda_i)$,invert,i_>=$J_{2,i}^{(2)}$] (3,3) to[R=$|\braket{\phi_2,\chi_i}|^{-2}$] (3,6);
    \draw (6,0) to[V=$f_3({ -\ii}\log \lambda_i)$,invert,i_>=$J_{3,i}^{(2)}$] (6,3) to[R=$|\braket{\phi_3,\chi_i}|^{-2}$] (6,6);
    \draw (10,0) -- (11,0) to[V=$f_{n\Bl}({{ -\ii}\log \lambda_i})$,invert,i_>=$J_{n\Bl,i}^{(2)}$] (11,3) to[R=$|\braket{\phi_{n\Bl},\chi_i}|^{-2}$] (11,6) -- (10,6);
    \draw (7,0) -- (0,0) node[ground] {};
    \draw (0,6) -- (7,6);
    \node at (8.5,0) {$\ldots$};
    \node at (8.5,6) {$\ldots$};
  \end{circuitikz}
  }
  \end{center}
  \caption{The currents~$(J_{k,i}^{(2)})_{k=1}^{n_b}$ in Proposition~\ref{prop:circuit} are the steady-state solutions to a linear circuit with voltage sources~$(f_k({ { -\ii}\log \lambda_i}))_{k=1}^{n_b}$ and resistors~$(|\braket{\phi_{k},\chi_i}|^{-2})_{k=1}^{n\Bl}$. Such a circuit is associated to each eigenvalue~$\lambda_i$ of~$W$.}
  \label{fig:circuit}
\end{figure}

Note that the sign of the currents is not completely determined by the properties of the initial state of the different reservoirs. While this phenomenon is {not} specific to our model, formulas such as~\eqref{eq:asy-surr-2} may allow one to explore its relation to the different phases and properties of the walk on the sample. In keeping with the illustration of the introduction, consider~$\Hb =\cc^2$, with orthonormal basis $\{\psi_1,\psi_2\}$ and note that if the functions~$f_1$ and~$f_2$ in the decomposition
 \[
   \hat T(\theta) = f_1({ \theta}) \psi_1 \psi^*_1 + f_2({ \theta}) \psi_2 \psi^*_2
 \]
 of~$T: \ell^2(\zz) \times \cc^2 \to \ell^2(\zz) \times \cc^2$ are such that neither~$f_1 \geq f_2$ or~$f_1 \leq f_2$ everywhere,
 then we can construct a unitary one-particle dynamics~$W_{\rightarrow} : \Hs \to \Hs$ in the sample and a bounded operator~$A_{\rightarrow} : \cc^2 \to \Hs$  of the form \eqref{eq:couplingA}  such that $J_1 > 0$ for all nonzero~$\alpha \in \Omega$ sufficiently small,
 as well as a unitary dynamics~$W_{\leftarrow} : \Hs \to \Hs$ in the sample and a bounded operator~$A_{\leftarrow} : \cc^2 \to \Hs$ of the form \eqref{eq:couplingA} such that $J_1 < 0$ for all nonzero $\alpha \in \Omega$ sufficiently small.
 Indeed, we can choose~$W$ to have simple eigenvalues associated to eigenvectors~$(\chi_i)_{i \in I}$ such that~$\braket{\chi_i,\phi_k} \neq 0$ for both~$k=1$ and~$k=2$.
 Then, by~\eqref{eq:asy-surr-2},
 choosing the eigenvalues in~$\{z \in \mathbf{S}^1 : f_1({ -\ii} \log z) < f_2( { -\ii}\log z)\}$ [resp. $f_1({ -\ii}\log z) > f_2({ -\ii}\log z)$] gives~$J_1 > 0$ [resp.~$J_1 < 0$] for~$\alpha$ small enough.

\begin{remark} \label{posentprodconst}
  In case $f_k(\theta)\equiv f_k$ for all $k$, Proposition \ref{prop:circuit} and Corollary \ref{sigmuj} provide the following small coupling expression of the entropy production rate
  \[
  \sigma_+=\alpha^2 \sum_{k=1}^{n\Bl}\sum_{k'=1}^{n\Bl} \mu_k  (f_{k'} - f_k)\sum_{i\in I}\frac{|\braket{\phi_k, \chi_i}|^2|\braket{\phi_{k'}, \chi_i}|^2}{\sum_{k''=1}^{n\Bl} |\braket{\phi_{k''},\chi_i}|^2}  +O({ \alpha^4})
  \]
  Setting $C^{(2)}_{k,k'}:=\sum_{i\in I}\frac{|\braket{\phi_k, \chi_i}|^2|\braket{\phi_{k'}, \chi_i}|^2}{\sum_{k''=1}^{n\Bl} |\braket{\phi_{k''},\chi_i}|^2}>0$, we have $C^{(2)}_{k,k'}=C^{(2)}_{k',k}$, $\sum_{k}C^{(2)}_{k,k'}=1$ and
  \[
  \sigma_+= \frac{\alpha^2}{2}\sum_{k\neq k'} (\mu_k-\mu_k')(f_{k'} - f_k)C^{(2)}_{k,k'}+O({ \alpha^4}).
  \]
  where the leading term is zero if and only if the summand vanishes for all pairs $k\neq k'$. Because $C^{(2)}_{k,k'}>0$ and because the function $(0,1) \ni f \mapsto \log((1-f)/f)$ defining~$\mu$ is strictly decreasing, this is in turn equivalent to $f_k = f_{k'}$ for each pair~$(k,k')$.
\end{remark}

\section{Proof of Proposition~\ref{prop:scat}}
\label{sec:scat-proofs}

The following lemma is straightforward, but we give a proof for lack of convenient reference. It can alternatively be shown to be a consequence of the Riemann--Lebesgue lemma.

\begin{lemma}
\label{lem:L1}
  {Let $\mathbf{x} = (x_n)_{n= 0}^\infty$ and~$\mathbf{y} = (y_n)_{n = 0}^\infty$ be two square-summable sequences.} Then,
  \[
    \lim_{t \to \infty}  \sum_{n=0}^{t} |x_n y_{t-n}| = 0.
  \]
\end{lemma}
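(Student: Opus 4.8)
The plan is to reduce the claim to a statement about the convolution of two $\ell^2$ sequences, which is a standard consequence of the Cauchy--Schwarz inequality together with the fact that tails of $\ell^2$ sequences vanish. First I would fix $\varepsilon > 0$ and choose $N$ large enough that both $\sum_{n > N} |x_n|^2 < \varepsilon$ and $\sum_{n > N} |y_n|^2 < \varepsilon$, which is possible since $\mathbf{x}$ and $\mathbf{y}$ are square-summable. Denote by $B^2 := \sum_{n=0}^\infty |x_n|^2$ and $D^2 := \sum_{n=0}^\infty |y_n|^2$ the (finite) squared norms.

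Next, for $t > 2N$ I would split the sum $\sum_{n=0}^{t} |x_n y_{t-n}|$ into the range $0 \leq n \leq N$, where $t - n > N$, and the range $N < n \leq t$, where one may bound each $|x_n|$ against the tail. On the first range, by Cauchy--Schwarz,
\[
  \sum_{n=0}^{N} |x_n y_{t-n}| \leq \Big(\sum_{n=0}^N |x_n|^2\Big)^{1/2} \Big(\sum_{n=0}^N |y_{t-n}|^2\Big)^{1/2} \leq B \Big(\sum_{m > N} |y_m|^2\Big)^{1/2} \leq B \sqrt{\varepsilon},
\]
since $t - n$ ranges over integers strictly greater than $N$ as $n$ runs from $0$ to $N$. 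On the second range, again by Cauchy--Schwarz,
\[
  \sum_{n=N+1}^{t} |x_n y_{t-n}| \leq \Big(\sum_{n > N} |x_n|^2\Big)^{1/2} \Big(\sum_{n=N+1}^{t} |y_{t-n}|^2\Big)^{1/2} \leq \sqrt{\varepsilon}\, D.
\]
Adding the two bounds gives $\sum_{n=0}^{t} |x_n y_{t-n}| \leq (B + D)\sqrt{\varepsilon}$ for all $t > 2N$, and since $\varepsilon > 0$ was arbitrary and $B, D$ are fixed, the limit is zero.

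There is no real obstacle here; the only point requiring a little care is the bookkeeping on the index ranges, namely that when $0 \leq n \leq N$ the complementary index $t - n$ indeed exceeds $N$ (which needs $t > 2N$), so that the values $y_{t-n}$ all lie in the tail, and symmetrically for the other piece. The alternative route alluded to in the statement would be to recognize $\sum_n |x_n| |y_{t-n}|$ as the value at $t$ of the convolution of $(|x_n|)_n \in \ell^2$ with $(|y_n|)_n \in \ell^2$, hence a sequence in $\ell^\infty$; to get that it tends to $0$ one can approximate $\mathbf{x}$ and $\mathbf{y}$ by finitely supported sequences in $\ell^2$, for which the convolution is eventually $0$, and control the error in $\ell^\infty$ by $\|\cdot\|_{\ell^2}$-norms via Cauchy--Schwarz\,---\,which is, in essence, the same argument.
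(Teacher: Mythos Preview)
Your proof is correct and follows essentially the same approach as the paper's: split the convolution sum into two pieces and apply Cauchy--Schwarz so that each piece has a factor given by a tail of an $\ell^2$ sequence. The only cosmetic difference is that the paper splits at the midpoint $n = t/2$ rather than at a fixed cutoff~$N$, which avoids the explicit $\varepsilon$--$N$ bookkeeping but is otherwise the same argument.
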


\begin{proof}
  We consider $t$ even for notational simplicity. In this case,
  \begin{align*}
     \sum_{n=0}^t |x_n y_{t-n}| &\leq \sum_{d=0}^{t/2} |x_{t/2 + d} y_{t/2 - d}| + \sum_{d=1}^{t/2} |x_{t/2 - d} y_{t/2 + d}| \\
      &\leq \bigg( \sum_{d=0}^{t/2} |x_{t/2 + d}|^2\bigg)^{\frac 12} \bigg( \sum_{d=0}^{t/2} |y_{t/2 - d}|^2 \bigg)^{\frac 12}
        + \bigg( \sum_{d=1}^{t/2-1} |x_{t/2 - d}|^2\bigg)^{\frac 12} \bigg( \sum_{d=1}^{t/2} |y_{t/2 + d}|^2 \bigg)^{\frac 12} \\
      &\leq \bigg( \sum_{m = t/2}^\infty |x_{m}|^2\bigg)^{\frac 12} \|\mathbf{y}\|_{\ell^2}
        + \|\mathbf{x}\|_{\ell^2} \bigg( \sum_{m = (t/2) + 1}^\infty |y_{m}|^2 \bigg)^{\frac 12}.
  \end{align*}
	Hence, the result follows from square summability.
\end{proof}

\begin{proof}[Proof of Proposition~\ref{prop:scat}]
  The selfadjoint term being subtracted on the left-hand side of~\eqref{eq:pre-scat-t} obviously converges strongly to~$\sum_{n \geq 0} \delta_n\delta_n^* \otimes \one \oplus 0$ as~$t \to \infty$. The only explicit $t$-dependence in summands on the right-hand side of~\eqref{eq:pre-scat-t} is in the upper-right block, but the adjoint of this contribution vanishes strongly as~$t\to\infty$. To see this, combine Lemma~\ref{lem:L1} with the estimate
  \begin{align*}
    \bigg\|\sum_{m = 0}^{t-1}(\delta_{-t+m}^* \otimes   (M^*)^mZ\BlSa^* U^{m-t}) v \bigg\|
      &\leq \sum_{n = 1}^{t} \|M^{t-n}\| \| ({ \delta_{-n}^*} \otimes \one)v\|
  \end{align*}
  keeping in mind that the facts that~$v \in \ell^2(\zz) \otimes \Hb$ and that Assumption~(Sp) holds imply respectively that $\sum_{m \geq 0}\|M^m\|^2 < \infty$ and $\sum_{n \geq 0} \|(\delta_{-n}^* \otimes \one)v\|^2 < \infty$.

	Thus, in order to prove the proposition, it is sufficient to show the {strong} convergences
  \begin{align*}
    \slim\limits_{t \to \infty} \sum_{m = 0}^{t-1}\sum_{l = 1}^{t-m}\mathsf{ULB}^{-}_{m,l} &= \sum_{m \geq 0}\sum_{l \geq 1} \mathsf{ULB}^{-}_{m,l},
      & \slim\limits_{t \to \infty} \sum_{m = 0}^{t-1} \mathsf{LLB}^{-}_{m} &= \sum_{m \geq 0}\mathsf{LLB}^{-}_{m},
  \end{align*}
  where $\mathsf{ULB}^{-}_{m,l}$ and $\mathsf{LLB}^{-}_{m}$ are respectively the summands in the upper-left and lower left-block on the right-hand side of~\eqref{eq:pre-scat-t}.

  For the upper-left block, we will make use of the shorthand
  \[
    \mathbf{T}_t := \{(m,l) : 0\leq m \leq t-1; 1 \leq l \leq t-m \}.
  \]
  We want to show that the sequence of partial sums is Cauchy for the {strong} topology. {To this end, consider $v \in \ell^2(\zz)\otimes\Hb$ and natural numbers~$0 < t < u$ and note that
	\begin{align*}
		\bigg\|\sum_{(m,l) \in \mathbf{T}_{u}} \mathsf{ULB}^{-}_{m,l}v  - \sum_{(m,l) \in \mathbf{T}_{t}} \mathsf{ULB}^{-}_{m,l}v  \bigg\|^2
		&\leq
		\sum_{j} \sum_{(m,l)  \notin \mathbf{T}_{t}}  \|Y_m \|^2 \ |a_{j,-m-l}|^2
	\end{align*}
	where~$j$ ranges over the finite index set for the orthonormal basis~$\{\phi_j\}_j$ of~$\Hb$ and $(a_{j,l'})_{j,l'}$ are the  coefficients of~$v$ in the corresponding basis of $\ell^2(\zz) \otimes \Hb$. If $(m,l) \notin \mathbf{T}_{t}$, then $n := m + l \geq t$.
	Hence, the square summability of $a_{j,l'}$s and the $Y_m$s implies that
	\begin{align*}
		 \sum_{(m,l)  \notin \mathbf{T}_{t}} \|Y_m \|^2 \ |a_{j,-m-l}|^2
		&\leq  \sum_{n = t}^\infty   |a_{j,-n}|^2 \sum_{m=0}^\infty \|Y_m \|^2
	\end{align*}
	converges to 0 as $t \to \infty$ for each of the (finitely many) indices~$j$.}

  For the lower left-block, note that, for $0 < t < u$,
  \begin{align*}
    \bigg\| \sum_{m = 0}^{u-1} \mathsf{LLB}^{-}_{m}  - \sum_{m = 0}^{t-1} \mathsf{LLB}^{-}_{m} \bigg\| \leq  \sum_{m = t}^{\infty} \|\mathsf{LLB}^{-}_{m}\| ,
  \end{align*}
  with
  \[
    \|\mathsf{LLB}^{-}_{m}\| = \|\delta_{-m-1}^* \otimes  M^m Z\SaBl  U^{-m-1}\| \leq \|M^m\|.
  \]
  Again because the sequence~$(\|Y_m\|)_{m \geq 1}$ is summable, the sequence of partial sums is Cauchy in the uniform operator topology.
\end{proof}

\section{Proof of Theorem~\ref{thm:lim-EP}}
\label{sec:ep-proof}

We will make use of the following technical lemma.

\begin{lemma}
\label{eq:lem-play-ln}
  If $\epsilon \leq \T \leq (1-\epsilon)\one$ for some~$\epsilon > 0$ and $\Omega$ is a unitary operator such that~$\Omega - \one$ is trace class, then
  \[
    \tr[\Omega\T\Omega^*(\log(\Omega\T\Omega^*)-\log\T)] = \tr[(\T - \Omega\T\Omega^*) \log \T] < \infty.
  \]
\end{lemma}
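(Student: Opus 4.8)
The plan is to reduce the identity to the elementary fact that $\tr[\Omega W \Omega^* - W] = 0$ for \emph{any} bounded operator $W$ and any unitary $\Omega$ with $\Omega - \one$ trace class, and then apply this with $W := \T \log \T$. The main thing to watch is that none of $\log \T$, $\T\log\T$ or $\Omega\T\Omega^*\log(\Omega\T\Omega^*)$ is itself trace class, so the individual ``traces'' of these operators are meaningless; one must keep the trace-class \emph{differences} intact throughout and never split them.

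First I would record the preliminary facts. Since $\epsilon\one \leq \T \leq (1-\epsilon)\one$, the spectrum of~$\T$ lies in $[\epsilon,1-\epsilon]\subset(0,\infty)$, so $\log\T$ is a well-defined bounded selfadjoint operator; likewise $\Omega\T\Omega^*$ has spectrum in $[\epsilon,1-\epsilon]$, and because conjugation by a unitary intertwines the bounded Borel functional calculus we have $\log(\Omega\T\Omega^*) = \Omega(\log\T)\Omega^*$ and $(\Omega\T\Omega^*)\log(\Omega\T\Omega^*) = \Omega(\T\log\T)\Omega^*$. Next I would use that, for any bounded operator~$X$,
\[
  \Omega X \Omega^* - X = \Omega X(\Omega^* - \one) + (\Omega - \one)X,
\]
in which each summand is trace class (a trace-class operator times a bounded one). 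Taking $X$ equal to $\T$, to $\log\T$, and to $\T\log\T$ shows that $\T - \Omega\T\Omega^*$, $\log(\Omega\T\Omega^*) - \log\T$, and $\Omega(\T\log\T)\Omega^* - \T\log\T$ are all trace class; hence both sides of the claimed identity are traces of trace-class operators, which settles the finiteness assertion.

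The core computation is then purely algebraic. Writing $Y := \Omega\T\Omega^*$, the operator $Y(\log Y - \log\T) - (\T - Y)\log\T$ is a sum of two trace-class operators and simplifies to $Y\log Y - \T\log\T$ (the $Y\log\T$ terms cancel), so
\[
  \tr\big[Y(\log Y - \log\T)\big] - \tr\big[(\T - Y)\log\T\big] = \tr\big[Y\log Y - \T\log\T\big] = \tr\big[\Omega(\T\log\T)\Omega^* - \T\log\T\big].
\]
It therefore suffices to prove $\tr[\Omega W\Omega^* - W] = 0$ with $W := \T\log\T$. Using the decomposition above with $X = W$ and cyclicity of the trace (legitimate because $\Omega^* - \one$ and $\Omega - \one$ are trace class), $\tr[\Omega W(\Omega^* - \one)] = \tr[(\Omega^* - \one)\Omega W] = \tr[(\one - \Omega)W] = -\tr[(\Omega - \one)W]$, so the two contributions cancel and the desired identity follows.

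I do not expect any serious obstacle here: the argument is bookkeeping with trace ideals. The only point requiring care is the one flagged above, namely that the bounded operators $\log\T$, $\T\log\T$, $Y\log Y$ fail to be trace class on the infinite-dimensional space $\ell^2(\zz)\otimes\Hb$, so every manipulation\,---\,every use of cyclicity, every rearrangement inside a trace\,---\,must be performed on a combination that is genuinely trace class; once one is disciplined about this, the proof is immediate.
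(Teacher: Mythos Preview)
Your argument is correct and a bit more streamlined than the paper's. The paper writes $\Omega = \one + \Theta$ and expands both $\Omega T\Omega^*(\log(\Omega T\Omega^*)-\log T)$ and $(T-\Omega T\Omega^*)\log T$ into six trace-class terms each, then takes traces and uses cyclicity together with $[T,\log T]=0$ to show that both reduce to the common expression $\tr[(\Theta^*\Theta T - \Theta T\Theta^*)\log T]$. You instead observe directly that the operator difference of the two sides collapses to $\Omega(T\log T)\Omega^* - T\log T$ and then prove the clean general fact $\tr[\Omega W\Omega^* - W]=0$ for bounded~$W$ and unitary~$\Omega$ with $\Omega-\one$ trace class. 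Your route isolates the conceptual point (unitary invariance of the trace survives as an identity for the trace-class difference) and avoids the six-term bookkeeping; the paper's route is more explicit but slightly longer. Both handle the trace-class issues with equal care.
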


\begin{proof}
  Let $\Theta$ be the trace-class operator such that~$\Omega = \one + \Theta$. Then,
  \begin{align*}
    \Omega\T\Omega^*(\log(\Omega\T\Omega^*)-\log\T)
      &= (\one + \Theta)\T\log\T(\one + \Theta^*) - (\one+\Theta)\T(\one+\Theta^*) \log \T \\
      &= \Theta\T\log\T + \T\log\T\Theta^* + \Theta\T\log\T\Theta^* \\
        &\qquad\qquad {} - \Theta\T\log\T -\T\Theta^*\log\T - \Theta\T\Theta^*\log\T.
  \end{align*}
  On the other hand,
  \begin{align*}
    (T - \Omega\T\Omega^*) \log \T
      &= (\one+\Theta^*)(\one+\Theta)\T\log\T - {(\one+\Theta)\T(\one+\Theta^*)\log\T} \\
      &= \Theta^* \T\log\T + \Theta\T\log\T + \Theta^*\Theta\T\log\T \\
        &\qquad\qquad {} - \Theta\T\log\T - \T\Theta^*\log\T - \Theta\T\Theta^*\log\T.
  \end{align*}
  All terms are trace class in each right-hand side since~$T$ and~$\log T$ are bounded. Hence, using linearity and cyclicity of the trace and the fact that~$[T,\log T] = 0$, we get
  \begin{align*}
    \tr[\Omega\T\Omega^*(\log(\Omega\T\Omega^*)-\log\T)]
      &= \tr[(\Theta^*\Theta\T - \Theta\T\Theta^*)\log\T] \\
      &= \tr[(T - \Omega\T\Omega^*)\log \T].
  \qedhere
  \end{align*}
\end{proof}

Let us recall that we are looking at the relative entropy between the quasifree states associated to the symbols~$\T\tot$ and $\T\tot(t) = \Omega(t) \T\tot\Omega^*(t)$\,---\,we have dropped some indices for readability\,---\,assuming that~$\T\tot$ has the block diagonal form
\[
  \T\tot =
  \begin{pmatrix}
    \T\En & 0 \\
    0 & \T\Sa
  \end{pmatrix}.
\]
We also decompose the unitary
\[
  \Omega(t) =
  \begin{pmatrix}
    \Omega\En(t) & \Omega\EnSa(t) \\
    \Omega\SaEn(t) & \Omega\Sa(t)
  \end{pmatrix}.
\]
{
We observe also that \eqref{eq:pre-scat-t} yields for any $t$,
\[
 \fU^t(S\otimes U \oplus \one)^{-t}= \bigg(\sum_{m \in \zz} P_m \otimes U^m \oplus \one \bigg)^* \fU_\one^t(S\otimes \one  \oplus \one)^{-t}\bigg(\sum_{m \in \zz} P_m \otimes U^m \oplus \one \bigg),
 \]
 where $\fU_\one$ is obtained from $ \fU$ by setting $U=\one$. Since the relative entropies in the definition of $\sigma(t)$ are invariant under simultaneous unitary transformation of both their arguments, we can consider $\Omega(t)$ for $U=\one$ above and consider that $ \T\En$ absorbs $U$ as described in \eqref{eq:def-Xi}.

}
It easy to see from {the results of Subsection~\ref{sec:gen-i}}
that $\Omega\EnSa(t)$, $\Omega\SaEn(t)$ and~$\Omega\Sa(t)$ have their rank bounded by~$\dim\Hs$, uniformly in~$t \geq 0$.

Let us introduce
\begin{equation}
\label{eq:fYt}
  \fY_t := \sum_{l=0}^{t-1} \sum_{m=1}^{t-l} \delta_{-m}\delta_{-m-l}^* \otimes Y_l.
\end{equation}
Then, $\operatorname{rank} \fY_t \leq t \dim \Hs$ and Proposition~\ref{prop:scat} gives $\Omega\En(t) -\one\En = -P_{[-t,-1]} \otimes \one + \fY_t$. Hence, Lemma~\ref{eq:lem-play-ln} applies and
\begin{equation}
\label{eq:ent-prod-after-trick}
  \sigma(t) = t^{-1}\tr[(\T\tot - \Omega(t)\T\tot\Omega^*(t)) \log \T\tot] + [\T\tot \mapsto \one - \T\tot],
\end{equation}
where ``${}+ [\T\tot \mapsto \one - \T\tot]$'' means to we add the same term with $\one - \T\tot$ instead of~$\T\tot$. We will show how to deal with the first of the two traces, the other one being similar. The term~$\log\T\tot$ being bounded, we consider the following representation of its multiplier
\begin{align*}
  &{ \Omega(t)\T\tot\Omega^*(t) - \T\tot} \\
  & \quad =
  \begin{pmatrix}
    \Omega\En(t)\T\En\Omega^*\En(t) + \Omega\EnSa\T\Sa\Omega^*\EnSa(t) - \T\En
      & \Omega\En(t)\T\En\Omega^*\SaEn(t) + \Omega\EnSa(t)\T\Sa\Omega^*\Sa(t) \\
    \Omega\SaEn(t)\T\En\Omega^*\En(t) + \Omega\Sa(t)\T\Sa\Omega^*\EnSa(t)
      & \Omega\SaEn(t) \T\En \Omega^*\SaEn(t) + \Omega\Sa(t)\T\Sa\Omega^*\Sa(t) - \T\Sa
  \end{pmatrix}.
\end{align*}
Note that the rank of the lower-right block is bounded by~$\dim\Hs$ and hence cannot contribute to the limit of~\eqref{eq:ent-prod-after-trick}. The same is true for each term in which~$\T\Sa$ appears. Hence, provided that the limit exists, we must have
\begin{equation}
\label{eq:ent-prod-reduced}
  \sigma^+ =  \lim_{t\to\infty}  t^{-1}\tr[(\T\En - \Omega(t)\T\En\Omega^*(t)) \log \T\En] + [\T\En \mapsto \one - \T\En].
\end{equation}
Proposition~\ref{prop:scat} yields
\begin{align}
\Omega\En(t)  \T\En \Omega\En^*(t)-\T\En &=(P^\perp_{[-t,-1]}\otimes\one) \T\En (P^\perp_{[-t,-1]}\otimes\one)-\T\En  \nonumber \\
  &  \quad +\fY_t\T\En(\one\En-P_{[-t,-1]}\otimes\one) +(\one\En-P_{[-t,-1]}\otimes\one)\T\En\fY^*_t+\fY_t\T\En\fY_t^* \nonumber,
\end{align}
where the operator on the second line has finite rank since $\fY_t$ does.
The first line of the right hand side above writes
\begin{multline}
(P^\perp_{[-t,-1]}\otimes\one) \T\En (P^\perp_{[-t,-1]}\otimes\one)-\T\En
   =(P_{[-t,-1]}\otimes\one) \T\En (P_{[-t,-1]}\otimes\one)\\
  -  \T\En (P_{[-t,-1]}\otimes\one) - (P_{[-t,-1]}\otimes\one)\T\En ,
\end{multline}
where $P_{[-t,-1]}$ has rank $t$, so that altogether, each term in this composition of $ \Omega(t)\T\En { \Omega^*(t)} - \T\En$ has finite rank of order $t$.

Let us now spell out what is left of the (first) trace in~\eqref{eq:ent-prod-reduced} dropping the tensored identities for readability:
\begin{align*}
  &\tr[\T\En P_{[-t,-1]} \log(\T\En)] + \tr[P_{[-t,-1]} \T\En P_{[-t,-1]}^\perp \log(\T\En)]
    \\ & \qquad - \tr[\fY_t \T\En P_{[-t,-1]}^\perp \log(\T\En) + \textnormal{h.c.}] - \tr[\fY_t \T\En\fY_t^* \log(\T\En)].
\end{align*}
We have used yet again cyclicity of the trace, as well as the identity
\[
  \fY_t = P_{[-t,-1]} \fY_t P_{[-t,-1]}
\]
following immediately from the definition.

{
  By invariance under translations and selfadjointness, the matrix-valued sequences, $(G^{i}_{l})_{l\in \zz}$, $i=0,1,2$, defined by
	\begin{align*}
		\braket{\phi', G^0_l \phi} &= \braket{\delta_m \otimes \phi', \T\En \log\T\En (\delta_{m+l} \otimes \phi)}, \\
		\braket{\phi', G^1_l \phi} &= \braket{\delta_m \otimes \phi', \T\En (\delta_{m+l} \otimes \phi)}, \\
		\braket{\phi', G^2_l \phi} &= \braket{\delta_m \otimes \phi', \log\T\En (\delta_{m+l} \otimes \phi)},
	\end{align*}
	do not depend on the choice of~$m$ and satisfy $(G^i_{l})^* = G^i_{-l}$. Because $\T\En\log\T\En$ is a bounded operator,
	\[
		\|\T\En\log\T\En(\delta_0\otimes\phi)\|^2 \leq \|\T\En\log\T\En\|^2 \|\phi\|^2
	\]
	is finite for all~$\phi \in \H\Bl$. Noting that
	\begin{align*}
		\sum_{j=1}^d \|\T\En\log\T\En(\delta_0\otimes \phi_j) \|^2
			&= \limsup_{n \to \infty} \sum_{j=1}^d \|(P_{[-n,n]} \otimes \one)\T\En\log\T\En(\delta_0\otimes \phi_j) \|^2 \\
			&= \limsup_{n \to \infty}  \sum_{l=-n}^n \sum_{j,j'=1}^d \langle  \T\En\log\T\En(\delta_0 \otimes \phi_j), \\
			&\qquad\qquad\qquad\qquad\qquad (\delta_{-l} \delta_{-l}^* \otimes \phi_{j'}\phi_{j'}^*) \T\En\log\T\En(\delta_0 \otimes \phi_j)\rangle \\
			&= \limsup_{n \to \infty} \sum_{l=-n}^n \tr[(G^0_l)^*G^0_l]
	\end{align*}
  for any orthonormal basis $(\phi_j)_{j=1}^d$, it follows that
  \begin{align*}
    \|\hspace{-0.2ex}|G^0|\hspace{-0.2ex}\|^2 &:=  \sum_{l\in \zz} \tr[(G^0_l)^*G^0_l] \leq d \|\T\En\log\T\En\|^2 < \infty.
  \end{align*}
	Similarly, $\|\hspace{-0.2ex}|G^1\|\hspace{-0.2ex}|, \|\hspace{-0.2ex}| G^2\|\hspace{-0.2ex}| < \infty$.
  It is then easy to show using the H\"older inequality for trace norms and the decay of the sequence $(\|Y_l\|)_{l = 1}^\infty$ that the following three bounds hold
  \begin{gather}
    \sum_{n \in \zz} |\tr[G^i_n G^j_{-n}]| \leq \|\hspace{-0.2ex}|G^i|\hspace{-0.2ex}\| \|\hspace{-0.2ex}|G^j|\hspace{-0.2ex}\| < \infty, \label{eq:absconv1} \\
    \sum_{l = 0}^\infty \sum_{n \in \zz}  |\tr[Y_l G^i_{l+n} G^j_{-n}]| \leq \sum_{l = 0}^\infty \|Y_l\| \|\hspace{-0.2ex}|G^i|\hspace{-0.2ex}\| \|\hspace{-0.2ex}|G^j|\hspace{-0.2ex}\| < \infty, \label{eq:absconv2}\\
    \sum_{l, l' = 0}^\infty \sum_{n \in \zz}^\infty |\tr[Y_l G^i_{n-l'+l} Y_{l'} G^j_{-n}]| \leq  \sum_{l,l'=0}^\infty \|Y_l\| \|Y_{l'}\| \|\hspace{-0.2ex}|G^i|\hspace{-0.2ex}\| \|\hspace{-0.2ex}|G^j|\hspace{-0.2ex}\| < \infty. \label{eq:absconv3}
  \end{gather}
	The Fourier transforms are defined accordingly,
  \[
    \hat{G}^i(\theta) := \sum_{l \in \zz} \Exp{\ii l \theta} G^i_l,
  \]
  and satisfy
  \begin{equation}
  \label{eq:prod-prop-G}
    \hat{G}^0 = \hat{G}^1 \hat{G}^2.
  \end{equation}
}

\begin{lemma}
\label{lem:conv-warm-up}
  Under the  hypotheses of Theorem \ref{thm:lim-EP},
  \[
     t^{-1} \tr[P_{[-t,-1]} \T\En \log(\T\En)] = \int_0^{2\pi} \tr[\hat{\Xi}(\theta) \log\hat{\Xi}(\theta) ] \, \frac{\d\theta}{2\pi}
  \]
  for all~$t > 0$, with~$\hat{\Xi}$ the Fourier of transform of~$\Xi$ according to the conventions of Section~\ref{sec:Fourier}.
\end{lemma}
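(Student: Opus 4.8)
The plan is to notice that the left-hand side does not actually depend on~$t$: it equals $t$ times the $0$-th diagonal block of the translation-invariant operator $\T\En\log\T\En$, and that block integrates to the right-hand side.

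First I would invoke the unitary reduction performed just before the lemma, so that $\T\En$ may be taken equal to the translation-invariant operator~$\Xi$ of the form~\eqref{eq:form-tb-Fourier}; then $\mathcal F$ turns $\T\En$ into multiplication by~$\hat\Xi(\theta)$. Assumption~(IC+) gives $\epsilon\one \le \hat\Xi(\theta)\le(1-\epsilon)\one$ for almost every~$\theta$, so $\log\hat\Xi(\theta)$ is a bounded measurable multiplier; since the bounded Borel functional calculus of a multiplication operator is again a multiplication operator, $\log\T\En$ is translation invariant with Fourier transform $\log\hat\Xi$, i.e.\ $\hat{G}^2 = \log\hat\Xi$, and likewise $\hat{G}^1 = \hat\Xi$. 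In particular $\T\En\log\T\En$ is a bounded translation-invariant operator whose Fourier transform is, by~\eqref{eq:prod-prop-G}, $\hat{G}^0(\theta) = \hat\Xi(\theta)\log\hat\Xi(\theta)$.

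Next I would compute the trace directly. Writing $P_{[-t,-1]} = \sum_{j=-t}^{-1}\delta_j\delta_j^*$, the operator $(P_{[-t,-1]}\otimes\one)\,\T\En\log\T\En$ has finite rank, hence is trace class, and the partial-trace identity for the rank-one projectors~$\delta_j\delta_j^*$ gives
\[
  \tr[P_{[-t,-1]}\,\T\En\log\T\En] = \sum_{j=-t}^{-1}\tr_{\Hb}\big[(\delta_j^*\otimes\one)\,\T\En\log\T\En\,(\delta_j\otimes\one)\big] = \sum_{j=-t}^{-1}\tr_{\Hb}[G^0_0] = t\,\tr_{\Hb}[G^0_0],
\]
where the second equality is translation invariance: each diagonal block of $\T\En\log\T\En$ equals its $0$-th Fourier coefficient~$G^0_0$. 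Finally, inverting the Fourier transform yields $G^0_0 = \tfrac{1}{2\pi}\int_0^{2\pi}\hat{G}^0(\theta)\,\d\theta = \tfrac{1}{2\pi}\int_0^{2\pi}\hat\Xi(\theta)\log\hat\Xi(\theta)\,\d\theta$, so taking $\tr_{\Hb}$ and dividing by~$t$ gives the claimed identity for every $t>0$.

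There is no substantial obstacle in this lemma; the only points needing a word of care are that the Fourier representation intertwines with the bounded functional calculus\,---\,so that the Fourier coefficients of $\log\T\En$ are exactly those of the multiplier $\log\hat\Xi$\,---\,and the finite-rank argument ensuring the trace is well defined despite $\T\En\log\T\En$ itself not being trace class.
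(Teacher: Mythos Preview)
Your proof is correct and follows essentially the same route as the paper: both arguments reduce to showing that $\tr[P_{[-t,-1]}\T\En\log\T\En] = t\,\tr_{\Hb}[G^0_0]$ by translation invariance, and then identify $\tr_{\Hb}[G^0_0]$ with the integral $\int_0^{2\pi}\tr[\hat\Xi(\theta)\log\hat\Xi(\theta)]\,\tfrac{\d\theta}{2\pi}$ via the Fourier representation. Your version is slightly more explicit about why the bounded Borel functional calculus intertwines with~$\mathcal F$ (so that $\hat G^2=\log\hat\Xi$ and $\hat G^0=\hat\Xi\log\hat\Xi$), a point the paper records in~\eqref{eq:prod-prop-G} without further comment.
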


\begin{proof}
  {
  On one hand, we have
  \begin{align}
  \label{eq:G0-avg}
    \int_0^{2\pi} \tr[\hat{\Xi}(\theta)\log\hat{\Xi}(\theta)] \frac{\d\theta}{2\pi}
      &= \int_0^{2\pi} \tr[\hat{G}^0(\theta)] \frac{\d\theta}{2\pi} = \tr[G^0_0].
  \end{align}
  On the other hand, we have
  \begin{align*}
    \tr[P_{[-t,-1]} \T\En \log(\T\En)] &= \sum_{n=1}^t \sum_{l,m\in \zz} \delta_{-n}^* \delta_m \delta_{m+l}^* \delta_{-n} \tr[G^0_l] \\
    &= \sum_{n=1}^t \sum_{l\in \zz} \delta_{-n+l}^* \delta_{-n} \tr[G^0_l]
    = t \tr[G_0^0],
  \end{align*}
  hence the equality.
 }

\end{proof}

\begin{lemma}
  Under the ongoing hypotheses,
  \[
    \lim_{t\rightarrow +\infty} t^{-1} \tr[ P_{[-t,-1]} \T\En P_{[-t,-1]}^\perp \log(\T\En)] = 0.
  \]
\end{lemma}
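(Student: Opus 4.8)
The plan is to turn the trace into an absolutely convergent series over the Fourier blocks of $\T\En$ and $\log\T\En$, and then let $t\to\infty$ by dominated convergence. I would keep the reduction already in force, so that $\T\En$ commutes with $S\otimes\one$ and is translation invariant with blocks $G^1_l$, while $\log\T\En$ (a bounded operator, thanks to $\epsilon\le\T\le(1-\epsilon)\one$ in Assumption~(IC+)) is translation invariant with blocks $G^2_l$. Writing $P_{[-t,-1]}=\sum_{n=1}^{t}\delta_{-n}\delta_{-n}^{*}$ and $P_{[-t,-1]}^\perp=\sum_{p\notin\{-t,\dots,-1\}}\delta_p\delta_p^{*}$, I would expand $\tr[P_{[-t,-1]}\T\En P_{[-t,-1]}^\perp\log(\T\En)]$ in the orthonormal basis $(\delta_n\otimes\phi_j)_{n,j}$, inserting one resolution of the identity between $P_{[-t,-1]}^\perp$ and $\log(\T\En)$. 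Each resulting term is a product of a matrix element of $\T\En$ and one of $\log(\T\En)$, and summing over the $\H\Bl$-indices collapses it to $\tr_{\H\Bl}[G^1_{p-n}G^2_{n-p}]$. Regrouping the sum by $k:=p-n$ gives
\[
  \tr[P_{[-t,-1]}\T\En P_{[-t,-1]}^\perp\log(\T\En)]=\sum_{k\in\zz}N(t,k)\,\tr[G^1_kG^2_{-k}],
\]
where $N(t,k)$ is the number of $n\in\{-t,\dots,-1\}$ with $n+k\notin\{-t,\dots,-1\}$.

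Next I would carry out the elementary count: $N(t,0)=0$ and $N(t,k)=\min(|k|,t)$ for $k\neq 0$; in particular $0\le N(t,k)\le\min(|k|,t)\le t$. Dividing by $t$, the quantity of interest becomes $\sum_{k\in\zz}t^{-1}N(t,k)\,\tr[G^1_kG^2_{-k}]$, whose coefficients $t^{-1}N(t,k)$ are bounded by $1$ uniformly in $t\ge 1$ and, for each fixed $k$, tend to $0$ as $t\to\infty$ (indeed they equal $|k|/t$ once $t>|k|$).

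Finally, the summability bound~\eqref{eq:absconv1}, namely $\sum_{k\in\zz}|\tr[G^1_kG^2_{-k}]|<\infty$, furnishes a summable dominating sequence, so the dominated convergence theorem for series gives that $t^{-1}\tr[P_{[-t,-1]}\T\En P_{[-t,-1]}^\perp\log(\T\En)]\to 0$, as claimed. The only nonroutine ingredient is precisely this absolute summability of the block products, which is exactly what~\eqref{eq:absconv1} provides (itself resting on the boundedness of $\T\En\log\T\En$ and the Hilbert--Schmidt-type finiteness of the $G^i$); the rest is the combinatorial bookkeeping of $N(t,k)$ and a passage to the limit. A minor point to keep track of is that boundedness of $\log(\T\En)$ ensures every matrix element and trace above is finite, so that the rearrangement of the double sum into the block series over $k$ is legitimate.
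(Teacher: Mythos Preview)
Your proof is correct and essentially coincides with the paper's. The only cosmetic difference is that the paper splits $P_{[-t,-1]}^\perp=\one-P_{[-t,-1]}$ and first invokes Lemma~\ref{lem:conv-warm-up} for the $\one$-piece, then computes $\tr[P_{[-t,-1]}\T\En P_{[-t,-1]}\log(\T\En)]=\sum_{|l|<t}(t-|l|)\tr[G^1_lG^2_{-l}]$; subtracting the two gives precisely your $\sum_k\min(|k|,t)\tr[G^1_kG^2_{-k}]$, and the paper then bounds the normalized rest by $\sum_l\min\{1,|l|/t\}|\tr[G^1_lG^2_{-l}]|$ and concludes via~\eqref{eq:absconv1}, exactly as you do.
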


\begin{proof}
  In view of Lemma~\ref{lem:conv-warm-up} and the definition of~$P_{[-t,-1]}^\perp$, the claim will be proved if we can show that
  \[
     \lim_{t\rightarrow +\infty} t^{-1} \tr[P_{[-t,-1]} \T\En P_{[-t,-1]} \log(\T\En)] = \int_0^{2\pi} \tr[\hat{\Xi}(\theta) \log\hat{\Xi}(\theta) ] \, \frac{\d\theta}{2\pi}.
  \]
  {
    We have
    \begin{align*}
      &\tr[P_{[-t,-1]} \T\En P_{[-t,-1]} \log(\T\En)] \\
      &\qquad  = \sum_{n,n'=1}^t \sum_{l,l',m,m'\in \zz} \delta_{-n}^* \delta_m \delta_{m+l}^* \delta_{-n'} \delta_{-n'}^* \delta_{m'} \delta_{m'+l'}^* \delta_{-n} \tr[G^1_l G^2_{l'}] \\
       &\qquad  = \sum_{n,n'=1}^t \sum_{l,l'\in \zz} \delta_{-n+l}^* \delta_{-n'} \delta_{-n'+l'}^* \delta_{-n} \tr[G^1_l G^2_{l'}] \\
      &\qquad  = \sum_{l''=-t+1}^{t-1} \min\{t-l'',t+l''\} \tr[G^1_{l''} G^2_{-l''}].
    \end{align*}
    Here, $\min\{t-l'',t+l''\}$ is the number of pairs $(n,n')$ between~$1$ and~$t$ satisfying $n-n' = l''$.
    Thus, in view of~\eqref{eq:prod-prop-G} and~\eqref{eq:G0-avg}, the rest
    \begin{align*}
      r_t:= \left|t^{-1} \tr[P_{[-t,-1]} \T\En P_{[-t,-1]} \log(\T\En)]-\int_0^{2\pi} \tr[\hat{\Xi}(\theta) \log\hat{\Xi}(\theta) ] \, \frac{\d\theta}{2\pi}\right|
    \end{align*}
    satisfies
    \begin{align*}
    r_t&\leq \sum_{l\in \zz} \min\left\{1,\tfrac{|l|}{t} \right\} \left|\tr (G^1_{l} G^2_{-l})\right|.
    \end{align*}
    Given the absolute convergence expressed in equation \eqref{eq:absconv1} it is easily deduced that $r_t\rightarrow 0$ as $t \to \infty$.
  }
\end{proof}

\begin{lemma}
  Under the ongoing hypotheses,
  \[
    \lim_{t \to \infty} t^{-1} \tr[\fY_t \T\En P_{[-t,-1]}^\perp \log(\T\En)] = 0.
  \]
\end{lemma}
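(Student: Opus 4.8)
The plan is to prove the stronger statement that the trace in question is~$O(\sqrt t)$, which clearly suffices after division by~$t$. The mechanism behind this is that~$\fY_t$ only couples sites with indices in~$\{-t,\dots,-1\}$ while~$P_{[-t,-1]}^\perp$ annihilates precisely those sites, so the middle factor~$\T\En$ is forced into an off-diagonal corner; the off-diagonal decay of~$\T\En$ built into Assumption~(IC) then makes that corner bounded in Hilbert--Schmidt norm, uniformly in~$t$.

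Concretely, I would first note the identity~$\fY_t = \fY_t\,P_{[-t,-1]}$, which is immediate from~\eqref{eq:fYt} since every summand there has its right index in~$\{-t,\dots,-1\}$. Together with~$P_{[-t,-1]}P_{[-t,-1]}^\perp = 0$ this gives
\[
  \fY_t\,\T\En\,P_{[-t,-1]}^\perp \log\T\En = \fY_t\,B_t\,\log\T\En, \qquad B_t := P_{[-t,-1]}\,\T\En\,P_{[-t,-1]}^\perp.
\]
Writing~$\|\cdot\|_2$ for the Hilbert--Schmidt norm, I would then apply the Cauchy--Schwarz inequality for the Hilbert--Schmidt inner product and the bound~$\|XY\|_2 \leq \|X\|_2\,\|Y\|$ to obtain
\[
  t^{-1}\big|\tr[\fY_t\,\T\En\,P_{[-t,-1]}^\perp \log\T\En]\big| \leq t^{-1}\,\|\fY_t\|_2\,\|B_t\|_2\,\|\log\T\En\|,
\]
and it then remains to control the three factors.

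For~$\|\log\T\En\|$: under Assumption~(IC+) one has~$\epsilon\one \leq \T \leq (1-\epsilon)\one$, and since in this part of the argument~$\T\En$ has been replaced by the unitary conjugate of~$\T$ described in~\eqref{eq:def-Xi}, the same bounds hold for~$\T\En$, so~$\|\log\T\En\|$ is finite and independent of~$t$. For~$\|\fY_t\|_2$: the matrix units~$\delta_{-m}\delta_{-m-l}^*$ in~\eqref{eq:fYt} are pairwise orthogonal, whence~$\|\fY_t\|_2^2 = \sum_{l=0}^{t-1}(t-l)\,\|Y_l\|_2^2 \leq t\sum_{l\geq0}\|Y_l\|_2^2$, the last sum being finite because~$\|Y_l\| \leq \|M^{l-1}\|$ decays geometrically under Assumption~(Sp); hence~$\|\fY_t\|_2 = O(\sqrt t)$. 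For~$\|B_t\|_2$: writing~$\T\En = \sum_{n,m\in\zz}\delta_n\delta_m^*\otimes\Xi_{m-n}$, for each~$k\neq0$ there are at most~$|k|$ pairs~$(n,m)$ with~$n\in\{-t,\dots,-1\}$, $m\notin\{-t,\dots,-1\}$ and~$m-n=k$, so~$\|B_t\|_2^2 \leq \sum_{k\neq0}|k|\,\|\Xi_k\|_2^2 \leq (\dim\Hb)\sum_{k}|k|\,\|\Xi_k\|^2$, which is finite since~$\|\Xi_k\| \leq \|\T\En\| \leq 1$ and~$\sum_k|k|\,\|\Xi_k\| = \sum_k|k|\,\|\T_{0,k}\| < \infty$ by Assumption~(IC); in particular~$\|B_t\|_2 = O(1)$. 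Combining the three bounds gives~$t^{-1}\big|\tr[\fY_t\,\T\En\,P_{[-t,-1]}^\perp \log\T\En]\big| = O(t^{-1/2}) \to 0$.

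The step I expect to matter most is the Hilbert--Schmidt estimate~$\|\fY_t\|_2 = O(\sqrt t)$: although~$\fY_t$ has rank of order~$t$, the double-index structure of~\eqref{eq:fYt} combined with the geometric decay of~$(\|Y_l\|)_l$ keeps its Hilbert--Schmidt norm only of order~$\sqrt t$, and it is exactly this~$\sqrt t$ gain, against the uniformly bounded~$B_t$ and~$\log\T\En$, that lets the prefactor~$t^{-1}$ send the whole expression to zero.
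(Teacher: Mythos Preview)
Your argument is correct, but it follows a genuinely different route from the paper's. The paper does not use operator-norm estimates at all here: it computes $\tr[\fY_t \T\En \log\T\En]$ and $\tr[\fY_t \T\En P_{[-t,-1]} \log\T\En]$ separately as explicit sums over indices in terms of the matrix-valued Fourier coefficients~$G^0_l, G^1_l, G^2_l$ of $\T\En\log\T\En$, $\T\En$, $\log\T\En$, and then shows via the absolute-convergence bound~\eqref{eq:absconv2} that both, divided by~$t$, tend to the common limit $\sum_{l\geq 0}\sum_{n\in\zz}\tr[Y_l G^1_{n+l}G^2_{-n}]$; the lemma follows by subtraction. Your approach is more elementary and gives an explicit rate~$O(t^{-1/2})$, at the price of invoking the weighted~$\ell^1$ condition $\sum_k |k|\,\|\Xi_k\| < \infty$ from Assumption~(IC) to control~$\|B_t\|_2$; the paper's computation uses only the $\ell^2$-type quantities $\|\hspace{-0.2ex}|G^i|\hspace{-0.2ex}\|$, which are finite merely because the relevant operators are bounded under~(IC+). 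The paper's route also reuses the same index-counting machinery it needs anyway for the companion lemmas on $\tr[P_{[-t,-1]} \T\En P_{[-t,-1]}^\perp \log\T\En]$ and $\tr[\fY_t \T\En \fY_t^* \log\T\En]$.
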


\begin{proof}
  We have
  \begin{align*}
    \tr[\fY_t \T\En \log(\T\En)]&=\sum_{n\in\zz}\sum_{l=0}^{t-1}\sum_{m=1}^{t-l} \sum_{m',l'\in \zz}
    \delta_{n}^* \delta_{-m} \delta^*_{-m-l} \delta_{m'} \delta^*_{m'+l'} \delta_n \tr[Y_l G^0_{l'}] \\
    &=\sum_{l=0}^{t-1} (t-l) \tr[Y_l G^0_l],
  \end{align*}
  while (similarly)
  \begin{align*}
    &\tr[\fY_t \T\En P_{[-t, -1]} \log(\T\En)] \\
    &\qquad =\sum_{l=0}^{t-1}\sum_{m=1}^{t-l} \sum_{m'=-t}^{-1}
    \tr[Y_l G^1_{(m'+m)+l}G^2_{-(m'+m)}] \\
    &\qquad =\sum_{l=0}^{t-1} \sum_{n=1-t}^{t-l-1} \min\{t-l-n,t-l,t+n\} \tr[Y_l G^1_{n+l}G^2_{-n}].
  \end{align*}
  Here, $\min\{t-l-n,t-l,t+n\}$ is the cardinality of the set of pairs $(m,m')$ within the prescribed intervals such that $m'+m=n$.
  In view of~\eqref{eq:prod-prop-G} and~\eqref{eq:absconv2}, both $t^{-1}\tr[\fY_t \T\En \log(\T\En)]$ and $t^{-1}\tr[\fY_t \T\En P_{[-t, -1]}\log(\T\En)]$ converge to the (absolutely convergent) sum
  \[
    \sum_{l = 0}^\infty \sum_{n\in \zz} \tr[Y_l G^1_{n+l}G^2_{-n}].
  \]
  The lemma follows by taking the difference.
\end{proof}

\begin{lemma}
  Under the ongoing hypotheses,
  \[
    \lim_{t \to \infty} t^{-1} \tr[\fY_t \T\En \fY_t^* \log(\T\En)] = \int_0^{2\pi} \tr[\hat{\fY}(\theta) \hat{\Xi}(\theta) \hat{\fY}^* (\theta)\log(\hat{\Xi}(\theta))]\, \frac{\d\theta}{2\pi}.
  \]
\end{lemma}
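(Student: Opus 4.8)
The plan is to expand the trace explicitly in terms of blocks, extract the dominant combinatorial factor, and then recognise the resulting absolutely convergent series as the claimed Fourier integral.

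First I would substitute the definition~\eqref{eq:fYt} of~$\fY_t$, together with the block decompositions $\T\En = \sum_{a,b\in\zz}\delta_a\delta_b^*\otimes\Xi_{b-a}$ and $\log\T\En = \sum_{c,d\in\zz}\delta_c\delta_d^*\otimes G^2_{d-c}$ (recall that here, $U$ having been absorbed, $\T\En$ coincides with~$\Xi$, that $\Xi_l = G^1_l$, and that both $\T\En$ and $\log\T\En$ are translation invariant). Multiplying out the four factors and taking the trace over~$\ell^2(\zz)\otimes\Hb$, the sums over the~$\ell^2(\zz)$ indices collapse and one is left with
\[
  \tr[\fY_t \T\En \fY_t^* \log\T\En]
    = \sum_{l=0}^{t-1}\sum_{l'=0}^{t-1}\sum_{m=1}^{t-l}\sum_{m'=1}^{t-l'}
      \tr_{\Hb}\big[Y_l\,\Xi_{(m-m')+(l-l')}\,Y_{l'}^*\,G^2_{-(m-m')}\big].
\]

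Next, for fixed~$l,l'$ I would group the terms according to the value~$n := m-m'$, so that the summand depends on~$(l,l',n)$ only and the number of contributing pairs~$(m,m')$ is
\[
  \kappa_{l,l'}(n,t) := \#\{(m,m') : 1\le m\le t-l,\ 1\le m'\le t-l',\ m-m'=n\}.
\]
An elementary count shows that $0\le\kappa_{l,l'}(n,t)\le t$ for all~$t$, and that for~$t$ large enough $\kappa_{l,l'}(n,t)$ equals~$t$ minus a nonnegative integer depending on~$l,l',n$ but not on~$t$; hence $t^{-1}\kappa_{l,l'}(n,t)\to 1$ as $t\to\infty$ for each fixed~$(l,l',n)$. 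Therefore
\[
  t^{-1}\tr[\fY_t \T\En \fY_t^* \log\T\En]
    = \sum_{l,l'\ge0}\sum_{n\in\zz} t^{-1}\kappa_{l,l'}(n,t)\,
      \tr_{\Hb}\big[Y_l\,\Xi_{n+l-l'}\,Y_{l'}^*\,G^2_{-n}\big],
\]
and, since $t^{-1}\kappa_{l,l'}(n,t)\le 1$ while $\sum_{l,l'\ge0}\sum_{n\in\zz}\big|\tr_{\Hb}[Y_l\,\Xi_{n+l-l'}\,Y_{l'}^*\,G^2_{-n}]\big|<\infty$ by~\eqref{eq:absconv3} (applied with $Y_{l'}^*$ in place of $Y_{l'}$, which only uses $\|Y_{l'}^*\|=\|Y_{l'}\|$), dominated convergence over the countable index set yields
\[
  \lim_{t\to\infty} t^{-1}\tr[\fY_t \T\En \fY_t^* \log\T\En]
    = \sum_{l,l'\ge0}\sum_{n\in\zz}\tr_{\Hb}\big[Y_l\,\Xi_{n+l-l'}\,Y_{l'}^*\,G^2_{-n}\big].
\]

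Finally I would identify this series with the right-hand side of the lemma. Using $\hat\fY(\theta)=\sum_{l\ge0}\Exp{-\ii l\theta}Y_l$, $\hat\Xi(\theta)=\sum_{k\in\zz}\Exp{\ii k\theta}\Xi_k$, $\hat\fY^*(\theta)=\sum_{l'\ge0}\Exp{\ii l'\theta}Y_{l'}^*$, and $\log\hat\Xi(\theta)=\sum_{k'\in\zz}\Exp{\ii k'\theta}G^2_{k'}$\,---\,the last identity being $\hat G^2=\log\hat\Xi$, which follows from~\eqref{eq:prod-prop-G} together with $\hat G^0(\theta)=\hat\Xi(\theta)\log\hat\Xi(\theta)$ (as $G^0$ consists of the blocks of $\T\En\log\T\En=\Xi\log\Xi$) and $\hat G^1=\hat\Xi$\,---\,I would expand the product $\hat\fY(\theta)\hat\Xi(\theta)\hat\fY^*(\theta)\log\hat\Xi(\theta)$ into a Fourier series and integrate term by term (legitimate by the same absolute summability as in~\eqref{eq:absconv3}). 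Only the terms with $-l+k+l'+k'=0$ survive $\tfrac{1}{2\pi}\int_0^{2\pi}$, leaving $\sum_{l,l'\ge0}\sum_{k\in\zz}\tr_{\Hb}[Y_l\,\Xi_k\,Y_{l'}^*\,G^2_{l-l'-k}]$, which coincides with the series above after the substitution $k=n+l-l'$ (so that $l-l'-k=-n$).

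I do not anticipate a genuine obstacle: the only point requiring care is the bookkeeping of the combinatorial weight~$\kappa_{l,l'}(n,t)$\,---\,the uniform bound $\kappa_{l,l'}(n,t)\le t$ together with the convergence $t^{-1}\kappa_{l,l'}(n,t)\to1$\,---\,and checking that the dominating functions needed both for the $t\to\infty$ limit and for the termwise Fourier integration are precisely the summable quantities controlled by~\eqref{eq:absconv1}--\eqref{eq:absconv3} established above.
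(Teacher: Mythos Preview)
Your proposal is correct and follows essentially the same route as the paper: expand the trace into the quadruple sum over $(l,l',m,m')$, reindex by $n=m-m'$ with a combinatorial multiplicity, and pass to the limit by dominated convergence using the summability bound~\eqref{eq:absconv3}. The paper writes the multiplicity explicitly as $\min\{t-l-n,\,t-l,\,t-l',\,t-l'+n\}$ and phrases the limit as vanishing of a remainder with coefficients $|c_{l,l',n,t}|\le 1$, $c_{l,l',n,t}\to 0$, but this is exactly your $t^{-1}\kappa_{l,l'}(n,t)\to 1$ in complementary form.
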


\begin{proof}
  We have
  \begin{align*}
    \int_0^{2\pi} \tr[\hat{\fY}(\theta) \hat{\Xi}(\theta) \hat{\fY}^* (\theta)\log(\hat{\Xi}(\theta))]\, \frac{\d\theta}{2\pi}
    &=\int_0^{2\pi} \sum_{l, l'\geq 0} \sum_{m,m'\in \zz} \tr[Y_l G^1_{m}Y_{l'}^*G^2_{m'}] \Exp{\ii(m+m'-l+l')\theta} \frac{\d\theta}{2\pi} \\
    &=\sum_{l,l'\geq 0} \sum_{m\in \zz} \tr[Y_l G^1_{m} Y_{l'} G^2_{l-l'-m}].
  \end{align*}
	{
  On the other hand,
  \begin{align*}
    \tr[\fY_t \T\En \fY_t^* \log(\T\En)]
      &=  \sum_{l,l'=0}^{t-1} \sum_{m=1}^{t-l} \sum_{m'=1}^{t-l'}
       \tr[Y_l G^1_{m-m'+l-l'} Y_{l'}^* G^2_{m'-m}]\\
      &=\sum_{l,l'=0}^{t-1}\sum_{n = -t +l'+1}^{t-l-1} \min\{t-l-n,t-l,t-l',t-l'+n\}
        \\ & \qquad\qquad\qquad\qquad \tr[Y_l G^1_{n+l-l'}Y_{l'}^*G^2_{-n}]
  \end{align*}
  where we performed the change of variables $n=m-m'$ and $\min\{t-l-n,t-l,t-l',t-l'+n\}$ is the cardinality of the set of pairs $(m,m')$ within the prescribed intervals such that $m-m'=n$. Thus the rest in the statement of the lemma is
  \begin{align*}
    r_t \leq \sum_{l,l'=0}^{\infty}\sum_{ n \in \zz} c_{l,l',n,t}  \big|\tr[Y_l G^1_{n+l-l'}Y_{l'}^*G^2_{-n}]\big|
  \end{align*}
  where $1 \geq |c_{l,l',n,t}| \to 0$ as $t \to \infty$ for fixed~$l,l'$ and~$n$. Hence, absolute convergence in equation \eqref{eq:absconv3} yields that the rest~$r_t \to 0$ as $t \to \infty$, and hence the lemma.
  }
\end{proof}

\bibliographystyle{alpha-custom}
\bibliography{walkers-out-of-eq}

\end{document}